\tikzstyle{none}=[inner sep=0pt]
\tikzstyle{plain}=[inner sep=0pt]
\tikzstyle{black}=[circle, draw=black, fill=black, inner sep=0pt, minimum size=4pt]
\tikzstyle{black-faded}=[circle, draw=light-gray, fill=light-gray, inner sep=0pt, minimum size=4pt]
\tikzstyle{white}=[circle, draw=black, fill=white, inner sep=0pt, minimum size=4.5pt]
\tikzstyle{z spider}=[circle, draw=black, fill=white, inner sep=2pt, minimum size=4.5pt]
\tikzstyle{x spider}=[circle, draw=black, fill=light-gray, inner sep=2pt, minimum size=4pt]
\tikzstyle{white-faded}=[circle, draw=light-gray, fill=white, inner sep=0pt, minimum size=4.5pt]
\tikzstyle{delay}=[fill=black, regular polygon, regular polygon sides=3, rotate=-90, scale=.55]
\tikzstyle{delay-op}=[fill=black, regular polygon, regular polygon sides=3, rotate=90, scale=.55]
\tikzstyle{reg}=[draw, fill=white, rounded rectangle, rounded rectangle left arc=none, minimum height=1.2em, minimum width=1.4em, node font={\scriptsize}]
\tikzstyle{effect}=[draw, fill=white, rectangle, minimum height=1.2em, minimum width=1em]
\tikzstyle{coreg}=[draw, fill=white, rounded rectangle, rounded rectangle right arc=none, minimum height=1.2em, minimum width=1.4em, node font={\scriptsize}]
\tikzstyle{state}=[draw, fill=white, rectangle, minimum height=1.2em, minimum width=1em]
\tikzstyle{hyperedge}=[draw, fill=white, rectangle, rounded corners, minimum height=1.2em, minimum width=1em]
\tikzstyle{box}=[shape=rectangle, text height=1.5ex, text depth=0.25ex, yshift=0.2mm, fill=white, draw=black, minimum height=3mm, minimum width=5mm, font={\small}]
\tikzstyle{basic box}=[draw, fill=white, rectangle, minimum height=1.2em, minimum width=1em]
\tikzstyle{small box}=[draw, fill=white, rectangle, rounded corners, minimum height=1.2em, minimum width=1.4em, node font={\scriptsize}]
\tikzstyle{rcoreg}=[draw=red, fill=white, rounded rectangle, rounded rectangle right arc=none, minimum height=1.2em, minimum width=1.4em, node font={\scriptsize}]
\tikzstyle{regb}=[draw, fill=black, rounded rectangle, rounded rectangle left arc=none, minimum height=1.2em, minimum width=1.4em, node font={\scriptsize}]
\tikzstyle{regbw}=[draw, left color=black, right color=white, middle color=white, rounded rectangle, rounded rectangle left arc=none, minimum height=1.2em, minimum width=1.4em, node font={\scriptsize}]
\tikzstyle{regwb}=[draw, left color=white, right color=black, middle color=white, rounded rectangle, rounded rectangle left arc=none, minimum height=1.2em, minimum width=1.4em, node font={\scriptsize}]
\tikzstyle{coregb}=[draw, fill=black, rounded rectangle, rounded rectangle right arc=none, minimum height=1.2em, minimum width=1.4em, node font={\scriptsize}]
\tikzstyle{coregbw}=[draw, left color=black, right color=white, middle color=white, rounded rectangle, rounded rectangle right arc=none, minimum height=1.2em, minimum width=1.4em, node font={\scriptsize}]
\tikzstyle{coregwb}=[draw, left color=white, right color=black, middle color=white, rounded rectangle, rounded rectangle right arc=none, minimum height=1.2em, minimum width=1.4em, node font={\scriptsize}]
\tikzstyle{rn}=[circle, draw=red, fill=red, inner sep=0pt, minimum size=4pt]
\tikzstyle{wrn}=[circle, draw=red, fill=white, inner sep=0pt, minimum size=4pt]
\tikzstyle{place}=[circle, draw=black, fill=white, inner sep=0pt, minimum size=9pt]
\tikzstyle{act}=[circle, draw=black, fill=white, inner sep=0pt, minimum size=4.5pt]
\tikzstyle{coact}=[draw, fill=white, rounded rectangle, rounded rectangle right arc=none, minimum height=.7em, minimum width=.9em, node font={\scriptsize}]
\tikzstyle{basic rounded box}=[draw, fill=white, rectangle, rounded corners, minimum height=1.2em, minimum width=1.4em]
\tikzstyle{small rounded box}=[draw, fill=white, rectangle, rounded corners, minimum height=1.2em, minimum width=1.4em, node font={\scriptsize}]
\tikzstyle{red dot}=[fill=red, draw=black, shape=circle, scale=0.3]
\tikzstyle{blue dot}=[fill=blue, draw=black, shape=circle, scale=0.3]
\tikzstyle{green dot}=[fill=green, draw=black, shape=circle, scale=0.6]
\tikzstyle{medium box}=[fill=white, draw=black, shape=rectangle, minimum width=0.3cm, minimum height=0.5cm]
\tikzstyle{white square}=[fill=white, draw, inner sep=0.6mm, minimum height=1.5mm, minimum width=1.5mm, shape=rectangle, shape=rectangle]
\tikzstyle{vertex}=[inner sep=0mm, minimum size=1mm, shape=circle, draw=black, fill=black]
\tikzstyle{vertex set}=[inner sep=0mm, minimum size=1mm, shape=circle, draw=black, fill=white, font={\footnotesize\boldmath}]
\tikzstyle{s flat}=[fill=white, draw=black, shape=rectangle, minimum width=8mm, minimum height=5mm]
\tikzstyle{black dot}=[fill=black, draw=black, shape=circle, scale=0.3]
\tikzstyle{empty dot}=[fill=none, draw=black, shape=circle, scale=0.3]
\tikzstyle{l flat}=[fill=white, draw=black, shape=rectangle, minimum width=1.8cm, minimum height=0.3cm]
\tikzstyle{s rect}=[fill=white, draw=black, shape=rectangle, minimum width=0.1cm, minimum height=0.1cm]
\tikzstyle{s vert}=[fill=white, draw=black, shape=rectangle, minimum width=5mm, minimum height=8mm]
\tikzstyle{m vert}=[fill=white, draw=black, shape=rectangle, minimum width=5mm, minimum height=12mm]
\tikzstyle{m flat}=[fill=white, draw=black, shape=rectangle, minimum width=6mm, minimum height=5mm]
\tikzstyle{mm flat}=[fill=white, draw=black, shape=rectangle, minimum height=5mm, minimum width=10mm]
\tikzstyle{mmm flat}=[fill=white, draw=black, shape=rectangle, minimum height=5mm, minimum width=12mm]
\tikzstyle{grey dot}=[fill={rgb,255: red,191; green,191; blue,191}, draw={rgb,255: red,191; green,191; blue,191}, shape=circle, scale=0.3]
\tikzstyle{mm vert}=[fill=white, draw=black, shape=rectangle, minimum width=5mm, minimum height=14mm]
\tikzstyle{20mm vert}=[fill=white, draw=black, shape=rectangle, minimum width=5mm, minimum height=20mm]
\tikzstyle{16mm vert}=[fill=white, draw=black, shape=rectangle, minimum width=5mm, minimum height=16mm]
\tikzstyle{18mm vert}=[fill=white, draw=black, shape=rectangle, minimum width=5mm, minimum height=18mm]
\tikzstyle{morphism}=[fill=white, draw=black, shape=rectangle]
\tikzstyle{medium box}=[fill=white, draw=black, shape=rectangle, minimum width=0.8cm, minimum height=0.9cm]
\tikzstyle{large morphism}=[fill=white, draw=black, shape=rectangle, minimum width=1cm, minimum height=2cm]
\tikzstyle{bn}=[fill=black, draw=black, shape=circle, inner sep=1.5pt]
\tikzstyle{wn}=[fill=white, draw=black, shape=circle, inner sep=1.5pt]
\tikzstyle{nn}=[fill=gray, draw=gray, regular polygon, regular polygon sides=3, minimum width=0.3cm, shape border rotate=90, inner sep=0pt]
\tikzstyle{nneffect}=[fill=gray, draw=gray, regular polygon, regular polygon sides=3, minimum width=0.3cm, shape border rotate=-90, inner sep=0pt]
\tikzstyle{state}=[fill=white, draw=black, regular polygon, regular polygon sides=3, minimum width=0.3cm, shape border rotate=270, inner sep=0.5pt, text width=2pt]
\tikzstyle{effect}=[fill=white, draw=black, regular polygon, regular polygon sides=3, minimum width=0.8cm, shape border rotate=0, inner sep=0pt]
\tikzstyle{medium state}=[fill=white, draw=black, regular polygon, regular polygon sides=3, minimum width=1.3cm, inner sep=0pt, shape border rotate=180]
\tikzstyle{medium effect}=[fill=white, draw=black, regular polygon, regular polygon sides=3, minimum width=1.3cm, inner sep=0pt, shape border rotate=0]
\tikzstyle{large state}=[fill=white, draw=black, regular polygon, regular polygon sides=3, minimum width=2.2cm, shape border rotate=180, inner sep=0pt]
\tikzstyle{rotated state}=[fill=white, draw=black, regular polygon, regular polygon sides=3, minimum width=0.3cm, shape border rotate=90, inner sep=0.5pt, text width=2pt]
\tikzstyle{dashes}=[-, dashed]
\tikzstyle{dashed box}=[-, dashed]
\tikzstyle{thick}=[-, draw=black, line width=0.4mm]
\tikzstyle{right arrow}=[->]
\tikzstyle{left arrow}=[<-]
\tikzstyle{grey fill}=[-, fill={rgb,255: red,191; green,191; blue,191}, draw={rgb,255: red,191; green,191; blue,191}]
\tikzstyle{blue fill}=[-, fill=cyan, draw=cyan]
\tikzstyle{yellow fill}=[-, fill=yellow, draw=yellow]
\tikzstyle{green fill}=[-, fill=green, draw=green]
\tikzstyle{red wire}=[-, draw=red]
\tikzstyle{blue wire}=[-, draw=blue]
\tikzstyle{grey wire}=[-, draw={rgb,255: red,128; green,128; blue,128}]
\tikzstyle{green wire}=[-, draw=green]
\tikzstyle{yellow wire}=[-, draw=yellow]
\tikzstyle{black fill}=[-, fill=black]
\tikzstyle{white thick}=[-, draw=white, line width=4.5pt]
\tikzset{x=0.9em, y=2ex, baseline=-0.5ex}
\tikzset{ihbase/.style={inner sep=0,circle,draw,fill=lightgray,minimum size=0.4em,node contents={}}}
\tikzset{ihblack/.style={ihbase,fill=black}}
\tikzset{ihwhite/.style={ihbase,fill=white}}
\tikzset{mat/.style={draw,fill=white,rectangle,node font=\scriptsize}}
\tikzset{ha/.style={mat,rounded rectangle,rounded rectangle left arc=none}}
\tikzset{haop/.style={mat,rounded rectangle,rounded rectangle right arc=none}}
\tikzset{blackha/.style={mat,rounded rectangle,rounded rectangle left arc=none,font=\color{white},fill=black}}
\tikzset{blackhaop/.style={mat,rounded rectangle,rounded rectangle right arc=none,font=\color{white},fill=black}}
\tikzset{anti/.style={inner sep=0,isosceles triangle,fill=black,draw=black, minimum width=0.75em, node contents={}}}
\tikzset{antiop/.style={anti,shape border rotate=180}}
\tikzset{antisq/.style={inner sep=0,rectangle,fill=black, minimum height=1em, minimum width=0.6em, node contents={}}}
\tikzset{count/.style={above,inner ysep=0.15em,font=\scriptsize}}
\tikzset{axiom/.style={above,font=\small}}
\tikzset{dir/.style={-Latex}}
\tikzset{st/.style={decoration={markings,
    mark={at position 0.5 with {\draw (0, 2pt) to (0, -2pt);}}},
    postaction=decorate}}
\definecolor{light-gray}{gray}{.5}
\renewcommand{\tikzfig}[1]{
\InputIfFileExists{#1.tikz}{}{\input{./tikz/#1.tikz}}
}
\newcommand{\myeq}[1]{\mathrel{\overset{\makebox[0pt]{\mbox{\normalfont\tiny\sffamily #1}}}{=}}}
\newcommand{\diagbox}[3]{
\begin{tikzpicture}
	\begin{pgfonlayer}{nodelayer}
		\node [style=basic box] (0) at (0, 0) {$#1$};
		\node [style=none] (1) at (1.5, 0) {};
		\node [style=none] (2) at (-1.5, 0) {};
		\node [style=none] (3) at (1.5, 0.5) {\scriptsize $#3$};
		\node [style=none] (4) at (-1.5, 0.5) {\scriptsize $#2$};
	\end{pgfonlayer}
	\begin{pgfonlayer}{edgelayer}
		\draw (2.center) to (0);
		\draw (0) to (1.center);
	\end{pgfonlayer}
\end{tikzpicture}
}
\newcommand{\genericcomult}[2]{
  \begin{tikzpicture}
    \node at (1, 0) [ihbase,solid,name=copy,#1];
    \draw[#2] (copy) .. controls (1.25, 0.5) .. (2, 0.5);
    \draw[#2] (0, 0) -- (copy);
    \draw[#2] (copy) .. controls (1.25, -0.5) .. (2, -0.5);
  \end{tikzpicture}
}
\newcommand{\genericcounit}[2]{
  \tikz \draw[#2] (0, 0) -- (1, 0) node[ihbase,#1, solid];
}
\newcommand{\genericmult}[2]{
  \tikz {
    \node at (1,0) (copy) [ihbase,#1,solid];
    \draw[#2] (0,  0.5) .. controls (0.75,  0.5) .. (copy);
    \draw[#2] (0, -0.5) .. controls (0.75, -0.5) .. (copy);
    \draw[#2] (copy) -- (2, 0);
  }
}
\newcommand{\genericunit}[2]{
  \tikz \draw[#2] (0, 0) node[ihbase,#1, solid] -- (1, 0);
}
\newcommand{\Bcomult}{\genericcomult{ihblack}{}}
\newcommand{\Bcounit}{\genericcounit{ihblack}{}}
\newcommand{\Bunit}{\genericunit{ihblack}{}}
\newcommand{\NN}{\genericunit{gray,regular polygon, regular polygon sides=3, minimum width=0.3cm, shape border rotate=90, inner sep=0pt}{}}
\newcommand{\Wmult}{\genericmult{ihwhite}{}}
\newcommand{\Wunit}{\genericunit{ihwhite}{}}
\newcommand{\Wcomult}{\genericcomult{ihwhite}{}}
\newcommand{\Wcounit}{\genericcounit{ihwhite}{}}
\newcommand\scalar[1]{
	\tikz {
		\node[ha] (ha) {$#1$};
		\draw (ha.west) -- ++(-0.75, 0);
		\draw (ha.east) -- ++(0.75, 0);
	}
}
\newcommand\coscalar{\tikzfig{coscalar}}
\newcommand{\nullm}{\ensuremath{( \, )}}
\newcommand{\zeromat}{\ensuremath{[\, ]}}
\definecolor{deepblue}{rgb}{0,0,0.5}
\definecolor{deepred}{rgb}{0.6,0,0}
\definecolor{deepgreen}{rgb}{0,0.5,0}
\definecolor{darkgray}{rgb}{0.5,0.5,0.5}
\DeclareFixedFont{\ttb}{T1}{txtt}{bx}{n}{9} % for bold
\DeclareFixedFont{\ttm}{T1}{txtt}{m}{n}{9}  % for normal
\lstdefinestyle{python_ppl}{
	language=Python,
	basicstyle=\ttm,
	otherkeywords={let, to},          
	keywordstyle=\ttb\color{deepblue},
	emph={Gauss,N,condition, =:=, observe, sample, score, normal, gp_sample},     
	emphstyle=\ttb\color{deepred},    
	stringstyle=\color{deepgreen}  
}
\lstdefinelanguage{CustomML}{
	keywords={match, with, rec, true, false, fun, return, let, in, if, then, else, type, val, module, sig, end, ref, struct},
	keywordstyle=\color{deepblue}\bfseries,
	ndkeywords={ref},
	ndkeywordstyle=\color{darkgray}\bfseries,
	identifierstyle=\color{black},
	sensitive=false,
	comment=[l]{//},
	morecomment=[s]{/*}{*/},
	commentstyle=\color{darkgray}\ttfamily,
	stringstyle=\color{red}\ttfamily,
	morestring=[b]',
	morestring=[b]"
}
\lstdefinestyle{ml_ppl}{
	language=CustomML,
	basicstyle=\ttm,
	otherkeywords={},          
	keywordstyle=\ttb\color{deepblue},
	emph={Gauss,N,condition, =:=, observe, sample, score, normal, gp_sample},     
	emphstyle=\ttb\color{deepred},    
	stringstyle=\color{deepgreen}
}
\lstdefinelanguage{JavaScript}{
	keywords={typeof, new, true, false, catch, function, return, null, catch, switch, var, if, in, while, do, else, case, break},
	keywordstyle=\color{blue}\bfseries,
	ndkeywords={class, export, boolean, throw, implements, import, this},
	ndkeywordstyle=\color{darkgray}\bfseries,
	identifierstyle=\color{black},
	sensitive=false,
	comment=[l]{//},
	morecomment=[s]{/*}{*/},
	commentstyle=\color{darkgray}\ttfamily,
	stringstyle=\color{red}\ttfamily,
	morestring=[b]',
	morestring=[b]"
}
\lstdefinestyle{webppl}{
	language=JavaScript,
	basicstyle=\ttm,
	otherkeywords={let},            
	keywordstyle=\ttb\color{deepblue},
	emph={flip, condition, factor, sample, normal, score, observe},
	emphstyle=\ttb\color{deepred},    
	stringstyle=\color{deepgreen}
}
\lstdefinestyle{prolog}{
	language=Prolog,
	basicstyle=\ttm,         
	keywordstyle=\ttb\color{deepblue},
	emphstyle=\ttb\color{deepred},    
	stringstyle=\color{deepgreen}, 
}
\lstdefinestyle{funprolog}{
	language=Prolog,
	basicstyle=\ttm,
	otherkeywords={let, in, return},            
	keywordstyle=\ttb\color{deepblue},
	emphstyle=\ttb\color{deepred},    
	stringstyle=\color{deepgreen}, 
}
\newcommand*{\mlstinline}[1]{\text{\lstinline|#1|}}
\newcommand*{\code}[1]{\lstinline|#1|}
\newcommand{\R}{\mathbb R}
\newcommand{\exR}{[-\infty,\infty]}
\newcommand{\id}{\mathsf{id}}
\newcommand{\iv}[1]{\{\!|\,{#1}\,|\!\}}
\newcommand{\sem}[1]{\left\llbracket {#1} \right\rrbracket}
\newcommand{\semG}[1]{\left\langle {#1} \right\rangle}
\newcommand{\semEG}[1]{\left\langle\!\left\langle {#1} \right\rangle\!\right\rangle}
\newcommand{\inff}[2]{\inf_{{#1}} \left\{{#2}\right\}}
\newcommand{\freeprop}[1]{\mathsf{FreeP}_{\scriptscriptstyle #1}}
\newcommand{\Thlin}{\mathbb L\mathrm{in}}
\newcommand{\Thaff}{\mathbb A\mathrm{ff}}
\newcommand{\Threl}{\mathbb R\mathrm{el}}
\newcommand{\Thquad}{\mathbb Q\mathrm{uad}}
\newcommand{\Thpart}{\mathbb P\mathrm{art}}
\newcommand{\catname}[1]{\mathbf{#1}}
\newcommand{\gauss}{\catname{Gauss}}
\newcommand{\gaussex}{\catname{GaussEx}}
\newcommand{\vect}{\catname{Vect}}
\newcommand{\affrel}{\catname{AffRel}}
\newcommand{\aff}{\catname{AffVect}}
\newcommand{\quadrel}{\catname{QuadRel}}
\newcommand{\GAA}{\catname{GAA}}
\newcommand{\GQA}{\catname{GQA}}
\newcommand{\GAAfwd}{\overrightarrow{\catname{GAA}}}
\newcommand{\GQAfwd}{\overrightarrow{\catname{GQA}}}
\newcommand{\GQAfwdex}{\overrightarrow{\catname{GQA}}_{\scriptscriptstyle{E}}}
\newcommand{\mat}[1]{\underline{#1}}
\newcommand{\const}[1]{\underline{#1}}
\newcommand{\dist}[1]{\underline{#1}}
\newcommand{\subspace}[1]{\underline{#1}}
\newcommand{\orth}{\mathrm O}
\newcommand{\gl}{\mathrm{GL}}
\newcommand{\im}{\mathrm{im}}
\newcommand{\s}{\,\,|\,\,}
\newcommand{\N}{\mathcal N}
\renewcommand{\vec}[1]{\mathbf{#1}}
\newcommand{\defeq}{\stackrel{\mathrm{def.}}=}
\newcommand{\vecv}{\mathbf{v}}
\newcommand{\vecw}{\mathbf{w}}
\newcommand{\vecu}{\mathbf{u}}
\newcommand{\eq}{\mathrel{\scalebox{0.4}[1]{${=}$}{:}{\scalebox{0.4}[1]{${=}$}}}}
\newcommand{\resistor}{\lower5pt\hbox{$\includegraphics[height=.8cm]{}$}}
\newcommand{\vsource}{\lower5pt\hbox{$\includegraphics[height=.8cm]{}$}}
\newcommand{\csource}{\lower5pt\hbox{$\includegraphics[height=.8cm]{}$}}
\newcommand{\nvsource}{\lower5pt\hbox{$\includegraphics[height=.8cm]{}$}}
\newcommand{\ncsource}{\lower5pt\hbox{$\includegraphics[height=.8cm]{}$}}
\newcommand{\inductor}{\lower5pt\hbox{$\includegraphics[height=.8cm]{}$}}
\newcommand{\capacitor}{\lower5pt\hbox{$\includegraphics[height=.8cm]{}$}}
\newcommand{\junction}{\lower10pt\hbox{$\includegraphics[height=.8cm]{}$}}
\newcommand{\cojunction}{\lower10pt\hbox{$\includegraphics[height=.8cm]{}$}}
\newcommand{\GPL}{\textsc{GPL}}
\newcommand{\keyword}[1]{\mathsf{#1}}
\newcommand{\uniform}{\keyword{uniform}}
\newcommand{\vsem}[1]{\left(\!\left|{#1}\right|\!\right)}
\renewcommand{\N}{\mathcal N}
\newcommand{\term}[1]{\mathrm{#1}}
\newcommand{\letin}[2]{\term{let}\,{#1}\,={#2}\,\term{in}\,}
\newcommand{\pmatch}[2]{\term{match}\,{#2}\,\term{as}\,{#1}.}
\newcommand{\rv}{\mathrm{R}}
\newcommand{\unit}{\mathrm{I}}
\newcommand{\normal}{\mathrm{normal}}
\colorlet{fragmenttitle}{red!70!black}
\tikzset{every picture/.append style={scale=0.8}}
\tikzset{every node/.style={scale=0.8}}
\title{Graphical Quadratic Algebra}
\titlerunning{Graphical Quadratic Algebra}
\author{
Dario Stein \inst{1} \and
Fabio Zanasi \inst{2}
\and
Robin Piedeleu \inst{2} \and
Richard Samuelson
\inst{3}
}
\authorrunning{Stein, Zanasi, Piedeleu, and Samuelson}
\institute{Radboud University Nijmegen, The Netherlands \email{dario.stein@ru.nl} \and University College London, UK \email{f.zanasi@ucl.ac.uk} \email{r.piedeleu@ucl.ac.uk} \and University of Florida, Gainesville, United States \email{rsamuelson@ufl.edu}
}
\begin{document}
\maketitle

\begin{abstract}
	Convex analysis and Gaussian probability are tightly connected,  as mostly evident in the theory of linear regression. Our work introduces an algebraic perspective on such relationship, in the form of a diagrammatic calculus of string diagrams, called Graphical Quadratic Algebra (GQA). We show that GQA is a complete axiomatisation for the category of quadratic relations, a compositional formulation of quadratic problems. Moreover, we identify a sub-theory of GQA which is complete for the category of Gaussian probabilistic processes. We show how GQA may be used to study linear regression and probabilistic programming.
\keywords{string diagrams \and categorical semantics \and linear algebra \and linear regression \and categorical probability}
\end{abstract}

\section{Introduction}\label{sec:intro}

Gaussian probability theory studies phenomena governed by \emph{normal} distributions --- bell-shaped curves fully determined by mean and variance. Quadratic optimisation, on the other hand, is concerned with minimising quadratic functions --- those expressible as polynomials of degree at most 2. Though they arise in different contexts, these two areas are deeply related. A clear instance of this connection appears in linear regression, where one seeks the best-fitting solution to a linear system $Ax = b$, with $A$ a matrix and $b$ a vector representing observed data. If the error term $\epsilon$ is assumed to follow a Gaussian distribution, then finding the most likely solution to the model $Ax + \epsilon = b$ is equivalent to minimising the quadratic loss  function $f(x) = ||Ax-b||^2$, which measures the sum of squared residuals. The relationship goes further: the log-density of a Gaussian distribution is itself a quadratic function, and computing conditional distributions in Gaussian models often reduces to solving least-squares problems.

In this paper, we introduce a fresh perspective on Gaussian probability, quadratic optimisation, and their web of connections, by presenting them as categorical structures. Our main results are \emph{complete axiomatisations} for these categories, via an axiomatic calculus called \emph{graphical quadratic algebra}.% allows for compositional, algebraic reasoning on  allows formal reasoning on the web  compositional, algebraic reasoning on mathematical framework where the web of connections between Gaussian probability and quadratic optimisation  can be explored systematically and rigorously. Our approach is rooted in categorical algebra: it will  We show that the two share a \emph{compositional structure}, which we \emph{completely axiomatise} using categorical algebra. 

Our approach unfolds as follows. First, we study how quadratic problems \emph{compose}, so that they can be organised into a category. To this aim, we view a partial quadratic function $F \colon \mathbb{R}^m \times \mathbb{R}^n \to [0,\infty]$ as a $[0,\infty]$-weighted relation on $\mathbb{R}^m \times \mathbb{R}^n$, called  \emph{quadratic relation}. This perspective has three major appeals. First, quadratic relations characterise the solutions of constrained least-square problems~(\cite{Boydbook}). Second, their composition is naturally defined via constrained minimisation (see~\eqref{eq:weighted_rel_comp} below). Third, quadratic relations extend affine relations (\cite{pbsz}), linking our work with the theory of \emph{Graphical Affine Algebra} (\cite{pbsz}) and offering a clear pathway towards a complete axiomatic calculus.

The resulting category $\quadrel$, with morphisms the quadratic relations, provides an algebraic setting to study quadratic optimisation. Gaussian probability enters the picture as a subcategory $\gauss$ of $\quadrel$. The category $\gauss$ has been previously studied as a \emph{Markov category}, in the context of a more general framework for categorical probability theory~\cite{fritz-markov}. Morphisms of $\gauss$, called Gaussian stochastic maps, generalise Gaussian distributions in the same way as linear maps generalise vectors. The study of $\gauss$ is of independent interest because of its use as a semantics for Gaussian probabilistic programming (\cite{stein2021compositional}). 

Once the `semantic' structures $\quadrel$ and $\gauss$ are in place, we introduce their `syntactic' presentation. Our language adopts the graphical notation of \emph{string diagrams} (see eg.~\cite{selinger:graphical,piedeleuzanasi2023} for recent introductions). As a two-dimensional formalism, string diagrams can be flexibly treated both as syntax and as combinatorial objects. They are now widely adopted in axiomatic reasoning on resource-sensitive models, eg. in linear algebra (\cite{bonchi:interacting_hopf,coecke2018,baez:control,CoeckeK18}) and probabilistic reasoning (\cite{fritz-markov,JacobsKZ21,LorenzTull-causalmodels,JacobsZanasi-logicalessentials}), making it a natural choice for our approach.

The language we introduce, called $\GQA$ (`Graphical Quadratic Algebra'), consists of the string diagrams obtained by sequential and parallel composition of the following generators, for $k \in \mathbb{R}$:

\vspace{-.5cm}
	\begin{align}%\label{eq:eqintro}
	\boxed{
	\tikzfig{bcomult} \quad \tikzfig{bcounit} \quad \tikzfig{wmult} \quad \tikzfig{wunit}  \quad \scalar{k} \quad \tikzfig{foot} 
	}\qquad
	\boxed{
	\tikzfig{bmult} \quad \tikzfig{bunit} \quad \tikzfig{wcomult} \quad \tikzfig{wcounit}
	}\qquad 
	\boxed{\NN}
	\label{eq:eqintro}
	\end{align}
\vspace{-.5cm}

As the name suggests, $\GQA$ directly extends \emph{Graphical Affine Algebra} ($\GAA$, \cite{pbsz}), which is given by the generators of the first two blocks in~\eqref{eq:eqintro}. Semantically, the generators in the first block model basic linear algebraic operations of copying ($\tikzfig{bcomult}$), discarding ($\tikzfig{bcounit}$) addition ($\tikzfig{wmult}$), scaling ($\tikzfig{scalar}$), and the constants zero ($ \tikzfig{wunit}$) and one ($\tikzfig{foot}$). With the equational theory of Hopf bimonoids (Fig.~\ref{fig:gaa-axioms}, first block), string diagrams built with these generators are known to axiomatise the category $\aff$ of affine maps (\cite{pbsz}). If we add the generators of the second block of~\eqref{eq:eqintro}, namely $\tikzfig{bmult},\tikzfig{bunit}$, $\tikzfig{wcomult},\tikzfig{wcounit}$, with behaviour symmetric to their mirrored version and extra equations (third block in Fig.~\ref{fig:gaa-axioms}, notably including two Frobenius monoid structures), we obtain the full $\GAA$, which axiomatises the category $\affrel$ of affine relations, i.e. affine subspaces that compose relationally~(\cite{pbsz}).

The surprising discovery of our work is that, in order to extend the axiomatisation of the category $\affrel$ of affine relations to one of the category $\quadrel$ of \emph{quadratic} relations, it suffices to add just one extra generator, $\NN$ (and three equations as in Fig.~\ref{fig:gaa-axioms}, second block): this yields $\GQA$. Moreover, we axiomatise $\gauss$ by restricting $\GQA$ to generators in the first and third block in~\eqref{eq:eqintro}.

\begin{figure}[t!]
	\begin{tcolorbox}[title={\small Affine fragment}]
		\begin{align*}
			\tikzfig{ax/add-associative}\;\myeq{as}\;\: \tikzfig{ax/add-associative-1}\qquad  \tikzfig{ax/add-commutative}\myeq{com}\;\;\, \tikzfig{wmult}\qquad  \tikzfig{ax/add-unital-left}\myeq{un}\;\tikzfig{id}
			%&\myeq{$\circ$-unr}\;\;\;&\tikzfig{ax/add-unital-right}
			\qquad
			\tikzfig{ax/foot-bcomult}\;\myeq{1-dup}\;\tikzfig{ax/footxfoot}
			\\
			%\tikzfig{ax/co-add-associative-1}&\;\;\myeq{$\circ$-coas}\; \tikzfig{ax/co-add-associative}\qquad  \tikzfig{ax/co-add-commutative}&\myeq{$\circ$-coco}\;\;& \tikzfig{wcomult} \qquad  \tikzfig{ax/co-add-unital-left}&\myeq{$\circ$-counl}\;&\tikzfig{id}&\myeq{$\circ$-unr}\;&\tikzfig{ax/co-add-unital-right}\\
			\tikzfig{ax/copy-associative}\;\;\myeq{coas}\;\; \tikzfig{ax/copy-associative-1}\qquad  \tikzfig{ax/copy-commutative}\;\;\myeq{cocom}\;\; \Bcomult\qquad  \tikzfig{ax/copy-unital-left}\myeq{coun}\;\tikzfig{id}%&\myeq{$\bullet$-counr}\;\tikzfig{ax/copy-unital-right}
			%&\myeq{$\bullet$-unl}\;\;\;&\tikzfig{ax/co-copy-unital-right}
			\qquad 
			\tikzfig{ax/foot-bcounit}\;\myeq{1-del}\;\;\;\tikzfig{empty-diag}
			\\
			\tikzfig{ax/reals-add}\;\myeq{add}\;\tikzfig{ax/reals-add-1} \qquad \tikzfig{zero}\;\myeq{zero}\;\tikzfig{ax/reals-zero} \qquad
			\tikzfig{ax/reals-copy}\;\myeq{dup}\; \tikzfig{ax/reals-copy-1} \qquad \tikzfig{ax/reals-delete}\;\myeq{del}\;\Bcounit
			\quad
			\\
			\tikzfig{ax/reals-multiplication}\;\myeq{mult}\;\tikzfig{ax/reals-multiplication-1} \qquad    \tikzfig{ax/reals-sum}\;\myeq{plus}\;\tikzfig{ax/reals-sum-1}
			\qquad    \tikzfig{ax/reals-scalar-zero}\;\myeq{0}\;\tikzfig{ax/reals-scalar-zero-1}\qquad  \tikzfig{ax/reals-scalar-one}\myeq{1}\;\tikzfig{id}
			\\
			\tikzfig{ax/add-copy-bimonoid}\;\;\myeq{$\circ\bullet$-bi}\;\;\tikzfig{ax/add-copy-bimonoid-1} \qquad \tikzfig{ax/add-copy-bimonoid-unit} \,\;\;\myeq{$\circ\bullet$-biun}\;\;\; \tikzfig{ax/add-bimonoid-unit-1}
			\qquad
			\tikzfig{ax/bone-white-black}\;\;\myeq{$\circ\bullet$-bo}\;\;\;\tikzfig{empty-diag}
			\qquad
			\tikzfig{ax/add-copy-bimonoid-counit} \;\;\,\myeq{$\bullet\circ$-biun}\;\;\; \tikzfig{ax/add-copy-bimonoid-counit-1}
		\end{align*}
	\end{tcolorbox}
	\begin{tcolorbox}[title={\small Quadratic fragment}]
		\begin{align*}
			\tikzfig{eqD-left}\;\myeq{D}\, \tikzfig{empty-diag} \,\myeq{Z}\; \tikzfig{eqNorm0-left}  \qquad
			\tikzfig{eqRI-left}\,\myeq{RI}\;\tikzfig{eqRI-right} \quad \text{for $\varphi \in [0,2\pi)$}
		\end{align*}
	\end{tcolorbox}
	\begin{tcolorbox}[title={\small Relational fragment}]
		\begin{align*}
			%\tikzfig{ax/co-copy-associative}\;\myeq{$\bullet$-as}\;\: \tikzfig{ax/co-copy-associative-1}\qquad  \tikzfig{ax/co-copy-commutative}\myeq{$\bullet$-co}\;\;\, \Bcomult\qquad  \tikzfig{ax/co-copy-unital-left}&\myeq{$\bullet$-unl}\;\tikzfig{id}
			% \\[1.5em]
			\tikzfig{ax/copy-Frobenius-left}\;\;\myeq{$\bullet$-fr1}\;\; \tikzfig{ax/copy-Frobenius}\;\;\myeq{$\bullet$-fr2}\;\; \tikzfig{ax/copy-Frobenius-right}  \qquad
			\tikzfig{ax/copy-special}\myeq{$\bullet$-sp}\tikzfig{id}
			\qquad 
			\tikzfig{ax/bone-black}\;\;\myeq{$\bullet$-bo}\;\;\tikzfig{empty-diag} \qquad \quad
			\\
			\tikzfig{ax/add-Frobenius-left}\;\;\myeq{$\circ$-fr1}\;\; \tikzfig{ax/add-Frobenius}\;\;\myeq{$\circ$-fr2}\;\; \tikzfig{ax/add-Frobenius-right}
			\qquad
			\tikzfig{ax/add-special}\myeq{$\circ$-sp}\tikzfig{id}
			\qquad 
			\tikzfig{ax/bone-white}\;\;\myeq{$\circ$-bo}\;\;\tikzfig{empty-diag}\qquad\quad
			\\
			\tikzfig{ax/white-cap}\;\myeq{cap}\; \tikzfig{ax/black-cap-minus-one} 
			\quad\; 
			\tikzfig{one-false}\; \myeq{false}\;\tikzfig{one-false-disconnect}
			%\tikzfig{ax/white-cup}\;\;\myeq{cup}\;\; \tikzfig{ax/black-cup-minus-one}
			\qquad
			\tikzfig{ax/scalar-division}\;\myeq{$r$-inv}\; \tikzfig{id}\;\myeq{$r$-coinv}\;\tikzfig{ax/scalar-co-division} (r\neq 0)
		\end{align*}
	\end{tcolorbox}
	\caption{Axioms of Graphical Quadratic Algebra \label{fig:gaa-axioms}}
\end{figure}

 In $\gauss$, $\NN$ is interpreted as the standard normal distribution $\N(0,1)$, and in $\quadrel$ as the quadratic function $f(x) = \frac 1 2 x^2$ (ie., a unary quadratic relation on $\mathbb{R}$). We illustrate this dual interpretation via an example.

\begin{example}\label{ex:intro}
	The semantics of the string diagram $\tikzfig{add_example}$ in $\quadrel$, built up from two instances of $\NN$ and $\tikzfig{wmult}$, is the constrained minimisation problem on the left below, whose solution can be found on the right-hand side, with its corresponding diagram:
	\begin{equation}\label{eq:intro_den_eq} 
		\hspace{-.5cm}\sem{\scalebox{0.7}{\tikzfig{add_example}}\!\!} \!=\! \left( y \mapsto \inff {} {\frac 1 2 x_1^2 + \frac 1 2 x_2^2 : x_1 + x_2 = y }\right) \!=\! \left(y\mapsto  \frac 1 4 y^2 \right) \!=\! \sem{\tikzfig{add_example_2Right}}
	\end{equation}
	%The minimum is attained for $x_1=x_2=y/2$, so that $f(y) = \frac 1 4 y^2$ for all $y\in\R$. Thus, in $\quadrel$
	%\begin{equation} \sem{\tikzfig{add_example_2}} = \sem{\tikzfig{add_example_2Right}} \label{eq:intro_den_eq} 	\end{equation}
	In $\gauss$, \eqref{eq:intro_den_eq} means that if $X_1, X_2 \sim \N(0,1)$ are independent variables, then $Y=X_1+X_2$ has distribution $\N(0,2)$.
	
	We can now use the axioms of GQA (Fig.~\ref{fig:gaa-axioms}) to derive \eqref{eq:intro_den_eq}. The notable step involves the axiom scheme \textsf{RI}, expressing that the function $f(x,y) = \frac 1 2 (x^2 + y^2)$ is \emph{invariant under rotations}. When interpreted in $\gauss$, \textsf{RI} is  evocative of the Herschel-Maxwell theorem, which states that the normal distribution is uniquely characterised by its rotation invariance (e.g. \cite{gyenis2017maxwell}).
 Because $\cos(\pi/4) = \sin(\pi /2 - \pi /4) = \sin(\pi /4) = 1/\sqrt 2$, we can apply \textsf{RI} with $\varphi = \pi/4$ as below. We then simplify the resulting string diagram via linear algebraic manipulations in the fragment GAA, obtaining the desired outcome $\tikzfig{add_example_2Right}$.
	%\[ \tikzfig{add_example_3} \]
	\[ \tikzfig{add_example_3unpacked_Fabio} \]
\end{example}

Completeness of $\GQA$ for $\quadrel$ and $\gauss$ ensures that any semantic equality as in Example~\ref{ex:intro} can in principle be derived. More broadly, this algebraic characterisation sheds light on the connections between quadratic optimisation and Gaussian probability theory, revealing their relative expressive power. Notably, the equational theory of $\gauss$ is a fragment of the one of $\quadrel$, without the ability to model relational behaviours using Frobenius monoids. Moreover, the presentation of $\gauss$ omits the generator $\tikzfig{wcounit}$, which we may regard as conditioning of Gaussians. From this perspective, we may view quadratic relations as generalised Gaussian distributions, which combine probability and conditioning.

There are a few more interesting ramifications of our work. As noted, there are transformations relating Gaussian distributions, quadratic problems, and affine spaces. However, turning them into functors is technically challenging, as studied in~\cite{stein2023compositional}. Thanks to our axiomatic presentations, it suffices to define these mappings on the generators of $\GQA$ to obtain functoriality as a simple corollary. We explore this in~\Cref{sec:functors}. 

Our characterisations also enable a new methodology, based on equational diagrammatic reasoning, to study domains modelled by quadratic relations or Gaussian distributions, such as linear regression and probabilistic program semantics. We explore these examples in Section~\ref{sec:cases-studies}. As a preview, the program below left is written in a language with support for Gaussian distributions and a conditioning operator \mlstinline{(=:=)}, and computes inference of a latent variable from a noisy observation. The corresponding $\GQA$-diagram is below right. %is a program for noisy measurement and how it gets translated into a string diagram whose $\quadrel$-semantics expresses its behaviour.
%To give a more complex example, in \Cref{sec:ppl}, we consider a probabilistic programming language which can express Gaussian distributions and features a conditioning operator \mlstinline{(=:=)}. For example, the following program expresses inferring an underlying value from a noisy measurement

\noindent % remove indentation from the first line
\hspace{-.3em} % move the first minipage closer to the left margin
\fbox{
\begin{minipage}{.3\textwidth}
	\vspace{-.4cm}
	\begin{align*}
	&\letin {\text{value}} {50 + 10 \cdot \normal()} 
			\\
			&\letin {\text{measurement}} {\text{value} + 5 \cdot \normal()} 
			\\
			&\text{measurement} \eq \underline{40}; 
			\\
			&\text{value}
	\end{align*}
\end{minipage}}
\begin{minipage}{.4\textwidth}
	\vspace{-.4cm}
	\begin{equation*} 
		\tikzfig{noisymmt} 
	\end{equation*}
\end{minipage}

 Programs in this language correspond systematically to string diagrams. Moreover, the axioms of $\GQA$ suffice to reduce such a diagram to one representing the posterior distribution, effectively mirroring the program's execution.

\noindent \textbf{Synopsis} After preliminaries (\Cref{sec:GAA}), we present our work  in two steps: first we introduce the simpler theory $\gauss$, and show how the fragment $\GQAfwd$ axiomatises it (\Cref{sec:gauss}). Next, we introduce $\quadrel$ and its axiomatic calculus, the full $\GQA$ (\Cref{sec:quadrel}). We then conclude with some applications~(\Cref{sec:cases-studies}) and future work (\Cref{sec:conclusions}). Missing proofs and details are in appendices.

%\textbf{Synopsis}Section~\ref{sec:GAA} recalls the necessary background on Graphical Affine Algebra. Section~\ref{sec:causal-GQA} introduces the category $\gauss$ and proves that $\GQAfwd$ presents it. In Section~\ref{sec:quadrel}, we define the category $\quadrel$ and prove that $\GQA$ presents it. In Section~\ref{sec:cases-studies}, we consider a few applications: we examine how the traditional link between Gaussians and least-square optimisation can be made functorial using our axiomatisations, we clarify the connection of $\GQA$ to linear regression, and  study links with Gaussian probabilistic programming.
%and diagrammatic reasoning on noisy electrical circuits.
%Section~\ref{sec:conclusions} is for conclusions and future work. The appendix includes additional preliminaries and missing proofs.

\section{Background: Graphical Affine Algebra}\label{sec:GAA}

We recall background on string diagrams and Graphical Affine Algebra,  referring to~\cite{pbsz} for more details. A symmetric strict monoidal category with set of objects the natural numbers, and monoidal product on objects given by addition, is called a \emph{prop} (\cite{Maclane1960}). Prop morphisms are identity-on-objects symmetric monoidal functors between props. Props play a role akin to algebraic clones, in a monoidal setting. Various linear algebraic transformations may be organised into props. 

\begin{definition}\label{def:cataffine}~The prop $\aff$ has affine maps of type $\mathbb{R}^m \to \mathbb{R}^n$ as morphisms $m\to n$, with composition given by matrix multiplication and monoidal product given by direct sum.
	The prop $\affrel$ has affine subspaces of $\mathbb{R}^{m+n}$ as morphisms $m \to n$, with relational composition $R ; S = \{ (\vecv,\vecw) \mid \exists \vecu . (\vecv,\vecu ) \in R \wedge (\vecu,\vecw ) \in S \}$, and monoidal product given by direct sum.
\end{definition}
Because of the way they compose, $\affrel$-morphisms are called \emph{affine relations}. Complete axiomatisations for these categories are provided by string diagrammatic calculi, which we now introduce. A \emph{symmetric monoidal theory} (SMT) is a pair $(\Sigma, E)$, where $\Sigma$ is a signature of operations $o \colon m \to n$ with an arity $m$ and coarity $n$, and $E$ is a set of equations between $\Sigma$-terms. A $\Sigma$-term $c$ of type $m \to n$ will be represented graphically as \emph{a string diagram} (\cite{piedeleuzanasi2023,selinger:graphical}) with $m$ dangling wires on the left and $n$ on the right, also written $\tikzfig{cdiagrammn}$. Formally, $\Sigma$-terms are freely obtained by  sequential and parallel compositions of the operations in $\Sigma$ together with the identity $\tikzfig{id} \colon 1 \to 1$, the symmetry $\tikzfig{swap} \colon 2 \to 2$, and the `empty' diagram $\tikzfig{empty-diag} \colon 0 \to 0$. Sequential composition of $\Sigma$-terms $\tikzfig{cdiagram}$ and $\tikzfig{ddiagram}$ is depicted as $\tikzfig{horizontal-comp}$, of type $m \to n$. Parallel composition of $\Sigma$-terms $\tikzfig{c1diagram}$ and $\tikzfig{c2diagram}$ is depicted as $\tikzfig{vertical-comp}$, of type $m_1 + m_2 \to n_1 + n_2$. We refer to $\Sigma$-terms quotiented by the laws of symmetric strict monoidal categories (Appendix~\ref{app:lawssmc}) as \emph{string diagrams}.\footnote{We adopt a `syntactic' definition of string diagrams, which follows eg. \cite{piedeleuzanasi2023,pbsz,bonchi:interacting_hopf} and is more convenient for our axiomatic perspective. The reader may be familiar with the combinatorial (\cite{Bonchi_Gadducci_Kissinger_Sobocinski_Zanasi_2022}) and topological (\cite{JoyalStreet77}) view on string diagrams: the three are ultimately equivalent. Additional details may be found eg. in~\cite{piedeleuzanasi2023}.} 

Given an SMT $(\Sigma, E)$, the prop $\freeprop{(\Sigma, E)}$ freely generated by $(\Sigma, E)$ has morphisms $m \to n$ the string diagrams of type $m \to n$ quotiented by $E$, with sequential and parallel composition defined as on the corresponding $\Sigma$-terms. We will typically regard $\freeprop{(\Sigma, E)}$ as a \emph{string diagrammatic calculus}, with the string diagrams being its (graphical) \emph{syntax}. The props defined `directly', as $\aff$ and $\affrel$, are regarded as \emph{semantic models} for diagrammatic calculi. When there is an isomorphism of props $\sem{\cdot}$ between $\freeprop{(\Sigma, E)}$ and another prop $\catname{C}$, we say that $\catname{C}$ is \emph{presented} (or axiomatised) by $(\Sigma,E)$. Because prop morphisms are identity-on-objects, to prove such result it suffices to prove that $\sem{\cdot}$ is full and faithful. In logical terms, we can phrase these requirements as \emph{soundness} (if $s = t$ in $\freeprop{(\Sigma, E)}$, then $\sem{s} = \sem{t}$), \emph{completeness} (if $\sem{s} = \sem{t}$ in $\catname C$, then $s = t$ in $\freeprop{(\Sigma, E)}$) and \emph{definability} (for every $f \colon m \to n$ in $\catname C$, there exists a $\Sigma$-term $s : m \to n$ with $\sem{s} = f$).

\emph{Graphical Affine Algebra} is based on the following operations:

\vspace{-.3cm}
\noindent\begin{minipage}{.5\textwidth}
\begin{equation}
	\tikzfig{bcomult} \quad \tikzfig{bcounit} \quad \tikzfig{wmult} \quad \tikzfig{wunit}  \quad \scalar{k} \quad \tikzfig{foot} \label{eq:genGAAfwd}
	\end{equation}
\end{minipage}
\begin{minipage}{.5\textwidth}
\begin{equation}
	\quad\qquad\tikzfig{bmult} \quad \tikzfig{bunit} \quad \tikzfig{wcomult} \quad \tikzfig{wcounit} \qquad \label{eq:genGAA}
\end{equation}
\end{minipage}
\vspace{.2cm}

Row~\eqref{eq:genGAAfwd} denotes basic linear transformations such as copying ($\tikzfig{bcomult}$), discarding ($\tikzfig{bcounit}$) addition ($\tikzfig{wmult}$), the constant zero ($\tikzfig{wunit}$), multiplication by scalar $k$ ($\scalar{k}$), and the constant one ($\tikzfig{foot}$). Row~\eqref{eq:genGAA} expresses the converse of these maps, regarded as relations (subspaces), with the `mirror' of $\tikzfig{scalar}$ and $\tikzfig{foot}$ derivable as follows: $\tikzfig{coscalar}  :=  \tikzfig{coscalar-def}$ and $\tikzfig{cofoot}  :=  \tikzfig{cofoot-def}$.
%\begin{equation}\label{eq:coscalar-cofoot}
%	\tikzfig{coscalar}  :=  \tikzfig{coscalar-def} \qquad  \tikzfig{cofoot}  :=  \tikzfig{cofoot-def} 
%\end{equation}

The affine fragment in Fig.~\ref{fig:gaa-axioms} describes the interaction between operations in~\eqref{eq:genGAAfwd}: $\tikzfig{wmult}$, $\tikzfig{wunit}$ form a commutative monoid, $\tikzfig{bcomult}$, $\tikzfig{bcounit}$ form a commutative comonoid, distributing over each other according to the laws of bimonoids. Furthermore, scalars and $\tikzfig{foot}$ distribute over the bimonoid, and there are laws expressing the field structure on $\mathbb{R}$. 

The relational fragment in Fig.~\ref{fig:gaa-axioms} describes additional laws involving operations in~\eqref{eq:genGAA} and $\tikzfig{coscalar}$, $\tikzfig{cofoot}$ defined above: black and white special Frobenius monoids, the equivalence between the black and the white compact closed structure arising from the Frobenius algebras (\textsf{cap}), the fact that scalars have inverses (\textsf{$r$-inv}, \textsf{$r$-coinv}), and an ``ex-falso quodlibet'' principle (\textsf{false}, where $\tikzfig{false}$ expressed the `impossible' condition that $1 = 0$). Note at this stage we may also prove that $\tikzfig{wcomult}$, $\tikzfig{wcounit}$ form a commutative comonoid and $\tikzfig{bmult}$, $\tikzfig{bunit}$ form a commutative monoid. We write $\GAAfwd$ for the prop freely generated by operations~\eqref{eq:genGAAfwd} and equations in the affine fragment. Similarly, we write $\GAA$ for the prop freely generated by operations~\eqref{eq:genGAAfwd}-\eqref{eq:genGAA}, and equations in the causal and relational fragments. These props present affine maps and affine relations respectively. 
\begin{proposition}[\cite{pbsz}]\label{prop:presentationsGLA} $\GAAfwd$ presents $\aff$ and  $\GAA$ presents $\affrel$.
\end{proposition}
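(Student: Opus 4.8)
The statement is established in \cite{pbsz}; here I sketch the proof strategy. Since prop morphisms are identity-on-objects, it suffices to show that the semantic interpretation functor $\sem{\cdot}$ is full and faithful, which by the discussion above amounts to three properties: \emph{soundness} (the functor is well-defined, i.e. every axiom of Fig.~\ref{fig:gaa-axioms} holds under $\sem{\cdot}$), \emph{definability} (fullness), and \emph{completeness} (faithfulness). Soundness is a direct verification: each affine-fragment equation is a routine identity of affine maps (bimonoid laws, field laws, unit interactions), and each relational-fragment equation is a routine identity of affine subspaces (the two special Frobenius structures, \textsf{cap}, \textsf{false}, and scalar (co)inverses).

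For $\GAAfwd$ presenting $\aff$: \emph{definability} follows from the classical presentation of the prop of real matrices $\mathsf{Mat}_{\mathbb{R}}$ by the Hopf-bimonoid-with-scalars fragment — copy, discard, add, zero, scalars — so that every matrix is assembled from these generators, together with the observation that the affine bias is obtained by precomposing a constant built from $\tikzfig{foot}$, scalars and $\tikzfig{wmult}$. \emph{Completeness} is the substantive part and is proved by a normal-form argument: using the affine-fragment axioms one shows that every diagram $c\colon m\to n$ rewrites to a canonical form consisting of a "matrix block" (in the bimonoid-with-scalars normal form for an $n\times m$ matrix) placed in parallel with a constant "bias block", the two recombined by addition. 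Since the matrix and the bias vector are exactly the data recovered from $\sem{c}$, two diagrams with equal denotation have identical normal forms, hence are provably equal.

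For $\GAA$ presenting $\affrel$: the proof builds on the Interacting Hopf Algebras theorem, which says that the sub-theory generated by~\eqref{eq:genGAAfwd}--\eqref{eq:genGAA} \emph{without} $\tikzfig{foot}$ — two interacting special Frobenius bimonoids over the field $\mathbb{R}$ — presents the prop $\linrel$ of linear relations; the scalar-inverse axioms \textsf{$r$-inv}, \textsf{$r$-coinv} are precisely what make $\mathbb{R}$ a field here. \emph{Definability} of an arbitrary affine subspace $v + V$ of $\mathbb{R}^{m+n}$ is then obtained by presenting $V$ as a linear relation and translating it by the point $v$, which is itself definable from $\tikzfig{foot}$; the empty affine space is realised using $\tikzfig{false}$ together with discards. \emph{Completeness} proceeds by reducing, using the relational and affine fragments, an arbitrary $\GAA$ diagram to a normal form exhibiting its denotation as either the impossible relation or a coset $v+V$ given by a chosen span/cospan presentation of $V$ together with a representative $v$; one then checks that two normal forms denoting the same affine relation are interconvertible, which — after factoring out the coset shift via the bimonoid-unit laws — reduces to the uniqueness of normal forms already established for $\linrel$.

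In both cases the main obstacle is \emph{faithfulness}: setting up the normal forms and proving their uniqueness entirely within the equational theory. The rewriting that brings a general diagram into matrix/bias (respectively coset-of-subspace) form requires careful bookkeeping with the bimonoid and Frobenius laws, and in the relational case one must additionally handle the degenerate cases — the empty affine space and the collapse triggered by \textsf{false} — uniformly, so that they do not escape the normal-form classification. By comparison, soundness and definability are mechanical.
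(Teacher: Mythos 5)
The paper does not prove this proposition itself: it is imported as background from \cite{pbsz}, so there is no in-paper argument to compare against. Your sketch accurately reflects the strategy of the cited work --- soundness plus a normal-form argument for definability and faithfulness, with the matrix case resting on the classical bimonoid-with-scalars presentation of $\mathbb{R}$-matrices extended by a bias built from $\tikzfig{foot}$, and the relational case resting on the Interacting Hopf Algebras presentation of $\linrel$ extended to affine cosets, with the empty relation handled via \textsf{false}. This is consistent with how the present paper uses the result (e.g.\ freely invoking $\GAA$-provability of linear-algebraic identities in Appendix~B), so nothing further is needed here.
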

Importantly, Proposition~\ref{prop:presentationsGLA} implies that, whenever we reason in the aforementioned calculi, any equivalence of matrices or subspaces represented by string diagrams are provable in the corresponding equational theory. This will allow us to use freely known facts of linear algebra (e.g. those recalled in Appendix~\ref{app:linearalgebra}) in diagrammatic reasoning whenever needed, allowing us to focus on the quadratic extension that is the topic of this paper. %We refer the reader to~\cite{DBLP:phd/hal/Zanasi15,pbsz} for a more in-depth discussion of the derivable equations of these fragments.

In preparation for our completeness theorems, we recall standard notation to represent matrices and subspaces as string diagrams of $\GAA$. First, for readability, we use thick wires for an arbitrary number of ingoing/outgoing wires in a string diagrams, as in $\tikzfig{cdiagram-thick} := \tikzfig{cdiagram}$. We use a similar convention to represent multiple instances of the same generator, e.g. $\tikzfig{bcomult-thick} := \tikzfig{bcomult-thick-def}$, $\tikzfig{bcomult-thick} := \tikzfig{bcomult-thick-def}$, and $\tikzfig{wunit-thick} := \tikzfig{wunit-thick-def}$. An $n \times m$ matrix $A$ may be represented by a string diagram $\tikzfig{matrixdiagram} \colon m \to n$, as follows: the wires on the left of $\tikzfig{matrixdiagram}$ stand for the columns of $A$, the wires on the right stand for the rows, and the left $j$-th wire is connected to the $i$-th wire on the right through a scalar $\scalar{k}$ if the coefficient $A_{ij}$ is $k$ (see~\cite{DBLP:phd/hal/Zanasi15} for details). In particular, up to the axioms of $\GAA$, two wires are disconnected when $A_{ij} = 0$, and are just a plain wire when $A_{ij} =1$. For instance,
\begin{equation}\label{eq:matrixform}
	\text{if }\ A =  { \begin{pmatrix}
			\small k_1 & \small 0 & \small 0 \\
			\small k_2 & \small 0 & \small 1 \\
			\small 1 & \small 0 & \small 0 \\
			\small 0 & \small 0 & \small 0
	\end{pmatrix}} \ \text{then }\ \tikzfig{matrixdiagram} := \tikzfig{ex-matrix} \ \ .\end{equation}
Here , $\tikzfig{matrixdiagram}$ represents $A$ in the sense that it is interpreted as $A \colon n \to m$ via the isomorphism $\GAAfwd \to \aff$, and as $\{ (\vecv,\vecw) \mid \vecw = A \vecv\}$ via the isomorphism $\GAA \to \affrel$. Similarly, a subspace $S \subseteq \mathbb{R}^{m+n}$ may be represented by the string diagram $\tikzfig{subspacediagram}$ in $\GAA$, for any $A$ such that $\im(A) = S$. We will also write this as $\tikzfig{subspacediagram_compressed}$. Well-definedness of this encoding is justified by \Cref{prop:subspace_encoding_uniqueness}, App.~\ref{app:linearalgebra}. Finally, given a scalar $c \in \R$, we sometimes abbreviate $\tikzfig{scalar-factor}$ with $\tikzfig{scalar-factor-notation}$.

\section{Axiomatising Gaussian Probability}\label{sec:gauss}

\subsection{Gaussian Probability}\label{sec:gaussprob}

Gaussian probability is an expressive self-contained fragment of probability theory. We recall what is strictly necessary here, referring to e.g.~\cite{bill1995prob} for further background. A random variable $X$ is called \emph{normally distributed} or \emph{Gaussian} with mean $\mu \in \mathbb{R}^n$ and variance $\sigma^2$ for $\sigma \in [0, \infty)$, if it has density function
$f(x) = \frac 1 {\sqrt{2\pi\sigma^2}} \exp \left(-\frac{(x-\mu)^2}{2\sigma^2}\right)$
with respect to the Lebesgue measure. This is typically written $X \sim \N(\mu,\sigma^2)$, meaning the distribution $\N(\mu,\sigma^2)$, also called Gaussian distribution, is the law of $X$. The \emph{standard normal distribution} $\N(0,1)$ is particularly significant, as  it generates any other normal distribution, in the following sense:
a random vector $\vec X$ is \emph{multivariate normal} if its distribution arises as $\vec X = A \cdot (Z_1, \ldots, Z_k) + \vec \mu$,
where $Z_1, \ldots, Z_k \sim \N(0,1)$ are independent variables and $A \in \R^{n \times k}$ is a matrix and $\vec \mu \in \R^n$. In other words, multivariable normal distributions arise as the pushforwards of standard normal distributions under affine maps. The distribution of $\vec X$ is fully characterised by its mean $\vec \mu$ and its covariance matrix $\Sigma = AA^T \in \mathbb{R}^{n \times n}$, for $A \in \mathbb{R}^{n \times k}$. Conversely, every positive semidefinite matrix $\Sigma$ is the covariance matrix of a unique Gaussian distribution $\N(\mu,\Sigma)$, supported on the affine subspace $\mu + \im(\Sigma)$.
A convenient way of building up complicated Gaussian distributions is by pushing forward existing distributions under affine maps: if $\N(\mu,\Sigma)$ is a Gaussian distribution and $A$ is a matrix, then the pushforward distribution $A_*\N(\mu,\Sigma)$ is $\N(A\mu,A\Sigma A^T)$. %From this formula it is easy to deduce that the standard normal distributions $\N(0,I_n)$ are \emph{rotation invariant}, i.e., if $R \in \R^{n \times n}$ is an orthogonal matrix ($RR^T = I_n$), then $R_*\N(0,I_n) = \N(0,RI_nR^T) = \N(0,I_n)$. The Herschel-Maxwell theorem is the surprising converse: if the components of a random vector are independent and their joint distribution is invariant under rotations, then each component is normally distributed. Rotation invariance will also be the crucial ingredient in our algebraic axiomatisation of Gaussians.
 If we aim to include Gaussian distributions and affine maps in the same categorical structure, it is natural to study \emph{Gaussian maps}, which are a simple kind of stochastic map consisting of an affine function with Gaussian noise, informally written $f(x) = Ax + \N(b,\Sigma)$, with $A \in \R^{n \times m}$, $b \in \R^n$ and $\Sigma \in \R^{n \times n}$.
%\begin{equation} f(x) = Ax + \N(b,\Sigma). \label{eq:gauss_prop_form} \end{equation}
To compose this with another Gaussian map $g(y) = Cy + \N(d,\Xi)$, we combine their Gaussian noise independently using the transformation rule $g(f(x)) = CAx + \N(Cb + d, C\Sigma C^T + \Xi)$.

\begin{definition}[{\cite[\S~6]{fritz-markov}}] \label{def:gauss}
	The prop $\gauss$ has as morphisms $m \to n$ the tuples $(A,b,\Sigma)$ with $A \in \R^{n \times m}$, $b \in \R^n$ and positive semidefinite $\Sigma \in \R^{n \times n}$. Composition and monoidal product are given by
	\begin{align*}
		(A,b,\Sigma) ; (C,d,\Xi) &= (CA,Cb+d,C\Sigma C^T + \Xi) \\
		(A_1,b_1,\Sigma_1) \oplus (A_2,b_2,\Sigma_2) &= \begin{pmatrix}
			\begin{pmatrix} \scriptstyle A_1 \!\!& \!\! \scriptstyle 0 \\ \scriptstyle 0 & \scriptstyle A_2 \end{pmatrix},
			\begin{pmatrix} \scriptstyle b_1 \\ \scriptstyle b_2 \end{pmatrix},
			\begin{pmatrix}\scriptstyle  \Sigma_1 \!\!&\!\! \scriptstyle \scriptstyle 0 \\ \scriptstyle 0 \!&\! \scriptstyle \Sigma_2 \end{pmatrix}
		\end{pmatrix}
	\end{align*}
\end{definition}

\noindent Gaussian distributions are the `states' $0 \to n$ in $\gauss$. There is a prop morphism $\vect \to \gauss$ representing an affine map $x \mapsto Ax + b$ as the Gaussian map $(A,b,0)$.

\subsection{Causal Graphical Quadratic Algebra and Completeness}
\label{sec:causal-GQA}

We now turn our attention to the axiomatic theory of $\gauss$. The prop $\GQAfwd$, which stands for \emph{causal} Graphical Quadratic Algebra, is freely generated by the following operations
\begin{equation}\label{eq:genGQAfwd}
	\begin{aligned}
		\tikzfig{bcomult} \quad \tikzfig{bcounit} \qquad \tikzfig{wmult} \qquad \tikzfig{wunit}  \qquad \scalar{k} \qquad \tikzfig{foot} \qquad \NN 
	\end{aligned}
\end{equation}
where $k$ ranges over $\mathbb{R}$, and equations as in the first two blocks of Fig.~\ref{fig:gaa-axioms} (linear and quadratic fragments), \emph{except} for axiom $\textsf{Z}$ (where the generator $\tikzfig{wcounit}$ appears, which is not among those in~\eqref{eq:genGQAfwd}). As in related work~\cite{fritz-markov,cho_jacobs}, the term `causal' indicates that the operations in~\eqref{eq:genGQAfwd} should be read as processing values from left-to-right, without constraining the values of inputs. Note we may regard $\GQAfwd$ as an extension of $\GAAfwd$ from Section~\ref{sec:GAA}. The extension is simply given by adding the operation~$\NN$ and the associated equations. It is also a fragment of full $\GQA$, which we will introduce later.
We now define an interpretation $\semG{\cdot}$ of $\GQAfwd$-generators as morphisms of $\gauss$. We use the notation $f(x) = Ax + \N(b,\Sigma)$ for such a morphism of type $m \to n$.
%with $A \in \mathbb{R}^{n \times m}$ the linear component, and $b \in \mathbb{R}^n$, $\Sigma \in \mathbb{R}^{n \times n}$ defining the stochastic component (\emph{cf.} Definition~\ref{def:gauss}). Also, we write $\nullm$ for the only element of $\R^0$ and $\zeromat$ for the unique $0 \times 0$ matrix.
\begin{equation*}
	\begin{gathered}
		\semG{\tikzfig{bcomult}}(x) \ =\  {x \choose x} + \N\left({0 \choose 0},\begin{pmatrix}
			0 & 0 \\
			0 & 0
		\end{pmatrix}\right) \qquad
		\semG{\tikzfig{bcounit}}(x)  \ =\  \nullm + \N(\nullm,\zeromat) 
		\\ 
		\semG{\tikzfig{wmult}}{x_1 \choose x_2} \ =\  \left( x_1 + x_2 \right) + \N(0,0)
		\qquad
		\semG{\tikzfig{wunit}}\nullm \ =\  (0) + \N(0,0)
		\\ 
		\semG{\scalar{k}}(x)   = (k \cdot x) + \N(0,0) \quad
		\semG{\tikzfig{foot}}\nullm  =  (0) + \N(1,0)
		\quad
		\semG{\NN}\nullm =  (0) + \N(0,1) 
	\end{gathered}
\end{equation*}

Note that the newly added operation $\NN$ is the only one with a non-trivial probabilistic component. 
The assignment extends\footnote{This holds provided that $\semG{\cdot}$ is sound, namely $c =d$ in GQA implies $\semG{c}=\semG{d}$. This can be readily verified on the axioms of Fig.~\ref{fig:gaa-axioms}, see Appendix~\ref{app:soudness}. For a comprehensive explanation of the universal property of freely generated props, which is used both here and in Section~\ref{sec:functors}, see~\cite{baez:props}.} to a prop morphism $\semG{\cdot}\colon$ $\GQAfwd \to \gauss$, by interpreting sequential and parallel compositions of string diagrams as the corresponding operations in $\gauss$. In fact, we can show that the equational theory of Fig.~\ref{fig:gaa-axioms} makes $\semG{\cdot}$ an isomorphism.

\begin{theorem}\label{thm:completenessgauss}
	$\gauss$ is presented by the fragment $\GQAfwd$.
\end{theorem}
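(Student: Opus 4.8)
The plan is to establish the three conditions for $\semG{\cdot}\colon \GQAfwd \to \gauss$ to be an isomorphism of props: soundness, definability, and completeness. Soundness is handled by the footnote: one checks directly that each axiom of the affine and quadratic fragments (minus \textsf{Z}) holds in $\gauss$ under $\semG{\cdot}$. For definability, I would show that every morphism $(A,b,\Sigma)\colon m\to n$ of $\gauss$ is in the image of $\semG{\cdot}$. Since $\Sigma$ is positive semidefinite, write $\Sigma = BB^T$ for some $B\in\R^{n\times k}$ (e.g.\ via Cholesky or eigendecomposition — a linear-algebra fact available via Proposition~\ref{prop:presentationsGLA}). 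Then the Gaussian map $f(x) = Ax + \N(b,\Sigma)$ factors as: take $m$ inputs, feed them through the matrix $A$ (a $\GAAfwd$-diagram); in parallel produce $k$ independent copies of $\NN$, push them through $B$, add the constant vector $b$ built from $\tikzfig{foot}$ and scalars; then add the two $n$-vectors pointwise using $\tikzfig{wmult}$. This diagram lies in $\GQAfwd$ and denotes $(A,b,\Sigma)$, using the composition/monoidal rules of Definition~\ref{def:gauss}.

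The substance of the theorem is completeness: if $\semG{c} = \semG{d}$ then $c = d$ in $\GQAfwd$. The strategy I would use is a \emph{normal form} argument. Using the $\GAAfwd$-theory (Proposition~\ref{prop:presentationsGLA}) together with the quadratic axioms \textsf{D} and \textsf{RI}, I would show every $\GQAfwd$-diagram $c\colon m\to n$ can be rewritten into a canonical shape: a fixed number $r$ of $\NN$ generators in parallel with the $m$ input wires, followed by a single $\GAAfwd$-morphism (a matrix) $M\colon m+r \to n$, possibly with an added constant. Concretely, one pushes all occurrences of $\NN$ to the left using naturality of copying/addition and the bimonoid laws, collecting the probabilistic content, while the remaining purely-linear part is normalised by $\GAAfwd$-completeness. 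The key role of \textsf{RI} (rotation invariance) is to let us \emph{reduce} the number of $\NN$'s when the associated matrix is not of full column rank, or more precisely to bring the "noise matrix" $B$ into a canonical form (e.g.\ making $BB^T$ determine $B$ up to the rotations that \textsf{RI} quotients out); axiom \textsf{D} ($\NN$ discarded is trivial) handles dropping redundant noise wires. Once two diagrams with equal denotation are both in normal form, equality of denotations says their underlying triples $(A,b,\Sigma)$ agree, hence $A = A'$, $b = b'$, and $BB^T = B'B'^T$; the first two give equality of the linear/constant parts by $\GAAfwd$-completeness, and the last gives $B' = B O$ for an orthogonal $O$ (polar-type uniqueness), which is exactly the ambiguity absorbed by \textsf{RI}. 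Therefore the normal forms coincide in $\GQAfwd$, and so do $c$ and $d$.

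The main obstacle I expect is the normal-form reduction, specifically proving that \textsf{RI} (stated only for $2\times 2$ planar rotations with a single angle $\varphi$) together with the other axioms suffices to realise \emph{arbitrary} orthogonal changes of the noise basis $B\mapsto BO$. Since every element of $\sorth(k)$ is a product of planar rotations (Givens rotations) and every element of $\orth(k)$ differs from $\sorth(k)$ by a reflection — and a reflection in one coordinate is $\scalar{-1}$, which interacts with $\NN$ via \textsf{D}-type reasoning since $\N(0,1)$ is symmetric — one must carefully diagrammatically decompose a general orthogonal transformation and verify each Givens factor can be inserted using \textsf{RI} applied to the appropriate pair of $\NN$-wires (embedded via identities on the others). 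Making this decomposition interact cleanly with the matrix part $M$, and ensuring the bookkeeping of constants and of the parallel-with-inputs structure goes through, is the delicate part; the rest is routine $\GAAfwd$-manipulation and the linear-algebra facts about positive semidefinite factorisation.
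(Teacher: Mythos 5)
Your proposal is correct and follows essentially the same route as the paper: definability via the factorisation $\Sigma = LL^T$, completeness via a normal form that pushes all $\NN$ generators to the left and normalises the residual affine part by $\GAAfwd$-completeness, and uniqueness of the normal form by extending \textsf{RI} to arbitrary orthogonal matrices through a Givens-rotation decomposition plus reflections (the paper's Propositions~\ref{prop:orthogonalinv}, \ref{prop:flip_invariance} and \ref{prop:encoding_uniqueness}). The "main obstacle" you single out is precisely the content of the paper's Proposition~\ref{prop:orthogonalinv}, which resolves it exactly as you sketch, via QR decomposition into Givens factors and a $\pm 1$ diagonal matrix.
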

\begin{proof}
	For definability, let $f \in \gauss(m,n)$ be given by $f(x) = Ax + \N(\mu,\Sigma)$. Then, for $\Sigma = LL^T$, where $L$ is lower triangular, 
		\begin{equation} \label{eq:gauss_normal_form}
	\semG{\tikzfig{gauss_definability_expanded}} = f.
	\end{equation}
For completeness, we may transform any string diagram of $\GQAfwd$ into the form \eqref{eq:gauss_normal_form}, see \Cref{prop:nf_gauss} in Appendix~\ref{app:completeness-causal-gqa}. This is a normal form, because we can read off the values of $(A,\mu,\Sigma)$ unambiguously from it (\Cref{prop:encoding_uniqueness}). Note a key part in uniqueness is played by the property that $\NN$ is invariant under arbitrary orthogonal matrices, a generalisation of \textsf{RI} (\Cref{prop:orthogonalinv}). Finally, $\semG{(A,\mu,\Sigma)} = \semG{(A', \mu',\Sigma')}$ implies $(A,\mu,\Sigma) = (A', \mu',\Sigma')$, yielding completeness.
\end{proof}

\section{Axiomatising Quadratic Problems}\label{sec:quadrel}

As mentioned in the introduction, Gaussian probability and quadratic problems are tightly related. We will now introduce \emph{quadratic relations} as a categorical formalism to express quadratic problems in a compositional way. We then give a presentation $\GQA$ for this category, and use this presentation to study the relationship with Gaussian probability in an algebraic fashion.

\subsection{Quadratic Problems as Quadratic Relations}\label{sec:backgroundQuadraticRelations}

We justify the introduction of quadratic relations in steps. The basic setup of an optimisation problem is: given a weight (or cost) function $f(x_1,\ldots, \allowbreak x_{m+n})$, compute the parameterised infimum $(x_1,\ldots,\allowbreak x_m) \mapsto \inf \{f(x_1,\ldots, \allowbreak x_{m+n}) : $ $(x_{m+1},\dots, \allowbreak x_{m+n}) \in \R^n\}$ over some variables. We can type such a weight function as $F : \R^m \times \R^n \to [0,+\infty]$, thereby formally specifying the `input' and `output' variables. How to compose such functions? We can see $F : \R^m \times \R^n \to [0,+\infty]$ as a \emph{weighted relation} the same way that an ordinary relation can be seen as $R : \R^m \times \R^n \to \{0,1\}$. This suggests a notion of composition: while relations compose as on the left below,
weighted relations compose by adding the weights and minimising over the shared variable\footnote{This analogy can be made rigorous with quantales~(\cite{lawvere-metric}). Our weighted relations are parametrised over the quantale $([0,+\infty],\geq,+)$.}, as on the right below:
\begin{equation} 
	(R;S)(x,z) = \bigvee_{y} R(x,y) \cdot S(y,z)\qquad (F;G)(x,z) = \inff y {F(x,y) + G(y,z)} \label{eq:weighted_rel_comp}
 \end{equation}
This way of building up optimisation problems is well-established, see e.g.~\cite{hanks2024modeling} for model-predictive control, \cite{rockafellar} for convex analysis (where weighted relations are called \emph{bifunctions}),~\cite{willems:oss} for control theory, and \cite{stein2023compositional} for a categorical perspective. In our developments, it is convenient to introduce the following notation, reminiscent of Iverson brackets~(\cite{rockafellar}), associating a weighted relation $\iv{\phi}$ to a formula $\phi$, denoting a linear equation between its free variables:
\begin{equation} \label{eq:iversonbrackets}
	\iv{\phi} \defeq 
		0 \quad\text{if $\phi$ is true}, \quad \infty \quad \text{otherwise}
\end{equation}
Note that $0$ denotes truth and $\infty$ falsity in this context! The indicator function of a set $A$ is given by $1_A(x) = \iv{x \in A}$.
\begin{example}\label{ex:compositionGQA}
	Consider the weighted relation $F : \R \times \R^2 \to [0,\infty]$ given by the constraint
	$F(c,(x,y)) = \iv{c = x + y}$, and let $G : \R^2 \times \R^0 \to [0,\infty]$ be $G((x,y),()) = \frac 1 2 (x^2+y^2)$. Then the composite $F ; G$ computes the parameterised quadratic problem $(F;G)(c,()) = \inff{} {\frac 1 2 (x^2 + y^2) : x + y = c } = \frac 1 4 c^2$. We will render this composition diagrammatically in $\GQA$, \emph{cf.}~\Cref{ex:compositionGQAdiag}.
\end{example}

We now focus on the class of weighted relations, called \emph{quadratic relations}, which correspond to quadratic problems: this is a type of optimisation problem involving quadratic objective functions and affine constraints. First, a \emph{quadratic function} $f : \R^n \to \R$ is a function that can be written as a multivariate polynomial of degree at most $2$. Examples of quadratic functions are the squared euclidean distance $||\vec x||^2$, $(x_1-1)x_2$, and $x_1^2-x_2$. A \emph{partial quadratic function} is allowed to assume the value $+\infty$ (indicating falsity/partiality) outside an affine subspace $M$. 

\begin{definition}[{{\cite[p.109]{rockafellar}}}]
	A \emph{partial quadratic function} $f : \R^n \to \exR$ is a function that can be written in form
	$f(x) = \langle x, \Sigma x\rangle + \langle b,x \rangle + c + \iv{x \in M}$
	where $M \subseteq \R^n$ is an affine subspace, $\Sigma \in \R^{n \times n}$ is a symmetric matrix, $b \in \R^n$, $c \in \R$, and $\langle -,- \rangle$ denotes the standard inner product on $\R^n$.
\end{definition}

Recall that a function $f : \R^n \to \exR$ is \emph{convex} if its epigraph $\{ (x,y) : y \geq f(x) \}$ is a convex subset of $\R^{n+1}$. An \emph{elementary convex partial quadratic function} is one in diagonal form $h(x) = \sum_i \lambda_i x_i^2$ with $\lambda_i \in [0,+\infty]$. The function $h$ is partial, because $h(x) = \infty$ whenever $\lambda_i = \infty$ and $x_i \neq 0$.

\begin{example}\label{ex:gausslogdesnity}
The negative log-density function of a Gaussian distribution on $\R^n$ is a partial convex quadratic function on $\R^n$. It is partial because it takes the value $+\infty$ (corresponding to density $0$) outside of its support subspace $S$. 
\end{example}

It turns out that nonnegative partial quadratic functions are automatically convex. Also, they are closed under the composition formula \eqref{eq:weighted_rel_comp}. We state a slightly more general proposition, namely closure under arbitrary constrained infima (see Appendix~\ref{app:quadraticproofs}).
\begin{proposition}\label{lemma:quadrel_infimization}
	If $M \subseteq \R^n \times \R^m$ is an affine subspace, and $g$ is a nonnegative partial quadratic function, then so is $f(x) = \inff {} { g(y) \mid (x,y) \in M }$.
\end{proposition}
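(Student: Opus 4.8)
The plan is to reduce the statement to a concrete normal-form computation for partial quadratic functions. First I would write $g$ in its defining form $g(y) = \langle y, \Sigma y \rangle + \langle b, y \rangle + c + \iv{y \in N}$ for a symmetric matrix $\Sigma$, vector $b$, scalar $c$, and affine subspace $N \subseteq \R^n \times \R^m$ (here I am writing the variable of $g$ as $y$ ranging over $\R^n \times \R^m$, split into the $x$-part and the genuinely minimised part). Intersecting $N$ with $M$ if necessary, I may as well assume the constraint $(x,y) \in M$ already incorporates the partiality domain of $g$, so that on the feasible set $g$ is an honest quadratic polynomial. The goal is then to show $f(x) = \inf\{ g(y) \mid (x,y) \in M \}$ is again of the form (symmetric quadratic) $+$ (linear) $+$ (constant) $+$ (Iverson bracket of an affine subspace), and that it is nonnegative, which is immediate since an infimum of nonnegative quantities is nonnegative (with $\inf \emptyset = +\infty$ still nonnegative, matching the Iverson convention).

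The key technical step is the minimisation itself. I would parametrise the affine subspace $M$ as $\{(x, y) : y = Ax + Ey' + d, \ x \in \pi(M)\}$ — more carefully, pick an affine section so that the fibre over a feasible $x$ is $\{ y_0(x) + Ey' : y' \in \R^k \}$ for some affine $y_0(x)$ and a fixed linear map $E$, with $x$ itself constrained to lie in the affine subspace $\pi(M)$ (the projection of $M$ to the $x$-coordinates). Substituting into $g$ turns the problem into minimising, over $y' \in \R^k$, a quadratic function of the form $\langle y', Q y' \rangle + \langle \ell(x), y' \rangle + q(x)$ where $Q = E^T \Sigma E$ is symmetric, $\ell(x)$ is affine in $x$, and $q(x)$ is quadratic in $x$. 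Since $g \geq 0$ on the feasible set, $Q$ must be positive semidefinite. For an unconstrained quadratic $\langle y', Q y' \rangle + \langle \ell, y' \rangle + q$ with $Q \succeq 0$, the infimum over $y'$ is $q - \tfrac14 \langle \ell, Q^+ \ell \rangle$ when $\ell \in \im(Q)$ (using the Moore–Penrose pseudoinverse $Q^+$), and $-\infty$ otherwise; but the $-\infty$ case cannot occur here because $g \geq 0$ forces the infimum to be $\geq 0$. Hence the feasibility condition $\ell(x) \in \im(Q)$ is a system of affine equations in $x$, cutting out a further affine subspace, and on that subspace $f(x) = q(x) - \tfrac14 \langle \ell(x), Q^+ \ell(x) \rangle$ is a genuine quadratic polynomial in $x$ (a composition of affine maps with the fixed symmetric form $Q^+$). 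Packaging everything: $f(x) = (\text{quadratic in } x) + \iv{x \in \pi(M) \text{ and } \ell(x) \in \im(Q)}$, which is exactly a partial quadratic function, and it is nonnegative.

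The main obstacle I anticipate is handling the degenerate and infinite cases cleanly rather than the generic calculation: when $\Sigma$ (or $Q$) is only positive semidefinite and not positive definite, the minimiser in $y'$ is non-unique and one must argue via the pseudoinverse / Schur-complement bookkeeping that the reduced function is still polynomial of degree $\leq 2$; and one must verify that the ``$\ell(x) \notin \im(Q)$'' situation genuinely contributes $+\infty$ and never $-\infty$, which is where nonnegativity of $g$ is essential. A secondary care point is the coexistence of the affine-partiality of $g$ with the affine constraint $M$: intersecting the two affine subspaces and checking that the resulting feasible $x$-set is again affine. I would present this by first doing the case $N = \R^n \times \R^m$ (no partiality in $g$) and $M$ an affine subspace, then remarking that partiality of $g$ is absorbed by replacing $M$ with $M \cap (\R^m \times N)$, since $\iv{y \in N}$ added to $g$ and then infimised is the same as restricting the constraint. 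The appeal to ``nonnegative partial quadratic functions are automatically convex'' (stated just before the proposition) is what lets us not worry separately about convexity of the output.
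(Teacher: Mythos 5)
Your proof is correct, but it takes a genuinely different route from the paper's. The paper leans entirely on \Cref{prop:quadrel_characterization}: it applies an affine change of coordinates to put $g$ into elementary form $||\vec{y_1}||^2 + \iv{\vec{y_3}=0}$, absorbs both the partiality constraint and the given constraint into a single new affine relation $M' = \{(x,\vec{y_1}) : (x,(\vec{y_1},\vec{y_2},0)) \in M\}$, and then recognises $f(x) = \inf\{||y||^2 : (x,y)\in M'\}$ as a nonnegative partial quadratic function by that same characterisation. The only explicit minimisation ever carried out is for the model objective $||y||^2$ (inside the proof of the characterisation), which has no linear term and a trivially positive semidefinite Hessian, so no pseudoinverse bookkeeping arises. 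You instead do the general computation head-on: parametrise the fibres of $M$, reduce to $\langle y', Qy'\rangle + \langle \ell(x), y'\rangle + q(x)$ with $Q = E^T\Sigma E \succeq 0$ (forced by nonnegativity of $g$), and complete the square with $Q^+$, using nonnegativity again to rule out the $-\infty$ branch when $\ell(x)\notin\im(Q)$. This is more self-contained---it does not presuppose the characterisation theorem---at the price of exactly the Schur-complement/pseudoinverse case analysis that the paper's normal-form detour is designed to avoid; both arguments are sound. Two minor observations: by your own nonnegativity argument the condition $\ell(x)\in\im(Q)$ holds automatically for every $x$ with nonempty fibre, so it cuts out no further subspace beyond $\pi(M)$ (your final formula is correct but the extra conjunct is redundant); and in the statement $g$ is a function of $y\in\R^m$ alone rather than of $(x,y)$, though your slightly more general setup subsumes this and changes nothing in the computation.
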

\begin{definition}[Quadratic Relations]\label{def:quadrel}
	The prop $\quadrel$ of quadratic relations has:
	\begin{itemize}[topsep=0pt,itemsep=-1ex,partopsep=1ex,parsep=1ex]
		\item morphisms $m \to n$ are non-negative partial quadratic functions $F : \R^m \times R^n \to [0,\infty]$, called \emph{quadratic relations}
		\item identities are indicator functions $\id_{n}(x,y) = \iv{x=y}$
		\item composition is by minimisation: for $F \colon m \to n$ and $G \colon n \to p$, define $F; G : m \to p$ as $(F;G)(x,z) := \inff y {F(x,y) + G(y,z)}$.
		%\begin{equation} (F;G)(x,z) = \inff y {F(x,y) + G(y,z)} \label{eq:quadrel_composition} \end{equation}
		\item the monoidal product is addition. Given $F_i \colon {m_i} \to {n_i}$ for $i=1,2$, then $F_1 \oplus F_2 \colon {m_1+m_2} \to {n_1+n_2}$ is defined by
		$ (F_1 \oplus F_2)((x_1,x_2),(y_1,y_2)) = F_1(x_1,y_1) + F_2(x_2,y_2). $
	\end{itemize}
\end{definition}

\noindent Composition in $\quadrel$ is well-defined by \Cref{lemma:quadrel_infimization}, since we can write  $(F;G)(x,z) =$ $\text{inf} \{H(\vec y) :$ $(x,z,\vec y) \in M \}$ where $H(y_1,y_2,y_3,y_4) = F(y_1,y_2) + G(y_3,y_4)$ and $M = \{ (x,z,y_1,y_2,y_3,y_3) : x = y_1, y_2 = y_3, z = y_4 \}$. 
%As remarked before, quadratic relations are a subcategory of weighted relations over the quantale $Q = ([0,\infty],\geq,+)$ in which $0$ represents truth and $\infty$ falsity. Note that $Q$ is isomorphic to $([0,1],\leq, \cdot)$ under the isomorphism $x \mapsto e^{-x}$.  As is common for categories of (weighted) relations, $\quadrel$ is a hypergraph category~\cite{fong2019hypergraph}. \\
To every affine relation $M \subseteq \R^m \times \R^n$, we can associate the quadratic relation $1_M$ given by its indicator function. This assignment respects composition, i.e. yields a prop morphism $\affrel \to \quadrel$.

\subsection{Graphical Quadratic Algebra}

We now introduce the diagrammatic calculus presenting $\quadrel$. The prop $\GQA$ is freely generated by the following operations and the equations in Figure~\ref{fig:gaa-axioms}, for $k \in \mathbb R$.
\begin{eqnarray*}
	\tikzfig{bcomult} \quad \tikzfig{bcounit} \quad \tikzfig{wmult} \quad \tikzfig{wunit}  \quad \scalar{k} \quad \tikzfig{foot} \quad \NN \quad
	\tikzfig{bmult} \quad \tikzfig{bunit} \quad \tikzfig{wcomult} \quad \tikzfig{wcounit} 
\end{eqnarray*}
\noindent We may regard $\GQA$ as an extension of $\GAA$ from Section~\ref{sec:GAA}, where we add $\NN$, and the associated equations. As we did for $\coscalar$ and $\tikzfig{cofoot}$ we can define the `mirror image' of $\NN$ as $\tikzfig{co-NN} :=  \tikzfig{co-NN-def}.$
We now define an interpretation of $\GQAfwd$-generators as morphisms of $\quadrel$. To each operation $c$ of type $m \to n$ we assign a quadratic relation $\sem{c}$ of the same type, i.e. a nonnegative partial quadratic function $\mathbb{R}^{m+n} \to [0,\infty]$. We make use of the notation $\iv{\phi}$ introduced in~\eqref{eq:iversonbrackets}. %	\tikzfig{bcomult} \qquad \tikzfig{bcounit} \qquad \tikzfig{wmult} \qquad \tikzfig{wunit}  \qquad \scalar{k} \label{eq:genHopf} \\		\tikzfig{bmult} \qquad \tikzfig{bunit} \qquad  \qquad \tikzfig{foot} \qquad \NN
\begin{gather*}
		\sem{\tikzfig{bcomult}}\left({x},{x_1 \choose x_2}\right) = \iv{x =x_1 = x_2}
		\quad
		\sem{\tikzfig{bcounit}}\left({x}, ()\right) = 0 
		\quad
		\sem{\tikzfig{bunit} }\left((), {x}\right) = 0
		\\[7pt]
		\sem{\tikzfig{wmult}}\left({x_1 \choose x_2},{y}\right)= \iv{y = x_1+x_2} 
		\quad
		\sem{\tikzfig{wunit}}\left((), {x}\right) = \iv{x = 0}
		\quad
		\sem{\tikzfig{wcounit}}\left({x}, ()\right) = \iv{x = 0} 
		\\[7pt]
		\sem{\tikzfig{foot}}\left((), {x}\right)= \iv{x = 1} 
		\quad
		\sem{\scalar{k}}\left({x,y}\right)= \iv{y = k \cdot x} 
		\quad
		\sem{\NN}\left((), {x}\right) = \frac{1}{2}x^2
\end{gather*}
The interpretation for $\tikzfig{bmult}$, $\tikzfig{bunit}$, $\tikzfig{wcomult}$, and $\tikzfig{wcounit}$ is defined symmetrically, e.g. $\sem{\tikzfig{bmult}}\left({x_1 \choose x_2},{x}\right)= \iv{x =x_1 = x_2}$. Also, except for the new generator $\NN$, it extends conservatively the interpretation of these operations in $\affrel$ given in~\cite{pbsz} (see \Cref{prop:faithfulconservative} below). As for $\GQAfwd$, by freeness of $\GQA$ the interpretation yields a prop morphism  $\sem{\cdot}\colon \GQA \to \quadrel$.

\begin{example}\label{ex:compositionGQAdiag}
	Returning to \Cref{ex:compositionGQA}, $F = \sem{\Wcomult}$, $G = \sem{\tikzfig{doublecoNN}}$, and $\sem{\tikzfig{leastsquares_exampleLHS}} = F;G$. We can solve $(F;G)(c,()) = \inff{} {\frac 1 2 (x^2 + y^2) : x + y = c } = \frac 1 4 c^2$ by equational reasoning in $\GQA$. The derivation, which is the mirror image of the one given for $\tikzfig{add_example_3short}$ in \Cref{sec:intro}, proves  $\tikzfig{leastsquares_example}$ (observe that $\sem{\tikzfig{leastsquares_exampleRHS}}(c,()) = \frac 1 4 c^2$).
\end{example}

\subsection{Completeness of $\GQA$ for $\quadrel$}
\label{sec:gqa-completeness}

In this section we establish that $\GQA$ presents $\quadrel$. For this, we need to prove soundness, completeness and definability. 
%As discussed in \Cref{sec:gqa}, for each characterisation we need to prove soundness, completeness, and definability. 
Soundness is a routine check on the axioms (Proposition~\ref{prop:soundness-quadrel}). We focus on definability and completeness.

First, in a category where objects are equipped with Frobenius monoids, questions about arbitrary morphisms  can be reduced to questions about states, i.e. morphisms $0 \to n$, as follows: for a morphism $f : m \to n$, its `name' $\lceil f \rceil : 0 \to m \oplus n$ is defined as
\begin{equation} 
	\tikzfig{name} \label{eq:name} 
\end{equation}
\noindent The assignment $f \mapsto \lceil f \rceil$ is bijective -- every morphism can be recovered from its name by bending back the top wire, composing with $\tikzfig{blackcap-thick}$ and using axiom $\bullet$-\textsf{fr} and \textsf{coun}.

It will be useful to introduce a subprop $\GQAfwdex$ of $\GQA$, defined as follows. 
\begin{definition}\label{def:gqafwdex}
	The prop $\GQAfwdex$ is freely generated as $\GQAfwd$ with the addition of the generator $\tikzfig{bunit}$ and equations, for $k \in \mathbb{R}\!\setminus\! \{0\}$
	\begin{align*}
		\tikzfig{ax/bone-black}\; \myeq{} \; \tikzfig{empty-diag} \qquad \tikzfig{ax/TI-left} \;\myeq{TI}\;  \tikzfig{ax/TI-right} \qquad  \tikzfig{ax/bunitscalar}  \; \myeq{SI}  \; \tikzfig{bunit} \quad 
	\end{align*}
\end{definition}
Diagrams in $\GQAfwdex$ are not only a useful intermediary step in the proof of completeness; it turns out they axiomatise the \emph{extended Gaussians} of~\cite{stein2023gaussex,stein2023compositional}, which can be thought of intuitively as Gaussian maps that incorporate a form of (non-probabilistic) nondeterminism, modelled by the extra generator $\Bunit$ (see Appendix~\ref{app:extendendgauss} for details). In this paper, we focus on $\GQA$, and leave further exploration of $\GQAfwdex$ to future work.

A fundamental piece of the completeness proof involves showing that any quadratic problem, represented as a string diagram of $\GQA$ with no inputs, can be turned into a string diagram of the fragment $\GQAfwdex$. %By analogy with Graphical Linear Algebra, where subspaces can be turned into signal flow graphs~\cite{BonchiSZ17},  we may regard this as a \emph{realisability} procedure: in a sense, it gives an `operational' understanding of an optimisation problem as a stochastic process. 
 This crucial step is Theorem~\ref{prop:cond_elimination} below, an elimination procedure which, given a state in $\GQA$, iteratively removes all occurrences of conditioning on some variable being zero ($\Wcounit$). 
%, thus turning the state into a string diagrams in the fragment $\GLAANQ$. This procedure has independent interest, because states in $\GLAANQ$ can be interpreted as extended Gaussian distributions. 
The elimination procedure can therefore be understood as symbolically conditioning an (extended)
%extended 
Gaussian to obtain an explicit posterior, or as solving a quadratic problem via QR decomposition.
\begin{restatable}{theorem}{condelimination}
	\label{prop:cond_elimination}
		Let $M : 0 \to n$ be a string diagram in $\GQA$. Then there exists a string diagram $M'$ in $\GQAfwdex$, and a `scalar' $\alpha : 0 \to 0$, such that $M = M' \oplus \alpha$ is derivable in $\GQA$. The scalar is of the form $\tikzfig{smallscalar}$ with $c \in [0,\infty)$, or $\tikzfig{infty}$.
\end{restatable}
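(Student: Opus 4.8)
The plan is to proceed by structural induction on the number of occurrences of the conditioning generator $\Wcounit$ in $M$, showing that each occurrence can be eliminated at the cost of splitting off a scalar of the stated form. The base case, when $M$ contains no $\Wcounit$, is immediate: $M$ is already a string diagram in $\GQAfwdex$ (it uses only the generators of $\GQAfwd$ together with possibly $\Bunit$), and we take $\alpha$ to be the empty scalar $0$.

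For the inductive step, I would first use the fact that over Frobenius-monoid-equipped objects, any diagram can be put in a convenient "name"/normal shape: by \Cref{prop:nf_gauss}-style reasoning inside $\GQAfwdex$ together with the relational fragment, rewrite $M$ so that a chosen occurrence of $\Wcounit$ is applied to a single wire that carries (via a composite of $\GQAfwd$-generators and $\NN$'s) a generalised Gaussian/extended-Gaussian state together with the other outputs. Concretely, isolate the situation as $M = (M_0 \oplus \id_n)\,;\,(\Wcounit \oplus \id_n)$ up to the Frobenius/compact-closed bookkeeping of~\eqref{eq:name}, where $M_0 : 0 \to 1 \oplus n$ is a diagram in $\GQAfwdex$. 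The heart of the argument is then a local lemma: conditioning an extended Gaussian on one coordinate being zero again yields an extended Gaussian times a scalar. This is exactly the QR-decomposition / Gaussian-conditioning computation. Using the normal form for $\GQAfwdex$-diagrams (extended Gaussians, \`a la \cite{stein2023gaussex}), the conditioned coordinate is an affine-linear combination $\sum_i a_i z_i + b$ of standard-normal generators plus the nondeterministic $\Bunit$; I would apply the rotation-invariance axiom \textsf{RI} (more precisely its orthogonal-invariance generalisation \Cref{prop:orthogonalinv}) to rotate the $\NN$-block so that this combination is supported on a single fresh variable, then discharge the $\Wcounit$ against that variable. Three cases arise, matching the statement: if the combination is genuinely random ($a_i \neq 0$ for some $i$), conditioning on it being zero contributes a Gaussian-density scalar $\tikzfig{smallscalar}$ with $c \in [0,\infty)$ and leaves an extended Gaussian; if it is a nonzero deterministic constant ($b \neq 0$, all $a_i = 0$, no $\Bunit$), conditioning produces the contradictory scalar $\tikzfig{infty}$ via axiom \textsf{false}; if it is identically $0$ (or freely nondeterministic via $\Bunit$), the $\Wcounit$ is absorbed using the bimonoid/Frobenius laws and \textsf{SI}/\textsf{TI}, contributing the trivial scalar. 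In all cases the resulting diagram $M_0'$ lies in $\GQAfwdex$ and has one fewer $\Wcounit$ than $M$, and scalars compose: $\alpha' \oplus \alpha''$ is again of the form $\tikzfig{smallscalar}$ with $c \in [0,\infty)$ (sum of the $c$'s) or $\tikzfig{infty}$ (if either summand is). Applying the induction hypothesis to $M_0'$ and collecting scalars finishes the proof.

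The main obstacle I anticipate is the local lemma — carrying out the one-step conditioning entirely by equational reasoning in $\GQA$ rather than semantically. This requires (i) having a robust normal form for $\GQAfwdex$-diagrams so that "the conditioned wire is an affine combination of $\NN$'s and $\Bunit$" is a derivable rearrangement, not just a semantic truth, and (ii) managing the case analysis on whether the relevant linear combination is random, a nonzero constant, or zero, each time producing the scalar in precisely the claimed normal form and verifying it is closed under $\oplus$. The orthogonal-invariance step is where \textsf{RI} does the real work: it lets us "diagonalise" the $\NN$-block so the $\Wcounit$ sees a single standard normal, at which point $\int \tfrac{1}{2}x^2\,\delta(x=0)$-type reasoning becomes the purely diagrammatic identity extracting $\tikzfig{smallscalar}$. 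Handling the $\Bunit$-nondeterminism interacting with conditioning (the "$0 \cdot$ nondeterministic $= 0$" degenerate case) is a fiddly but routine bimonoid computation. Once the local lemma is in place, the induction is bookkeeping.
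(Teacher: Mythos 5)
Your proposal is correct and follows essentially the same route as the paper: induction on the number of occurrences of $\Wcounit$, a rearrangement so that the conditioned wire carries an affine combination of $\NN$'s, $\Bunit$'s and constants, and a three-way case split (nondeterministic / genuinely random / constant) in which orthogonal invariance (the generalisation of \textsf{RI}) collapses the random case to a single standard normal, the initialisation principle extracts the scalar $c \in [0,\infty)$, and \textsf{false} yields $\infty$ in the contradictory case. The one wrinkle is the order of your induction: you assert that the residual diagram $M_0$ feeding the chosen $\Wcounit$ already lies in $\GQAfwdex$ while also applying the inductive hypothesis to a diagram with one fewer $\Wcounit$ — either apply the hypothesis first so the local conditioning lemma really does see a $\GQAfwdex$ state, or (as the paper does) work only with the affine matrix block feeding the chosen $\Wcounit$ and park the remaining conditionings on separate output wires, which sidesteps the need for the full extended-Gaussian normal form at that stage.
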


\begin{theorem}\label{thm:presentation_quadrel}
	The prop $\quadrel$ is presented by $\GQA$.
\end{theorem}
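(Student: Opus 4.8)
The plan is to prove that the prop morphism $\sem{\cdot}\colon \GQA\to\quadrel$ is an isomorphism. Since prop morphisms are identity-on-objects, this reduces to \emph{soundness}, \emph{definability} and \emph{completeness}. Soundness is Proposition~\ref{prop:soundness-quadrel}, so $\sem{\cdot}$ is a well-defined functor; it remains to show it is full (definability) and faithful (completeness).

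\textbf{Definability.} First I would reduce to \emph{states}. By the name construction~\eqref{eq:name}, every $F\colon m\to n$ of $\quadrel$ is recovered from $\lceil F\rceil\colon 0\to m+n$ by bending wires around the black cup and cap; these are images of $\GQA$ diagrams and $\sem{\cdot}$ is sound, so it suffices to exhibit, for each nonnegative partial quadratic function $g\colon\R^k\to[0,\infty]$, a diagram $c$ with $\sem{c}=g$. Writing $g(x)=\langle x,\Sigma x\rangle+\langle b,x\rangle+\gamma+\iv{x\in M}$ and putting $M=x_0+V$, nonnegativity of $g$ forces $\Sigma|_V$ to be positive semidefinite with the (post-translation) linear term in $\im(\Sigma|_V)$; completing the square and diagonalising $\Sigma|_V$ in an orthonormal basis of $V$ presents $g$ as (an affine relation) $\,;\,$ (a ``diagonal'' state $\bigoplus_i s_i$, each $s_i$ a scaled copy of $\NN$, of $\Wunit$, or of $\Bunit$) $\,\oplus\,$ (the constant $\gamma\ge 0$, realised e.g. by $\tikzfig{foot}\,;\,\scalar{\sqrt{2\gamma}}\,;\,\coNN$). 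The affine relation is definable by Proposition~\ref{prop:presentationsGLA}, and the remaining pieces are generators, so $g$ — and hence $F$ — is definable.

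\textbf{Completeness.} Suppose $\sem{c}=\sem{d}$ for $c,d\colon m\to n$ in $\GQA$. Using~\eqref{eq:name} and its bijectivity together with soundness, this is equivalent to $\sem{\lceil c\rceil}=\sem{\lceil d\rceil}$, and deriving $\lceil c\rceil=\lceil d\rceil$ suffices; so we may assume $c,d\colon 0\to k$ are states. Apply Theorem~\ref{prop:cond_elimination} to each: $c=M_c\oplus\alpha_c$ and $d=M_d\oplus\alpha_d$ are derivable in $\GQA$, with $M_c,M_d$ in $\GQAfwdex$ and $\alpha_c,\alpha_d$ scalars of value in $[0,\infty)$ or $\infty$. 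By soundness $\sem{M_c}+k_c=\sem{M_d}+k_d$ as functions $\R^k\to[0,\infty]$, with $k_c,k_d\in[0,\infty]$. The semantics of $\GQAfwdex$-diagrams are exactly the extended Gaussians — nonnegative partial quadratic functions that are not identically $\infty$ and whose infimum is $0$ — so taking infima yields $k_c=k_d$ and then $\sem{M_c}=\sem{M_d}$ when this common value is finite; the degenerate case $k_c=k_d=\infty$ is dealt with separately using the ex-falso axiom \textsf{false}, which makes $X\oplus\tikzfig{infty}=Y\oplus\tikzfig{infty}$ derivable for any $X,Y$ of equal type. Equality of values also gives $\alpha_c=\alpha_d$ syntactically, since Theorem~\ref{prop:cond_elimination} returns $\alpha$ in a fixed normal form determined by its value. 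Finally, completeness for the causal fragment $\GQAfwdex$ and faithfulness of the induced embedding $\gaussex\hookrightarrow\quadrel$ (Appendix~\ref{app:extendendgauss}, building on Theorem~\ref{thm:completenessgauss}) upgrade $\sem{M_c}=\sem{M_d}$ to $M_c=M_d$ in $\GQAfwdex$, hence in $\GQA$. Therefore $c=M_c\oplus\alpha_c=M_d\oplus\alpha_d=d$ in $\GQA$.

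\textbf{Main obstacle.} All the genuinely hard work is already encapsulated in Theorem~\ref{prop:cond_elimination} (the conditioning-elimination / QR-decomposition procedure) and in the completeness results for the causal fragments; granting these, the delicate point here is the bookkeeping that cleanly separates the extended-Gaussian summand $M$ from the scalar $\alpha$ — in particular, arranging that the additive constant is canonically carried by $\alpha$ so that semantic equality of scalars becomes syntactic equality, and correctly treating the degenerate case $\alpha=\infty$ via \textsf{false}. For definability the only subtlety is invoking the classification of nonnegative partial quadratic functions (positive semidefiniteness of the restricted Hessian after completing the square) so that the affine part can be offloaded to $\GAA$.
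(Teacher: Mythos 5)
Your proposal is correct and follows essentially the same route as the paper: reduce to states via names, obtain definability from the Sylvester-type classification of nonnegative partial quadratic functions (Proposition~\ref{prop:quadrel_characterization}), and obtain completeness by applying the conditioning-elimination theorem and then the normal-form/uniqueness results for $\GQAfwdex$ states (your appeal to faithfulness of $\gaussex \to \quadrel$ is just a repackaging of Propositions~\ref{prop:nf_gaussex}, \ref{prop:gaussex_semantics_nf} and \ref{prop:encoding_uniqueness}, and the relevant completeness result is Theorem~\ref{thm:completenessgaussex} rather than Theorem~\ref{thm:completenessgauss}). If anything, your treatment of the scalar summand --- extracting it via the infimum-equals-zero property of extended Gaussians and handling the degenerate $\infty$ case via \textsf{false} --- is more explicit than the paper's.
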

\begin{proof}
	For definability, note that by taking names \eqref{eq:name}, it suffices to show that all states $M : 0 \to n$ (i.e. nonnegative partial quadratic functions) are definable. By \Cref{prop:quadrel_characterization}, we can always apply an affine change of coordinate, \emph{i.e.}, some invertible affine map $g$, such that  $g(\sem{M}(\vec{x_1},\vec{x_2},\vec{x_3})) = \frac 1 2 ||\vec{x_1}||^2 + \iv{\vec{x_3}=0}$, which is definable as the state $\raisebox{-0.25cm}{\tikzfig{elem_definability}}$. Hence, $\sem{M} = \tikzfig{definable}$.
	
	For completeness, it suffices to show that for all states $M_1,M_2 \colon 0 \to n$ of $\GQA$, if $\sem{M_1} = \sem{M_2}$ in $\quadrel$, then $M_1 = M_2$ is derivable in $\GQA$. The steps of the proof are as follows (see Appendix~\ref{sec:appendix_elimination} for details). First, using \Cref{prop:cond_elimination}, we transform each $M_i$ (for $i=1,2$) into the form $\tikzfig{scalars_oneline}$, where $M_i'$ lies in $\GQAfwdex$. Then, to each $M_i'$, we apply the normalisation procedure for states in $\GQAfwdex$ (\Cref{prop:nf_gaussex}), which brings them to the form $N_i :=\scalebox{0.8}{\tikzfig{gaussex_state_nf}}$, where $D_i$ is a vector subspace, $\mu_i \in {D_i}^\bot$, $L_i$ is lower triangular, and $\im(L_i {L_i}^T) \subseteq {D_i}^\bot$. By~\Cref{prop:gaussex_semantics_nf}), $\sem{N_1}=\sem{N_2}$ if and only if $D_1=D_2$, $\mu_1=\mu_2$, and $L_1L_1^T=L_2L_2^T$. Using \Cref{prop:encoding_uniqueness}, we can assume that $L_1 = L_2$ in the two diagrams. Then, by appealing to the completeness of $\GAA$, we can show that $N_1=N_2$. Finally, since $M_i=N_i$, we can conclude $M_1 = M_2$, as we wanted.
\end{proof}

\section{Applications}\label{sec:cases-studies}

\subsection{Functorial Transformations}\label{sec:functors}

A powerful application of our presentations for $\gauss$ and $\quadrel$ is the ease with which we can now define functors on these categories, as mappings on generators preserving the relevant equations. By the universal property of a presentation, such an assignment extends uniquely to a morphism of props. The first functor we consider clarifies the connection between Gaussian probability and quadratic problems. Relying on the isomorphisms $\gauss \cong \GQAfwd$ and $\quadrel \cong \GQA$, we can define a functor $L : \gauss \to \quadrel$ simply by mapping the generators of $\GQAfwd$ to the same generators in $\GQA$. Note that $L$ is functorial by construction, even though composition operations in $\gauss$ (integration/pushforward) and composition in $\quadrel$ (infimisation) are at first glance very different. It remains to compute how the functor $L$ acts concretely on a Gaussian map (see Appendix~\ref{app:functoriality} for details).
\begin{proposition}\label{prop:logdensity}
	The functor $L$ sends a Gaussian map $f(x) = Ax + \N(\mu, \Sigma)$ to the quadratic relation given by its negative conditional log-density. That is,
	%\begin{align*}
		$Lf(x,y)$ is defined as  $\frac 1 2 \langle (y-Ax - \mu),\Omega(y-Ax - \mu) \rangle + \iv{(y-Ax - \mu) \in \im(\Sigma)} $,
	%\end{align*}
	where $\Omega$ is a generalised inverse of $\Sigma$, i.e. satisfies $\Sigma\Omega\Sigma = \Sigma$. 
\end{proposition}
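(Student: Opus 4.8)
The plan is to compute $L$ on the normal-form diagram for a Gaussian map established in the proof of Theorem~\ref{thm:completenessgauss}, namely~\eqref{eq:gauss_normal_form}, and then interpret the resulting $\GQA$-diagram in $\quadrel$. Since $L$ is defined as the identity-on-generators prop morphism $\GQAfwd \to \GQA$ (under the isomorphisms $\gauss \cong \GQAfwd$ and $\quadrel \cong \GQA$), it suffices to take the diagram $D$ witnessing $\semG{D} = f$ for $f(x) = Ax + \N(\mu,\Sigma)$ — built from $\tikzfig{bcomult}$, $A$ as a matrix diagram, $\mu$ via $\tikzfig{foot}$, and a block of $\NN$'s precomposed with a lower-triangular $L$ where $\Sigma = LL^T$ — and compute $\sem{D}$ in $\quadrel$ directly from the generator semantics. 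First I would unfold $\sem{\NN} = \tfrac12 x^2$ across the $k$ copies, so the block of $k$ independent $\NN$'s followed by the matrix $L$ (of shape $n\times k$) contributes, as a state $0 \to n$, the quadratic relation $y \mapsto \inf\{\tfrac12\|z\|^2 : y = Lz\}$. Then precomposing with the affine part $x \mapsto Ax + \mu$ (encoded by $\tikzfig{bcomult}$, the matrix $A$, $\tikzfig{foot}$ and $\tikzfig{wmult}$) shifts the argument, giving $Lf(x,y) = \inf\{\tfrac12\|z\|^2 : y - Ax - \mu = Lz\}$.

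The second step is to identify this infimum with the claimed closed form. The constraint $y - Ax - \mu = Lz$ is solvable in $z$ exactly when $y - Ax - \mu \in \im(L) = \im(\Sigma)$ (using $\im(LL^T) = \im(L)$ for real $L$), which accounts for the Iverson term $\iv{(y-Ax-\mu)\in\im(\Sigma)}$. On that subspace, minimising $\tfrac12\|z\|^2$ over the affine fibre $\{z : Lz = v\}$ (with $v := y - Ax - \mu$) is a standard least-norm problem: the minimiser is $z = L^+ v$ with $L^+$ the Moore–Penrose pseudoinverse, and the minimum value is $\tfrac12\|L^+v\|^2 = \tfrac12\langle v, (L^+)^T L^+ v\rangle = \tfrac12\langle v, (LL^T)^+ v\rangle = \tfrac12\langle v, \Sigma^+ v\rangle$. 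Finally I would observe that the statement only requires \emph{some} generalised inverse $\Omega$ with $\Sigma\Omega\Sigma=\Sigma$: restricted to $\im(\Sigma)$, all such $\Omega$ agree with $\Sigma^+$ as quadratic forms, because $\langle v,\Omega v\rangle$ for $v \in \im(\Sigma)$ depends only on the action of $\Omega$ on $\im(\Sigma)$, and $\Sigma\Omega\Sigma=\Sigma$ pins this down (one can write $v = \Sigma w$ and check $\langle \Sigma w, \Omega \Sigma w\rangle$ is symmetric and independent of the choice via $\Sigma\Omega\Sigma=\Sigma$, or symmetrise $\Omega$ without loss of generality). This matches $Lf(x,y) = \tfrac12\langle(y-Ax-\mu),\Omega(y-Ax-\mu)\rangle + \iv{(y-Ax-\mu)\in\im(\Sigma)}$, which one also recognises as the negative log-density of $\N(Ax+\mu,\Sigma)$ up to the normalising constant, justifying the name.

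The main obstacle is bookkeeping rather than conceptual: one must verify that the $\quadrel$-semantics of the specific normal-form diagram~\eqref{eq:gauss_normal_form} really does unfold to the nested infimum above, tracking how $\tikzfig{bcomult}$, the matrix encodings, and $\tikzfig{foot}$ interact under the infimum-composition of $\quadrel$ — in particular that copying $x$ and feeding it through $A$ before adding $\mu$ and the noise corresponds precisely to the substitution $v \mapsto y - Ax - \mu$. A clean way to organise this is to first treat the pure-noise state $0\to n$ (the $L$-image of a $k$-fold $\NN$), then compose on the left with the deterministic affine map $\sem{(A,\mu,0)}$, using that $\sem{\cdot}$ is a prop morphism so that $\sem{D} = \sem{(A,\mu,0)}\,;\,\sem{\text{noise}}$ and the composition rule~\eqref{eq:weighted_rel_comp} does the rest. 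The pseudoinverse computation and the generalised-inverse invariance are standard linear algebra (cf. Appendix~\ref{app:linearalgebra}) and can be cited rather than redone.
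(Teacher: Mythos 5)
Your proposal is correct and follows essentially the same route as the paper: reduce to the normal form \eqref{eq:gauss_normal_form}, compute the $\quadrel$-semantics of the $\NN$-block composed with $L$ as a least-norm problem solved by the Moore--Penrose pseudoinverse (the paper's \Cref{prop:interp-gauss}), handle the affine shift by composition (\Cref{prop:sem-scalar}), and note that all generalised inverses of $\Sigma$ agree as quadratic forms on $\im(\Sigma)$. The paper merely packages these steps as the cited appendix lemmas (\Cref{prop:gauss_semantics}, \Cref{prop:gaussex_semantics_nf}) rather than unfolding them inline, and your observation that the final formula depends only on $\Sigma=LL^T$ supplies the same well-definedness that the paper gets from \Cref{prop:gaussex_semantics_nf}.
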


\begin{proposition}\label{prop:faithfulconservative}
	The functor $L : \gauss \to \quadrel$ is faithful and $\GQA$ is conservative over $\GQAfwd$, meaning that $s = t$ in $\GQA$ implies $s = t$ in $\GQAfwd$.
\end{proposition}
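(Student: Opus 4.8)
The statement bundles together two claims: faithfulness of $L\colon \gauss\to\quadrel$, and conservativity of $\GQA$ over $\GQAfwd$. These are in fact equivalent, and the plan is to derive one from the other and then prove the easier one. Concretely, suppose $s,t\colon m\to n$ are string diagrams of $\GQAfwd$ with $s=t$ provable in $\GQA$. Applying the interpretation $\sem{\cdot}\colon\GQA\to\quadrel$ (sound by Proposition~\ref{prop:soundness-quadrel}) gives $\sem{s}=\sem{t}$ in $\quadrel$. Now I want to recognise $\sem{s}$ as $L(\semG{s})$: since $s$ lives in $\GQAfwd\cong\gauss$, it names a Gaussian map $\semG{s}$, and by the very definition of $L$ as the prop morphism sending generators of $\GQAfwd$ to the same generators of $\GQA$, we have $\sem{s}=L(\semG{s})$ and likewise $\sem{t}=L(\semG{t})$. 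So faithfulness of $L$ gives $\semG{s}=\semG{t}$ in $\gauss$, and then completeness of $\GQAfwd$ for $\gauss$ (Theorem~\ref{thm:completenessgauss}) gives $s=t$ in $\GQAfwd$. Conversely, if $L$ fails to be faithful, pick $f\ne g$ in $\gauss$ with $Lf=Lg$; by definability in $\GQAfwd$ choose diagrams $s,t$ with $\semG{s}=f$, $\semG{t}=g$, so $s\ne t$ in $\GQAfwd$ (soundness), yet $\sem{s}=Lf=Lg=\sem{t}$, hence $s=t$ in $\GQA$ by completeness of $\GQA$ for $\quadrel$ (Theorem~\ref{thm:presentation_quadrel}), contradicting conservativity. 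Thus the two halves of the proposition are interderivable, and it suffices to prove faithfulness of $L$.

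\textbf{Faithfulness of $L$.} Here I would use the explicit description of $L$ from Proposition~\ref{prop:logdensity}: $Lf(x,y)=\tfrac12\langle y-Ax-\mu,\ \Omega(y-Ax-\mu)\rangle+\iv{y-Ax-\mu\in\im\Sigma}$, where $f(x)=Ax+\N(\mu,\Sigma)$ and $\Omega$ is a generalised inverse of $\Sigma$. Suppose $Lf=Lg$ for $f=(A,\mu,\Sigma)$ and $g=(A',\mu',\Sigma')$ of the same type $m\to n$. First, the domain of finiteness of $Lf$ (as a function of $(x,y)$) is exactly the affine relation $\{(x,y): y-Ax-\mu\in\im\Sigma\}$, i.e.\ the graph-plus-support of $f$; equating these affine relations forces $\im\Sigma=\im\Sigma'$ and, restricting to $x$ with $y=Ax+\mu$ admissible, forces $Ax+\mu$ and $A'x+\mu'$ to differ by an element of $\im\Sigma$ for all $x$ — taking $x=0$ gives $\mu-\mu'\in\im\Sigma$, and then $A-A'$ has columns in $\im\Sigma$. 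Second, on this common domain $Lf$ and $Lg$ agree as genuine quadratic functions. Plugging $y=Ax+\mu$ (where both are finite) and comparing the quadratic forms in the free direction: for fixed admissible $x$, varying $y$ over $Ax+\mu+\im\Sigma$, the function $v\mapsto\tfrac12\langle v,\Omega v\rangle$ on $\im\Sigma$ must equal $v\mapsto\tfrac12\langle v+(Ax+\mu-A'x-\mu'),\Omega'(v+\cdots)\rangle$; the pure-quadratic part gives $\langle v,\Omega v\rangle=\langle v,\Omega' v\rangle$ for all $v\in\im\Sigma=\im\Sigma'$, and since a generalised inverse of $\Sigma$ is uniquely determined as a quadratic form on $\im\Sigma$ by $\Sigma\Omega\Sigma=\Sigma$, this yields $\Sigma=\Sigma'$ (using that $\Sigma$ is recovered from its restriction to $\im\Sigma$). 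With $\Sigma=\Sigma'$ in hand, the linear and constant terms of the quadratic (in $x$ and in $y$) then force $A=A'$ and $\mu=\mu'$. Hence $f=g$, so $L$ is faithful.

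\textbf{Main obstacle.} The delicate point is the second step: cleanly extracting the equalities $\Sigma=\Sigma'$, $A=A'$, $\mu=\mu'$ from the equality of the two partial quadratic functions $Lf=Lg$, given that $Lf$ only sees $\Sigma$ through a \emph{generalised} inverse $\Omega$ on the subspace $\im\Sigma$. I expect the right way to organise this is to recall from Section~\ref{sec:gaussprob} that a Gaussian map is uniquely determined by $(A,\mu,\Sigma)$, that $\Sigma$ (positive semidefinite) is uniquely determined by the quadratic form $v\mapsto\langle v,\Omega v\rangle$ on $\im\Sigma$ together with the subspace $\im\Sigma$ itself, and to use the affine-change-of-coordinates normal form of Proposition~\ref{prop:quadrel_characterization}: after an invertible affine change of coordinates both $Lf$ and $Lg$ take the canonical shape $\tfrac12\|\vec x_1\|^2+\iv{\vec x_3=0}$, and uniqueness of that data (which is precisely what underlies completeness of $\GQA$) pins down the original parameters. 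In fact, once this normal-form machinery is invoked, faithfulness of $L$ becomes almost immediate — arguably the slickest route is to observe that the composite $\gauss\xrightarrow{L}\quadrel\cong\GQA$ is, up to the isomorphisms, just the inclusion of generators, and injectivity on morphisms then follows from completeness of $\GQAfwd$ for $\gauss$ combined with conservativity; but since conservativity is the sibling claim, the self-contained argument via Proposition~\ref{prop:logdensity} above is the one I would actually write down, with Proposition~\ref{prop:quadrel_characterization} doing the heavy lifting in the parameter-extraction step.
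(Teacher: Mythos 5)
Your proposal is correct and follows essentially the same route as the paper: conservativity is deduced from faithfulness of $L$ via soundness of $\sem{\cdot}$ and completeness of $\GQAfwd$ for $\gauss$, and faithfulness holds because a Gaussian map is recoverable from its (conditional negative log-density) image under $L$. The only difference is that the paper dismisses faithfulness as ``clear,'' whereas you spell out the parameter-extraction argument (recovering $\im\Sigma$ from the finiteness domain, then $\Sigma$ from the quadratic form via $\Sigma\Omega\Sigma=\Sigma$, then $A$ and $\mu$) — a correct and welcome elaboration, not a different approach.
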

\noindent These results provide an elegant and precise connection between the world of Gaussians and of convex optimisation problems. Functoriality of the log-density was laboriously proved in \cite{stein2023compositional} by explicit means. Here, we obtain the  result ``for free'' from our use of presentations, from which it also easily follows faithfulness and conservativity of $\GQA$ over $\GQAfwd$.

%Similarly, we have embeddings $\aff\vect \to \gauss$, mapping an affine map $(f) = Ax$ to the Gaussian map $f(x) = Ax + \N(0,0)$, and $\affrel \to \quadrel$, mapping an affine relation $M \subseteq \R^m \times \R^n$ to the quadratic relation $1_M(x) = \iv{x \in M}$ given by its indicator function. As above, functoriality of these assignments may be immediately verified by checking that they are the semantic counterpart of prop morphisms $\GAAfwd \to \GQAfwd$ and $\GAA \to \GQA$ respectively, defined by mapping a generator onto itself. 

%Another functor of interest can be constructed from $\quadrel$ to $\affrel$ by mapping every generator except $\NN$ to the same generator in $\affrel$, and mapping $\NN$ to $\Bunit$. This functor takes nonnegative partial quadratic functions $f: \R^{m + n} \to [0, \infty]$ to their effective domains $\text{dom } f = \{ (x, y) \mid f(x, y) < \infty \}.$
%Again, functoriality of this assignment is a consequence of our presentation results.
\noindent Another functor of interest, $S : \gauss \to \affrel$ can be defined by mapping every generator except $\NN$ to the same generator in $\affrel$, and mapping $\NN$ to $\Bunit$. This functor takes a Gaussian map to the affine relation given by its support; if $f(x) = Ax + \N(b,\Sigma)$, then $S(f) := \{ (x,y) : y \in Ax + b + \mathrm{im}(\Sigma) \}$.
Again, functoriality of this assignment is an immediate consequence of our presentation results. A similar definition yields a functor $\quadrel \to \affrel$, mapping $f: \R^{m + n} \to [0, \infty]$ to its effective domain $\text{dom } f = \{ (x, y) \mid f(x, y) < \infty \}.$

\subsection{Ordinary Least-Squares}\label{sec:ols}

	We now demonstrate how to apply our theory of $\GQA$ to the method of ordinary least squares in linear regression.
	The aim of linear regression is simple: to find a linear model that best fits a set of observations. In its usual vectorial formulation, all available observations of the regressors form the columns of a single matrix $A$ and all observations of the dependent variable form a single vector $y$; then, a linear model with parameters $x$ is expressed concisely as the system $Ax = y$. Typically, for consistency, we also assume that the regressors are linearly independent, \emph{i.e.}, that $A$ is injective. If $A$ is not invertible--as it usually is not--this system does not admit an exact solution. We can nevertheless look for parameters $x$ such that $A{x}$ best approximates the observed values $y$. Here, `best' is interpreted in such a way that the sum of squared errors, $||y-Ax||^2$, is minimised. This function can be translated directly into the following diagram: 
	$$
	\sem{\tikzfig{lq-problem}} \;= \; {x \choose y}\mapsto \frac{1}{2}||y-Ax||^2
	$$
	The formula for the optimal $\hat x$ is the familiar\ ordinary least squares (OLS) estimator $\hat x = (A^T A)^{-1}A^T y = A^+y$. This can be derived by applying equational reasoning in $\GQA$ (as detailed in Appendix~\ref{app:ols}), showing that
	$$
	\tikzfig{lq-problem}  \;=\; \tikzfig{ols-fin}
	$$
	The semantics of the last diagram is  ${x \choose y}\mapsto \frac{1}{2}\left(||AA^+y-Ax||^2 + ||y-AA^+y||^2\right)$.
	Its infimum is clearly reached at ${\hat x}=A^+y$, as wished: in this case $||AA^+y-A{\hat x}||^2=0$ and the remaining term $||y-AA^+y||^2 = ||y-A{\hat x}||^2$ indicates the distance between $A{\hat x}$ and $y$, \emph{i.e.}, how far we are from having successfully inverted~$A$.

\subsection{Gaussian Probabilistic Programming}\label{sec:ppl}

\noindent In \cite{stein2021compositional,stein2024} the authors study a simple first-order probabilistic programming language for Gaussian probability, which we name $\GPL$. The core probabilistic constructs are sampling from a standard-normal distribution \mlstinline{normal()} and an operator $(\eq)$ for conditioning two random variables to be equal. Terms of $\GPL$ are, for $\alpha,\beta \in \mathbb R$, $i \in \{0,1\}$,
\begin{align*}
	s,t ::= x \s s + t \s \alpha \cdot s \s \underline{\beta} \s (s,t) \s () \s \letin x s t \s s; t \s \pi_i\,s \s \normal() \s s \eq t 
\end{align*}
Terms are typed in random variables, see Appendix~\ref{app:typingjudg}. In $\GPL$ we can express an inference problem such as a noisy measurement as follows:
\begin{align}\label{eq:exampleprogram} 
\letin {x} {10 \cdot \normal()} \letin {y} {x + 5 \cdot \normal()} (y \eq \underline{40}); x
\end{align}
\noindent This expresses the following mathematical problem: If $X \sim \N(0,100)$ and $Y|X \sim \N(X,25)$, what is $X|(Y=40)$? The answer turns out to be $\N(32,20)$, which can be found by factorising the joint negative log-density: $f(x,40) \propto \frac 1 2 \frac{x^2}{100} + \frac 1 2 \frac{(x-40)^2}{25} = \frac 1 2 \frac{(x-32)^2}{20} + 6.4$.
The constant $c=6.4$ at the end corresponds to the \emph{score} of the problem, i.e. the negative logarithm of the \emph{normalisation constant} or \emph{model evidence}. The higher $c$, the less likely was the observation. 
%An operational semantics is given in~\cite{stein2024} by maintaining a joint Gaussian distribution in memory over all allocated random variables.  %$\normal()$ extends that distribution with an independent normal random variable, while $(\eq)$ numerically performs conditioning and replaces the distribution with the posterior.

It is possible to associate any $\GPL$ term to a $\GQA$-string diagram. This translation is rigorously definable as a functor, by understanding $\GPL$ as the internal language of a suitable monoidal category, as discussed in \cite{stein2021compositional,stein2021structural,di2024simple}. Given the focus of the present work, we do not pursue this systematically here, and confine ourselves to showing how the constructs of $\GPL$ map into $\GQA$:
\[\scalebox{.9}{\tikzfig{semantics}}\] 
For instance, the program~\eqref{eq:exampleprogram} translates to the leftmost diagram, and normalises modulo $\GQA$ to the posterior distribution and normalisation constant expressed by the rightmost diagram. 
\[ \scalebox{.9}{\tikzfig{semantics_example}} \]
Via this translation, $\GPL$ receives denotational semantics in $\quadrel$, which is comparable to existing $\GPL$ semantics in interesting ways: First, the elimination procedure of \Cref{prop:cond_elimination} mirrors the operational semantics of~\cite{stein2024} by reducing conditioning statements whenever they occur. Second, the normal form argument of \Cref{thm:presentation_quadrel} can be also employed to compare two $\GPL$ terms once understood as $\GQA$ diagrams. This may yield a decidability procedure for contextual equivalence, and thus a fully abstract (and completely axiomatised) denotational semantics for $\GPL$, which unlike the one in~\cite{stein2021compositional} does not rely on an equivalence relation that is hard to decide. A complete account of this conjecture is outside of the scope of this paper; we leave it for future work.

			\section{Conclusions}\label{sec:conclusions}
			
			We gave a compositional account of quadratic optimisation via a category of quadratic relations called  $\quadrel$, and provided a complete calculus $\GQA$ to reason algebraically about this class of problems. We also showed that a fragment $\GQAfwd$ of $\GQA$ axiomatises Gaussian probabilistic maps, thus highlighting the relationship between Gaussian probability and convex analysis. We illustrated our approach by discussing functorial tranlations between the Gaussian, quadratic, and affine domain, by modelling the method of ordinary least-squares in linear regression as diagrammatic reasoning, and by giving semantics to probabilistic programming. The last two applications are by no means intended to be the final word on the subject, but rather a starting point for a more systematic treatment. In particular, we plan to give a complete account of full abstraction for Gaussian probabilistic programming, as outlined at the end of \Cref{sec:cases-studies}. Also, we will further investigate the fragment $\GQAfwdex$ and its relationship with extended Gaussian relations. This semantic model is connected with the notion of open stochastic system in Willems' approach to control theory~\cite{willems:oss}, which we should be able to account for similarly to how $\GAA$ interprets electrical circuits (\cite{Boisseau-circuits,pbsz}). 
			%The axioms of $\GQA$ for the generator $\NN$ are elegant and purely stated in terms of symmetry considerations, namely invariance under $(2\times 2)$ rotations. This is evocative of the Herschel-Maxwell theorem, which states that the Gaussian distribution is uniquely characterised by its rotation invariance (e.g. \cite{gyenis2017maxwell}). Using the various presentations, it becomes easy to study translations between the different domains. Once we define how the generators are mapped, functoriality comes `for free' from the universal property of the presentation. 
			
			%To showcase the expressivity of our approach, we showed how to solve an optimisation problem via the ordinary least squares method by diagrammatic equational reasoning. Also, we showed how our calculus provides a semantics for Gaussian probabilistic programming, and to Willems' open stochastic systems~\cite{willems:oss}. In follow-up work, we plan to pursue these areas more systematically, as discussed for instance at the end of~\Cref{sec:ppl}. 
			
			From an algebraic viewpoint, we would like to explore whether the symmetries present in Graphical Affine Algebra extend to Graphical Quadratic Algebra. In particular, we conjecture the `colour-swap' symmetry, mapping eg. $\Bcomult$ to $\Wcomult$ and $\Wunit$ to $\Bunit$, extends to $\GQA$ by mapping $\NN$ to itself. %For instance, one such symmetry maps each generator of GLA to its `color-swapped' version, eg. $\Bcomult \mapsto \Wcomult$ and $\Wunit \mapsto \Bunit$. Semantically, this maps an affine relation to a kind of orthogonal complement. We conjecture that this symmetry extends to $\GQA$, and that it should send $\NN$ to itself (thus the gray colour). 
			The semantics of this functor should correspond to convex conjugation (Legendre transformation). Indeed, in convex analysis terminology, every quadratic relation is a so-called convex bifunction (\cite{rockafellar}), which have been studied in categorical terms in \cite{stein2023compositional}. Independently and at the same time as a preprint~\cite{stein2024graphicalquadraticalgebra} of our work, the authors of \cite{Booth2024CompleteET} introduced a notion of `Gaussian relation' in the context of quantum computation; their semantics is very different, but their axiomatisation is highly reminiscent of $\GQA$. Their presentation exhibits the colour-swap symmetry discussed earlier, which is interpreted by the Fourier transform. The precise relationship between these approaches remains to be explored.
			
			\paragraph{Acknowledgements} F. Zanasi acknowledges support from \textsc{epsrc} grant EP/V002376/1, \textsc{miur} PRIN P2022HXNSC, and \textsc{aria} Safeguarded AI  programme.
			
			\bibliographystyle{splncs04}
			\bibliography{main}
			
			\appendix
			
			\section*{Appendix}\label{appendix}
%			\counterwithout{theorem}{section}
			\renewcommand{\thesubsection}{\Alph{subsection}}
%			\counterwithin{theorem}{subsection}
			
			\subsection{Laws of Symmetric Strict Monoidal Categories}\label[app]{app:lawssmc}
			
			\begin{equation*}
				\begin{gathered}
					{\tikzfig{smc/sequential-associativity} = \tikzfig{smc/sequential-associativity-1}}
					\\
					\scalebox{1}{\tikzfig{smc/unit-right} = \diagbox{c}{}{} = \tikzfig{smc/unit-left}}
					\\
					\scalebox{1}{\tikzfig{smc/parallel-associativity} = \tikzfig{smc/parallel-associativity-1}}
					\qquad
					\scalebox{1}{ \tikzfig{smc/parallel-unit-above} = \diagbox{c}{}{} =  \tikzfig{smc/parallel-unit-below}}
					\\
					\scalebox{1}{\tikzfig{smc/interchange-law} = \tikzfig{smc/interchange-law-1} }
					\\
					\scalebox{1}{\tikzfig{smc/sym-natural}= \tikzfig{smc/sym-natural-1}}
					\qquad\quad
					\scalebox{1}{\tikzfig{smc/sym-iso} = \tikzfig{smc/id2}}
				\end{gathered}
			\end{equation*}

			\subsection{Linear Algebra and Graphical Linear Algebra}\label[app]{app:linearalgebra}
			
			We recall some basic facts of linear algebra that will be useful in subsequent proofs. We will use classic matrix notation and string diagrammatic notation interchangeably: this is justified by  Proposition~\ref{prop:presentationsGLA}, which implies that any equivalence of matrices or subspaces represented by string diagrams are provable in the equational theory of $\GAA$.  %We refer the reader to~\cite{DBLP:phd/hal/Zanasi15,pbsz} for a more in-depth discussion of the derivable equations of these fragments.

			A matrix $A \in \R^{n \times n}$ is \emph{orthogonal} if $AA^T = A^TA = I$. A \emph{rotation matrix} is a matrix $A$ that is orthogonal and furthermore satisfies $\det(A) = 1$. Any orthogonal matrix can be written as a product of a rotation matrix and a reflections. The classic matrix groups are defined as
			\begin{align*}
				\gl(n) &= \{ A \in \R^{n \times n} \text{ invertible } \} \\
				\orth(n) &= \{ A \in \gl(n) \text{ orthogonal} \}
			\end{align*}
			
			We recall the following well-known characterisations. Because $\GAA$ presents $\affrel$ (Proposition~\ref{prop:presentationsGLA}) and $\GQA$ is a conservative extension of $\GAA$, we may use them freely while doing diagrammatic reasoning in $\GQA$.
			
			\begin{proposition}[Column equivalence]\label{prop:column_equivalence}
				For two matrices $A,B \in \R^{n \times k}$, the following are equivalent
				\begin{enumerate}[topsep=0pt,itemsep=-1ex,partopsep=1ex,parsep=1ex]
					\item $\im(A) = \im(B)$
					\item there exists an invertible matrix $S$ with $AS=B$.
				\end{enumerate}
			\end{proposition}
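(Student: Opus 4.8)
This is the standard column-equivalence characterisation, and the plan is to prove the two implications separately, using only elementary linear algebra since that is all that is needed here.

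For the direction $(2) \Rightarrow (1)$: suppose $AS = B$ with $S$ invertible. Then every column of $B$ is a linear combination of columns of $A$, so $\im(B) \subseteq \im(A)$. Applying the same observation to $A = BS^{-1}$ gives $\im(A) \subseteq \im(B)$, hence $\im(A) = \im(B)$. This is immediate.

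For the direction $(1) \Rightarrow (2)$: assume $\im(A) = \im(B) =: V$, a subspace of $\R^n$ of dimension, say, $r \le k$. The idea is to factor each of $A$ and $B$ through a fixed basis of $V$. Pick a matrix $C \in \R^{n \times r}$ whose columns form a basis of $V$. Since every column of $A$ lies in $V = \im(C)$, there is $P \in \R^{r \times k}$ with $A = CP$; similarly $B = CQ$ for some $Q \in \R^{r \times k}$. Because the columns of $C$ are linearly independent, $\im(P) = \R^r = \im(Q)$ (otherwise $\im(A)$ or $\im(B)$ would have dimension $< r$), i.e.\ both $P$ and $Q$ are surjective $r \times k$ matrices. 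The remaining task is to produce an invertible $S \in \R^{k \times k}$ with $PS = Q$; then $AS = CPS = CQ = B$ as wanted. To build $S$, choose a right inverse: since $P$ is surjective there is $P'$ with $PP' = I_r$, and $\R^k = \im(P') \oplus \ker(P)$. Define $S$ on $\im(P')$ by sending $P'v \mapsto$ (any fixed preimage of $v$ under $Q$) — concretely $S := P' Q + K$ where $K$ is an isomorphism from a complement of $\im(P'Q)$'s relevant part\ldots actually the cleanest route is: extend a basis adapted to the decomposition $\R^k = W \oplus \ker P$, where $P|_W$ is an isomorphism onto $\R^r$, and likewise $\R^k = W' \oplus \ker Q$ with $Q|_{W'}$ an isomorphism; then set $S$ to be the linear map sending $W$ to $W'$ via $(Q|_{W'})^{-1} \circ (P|_W)$ and sending $\ker P$ isomorphically onto $\ker Q$ (the two kernels have the same dimension $k - r$). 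This $S$ is invertible and satisfies $PS = Q$ by construction.

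The only mild obstacle is the bookkeeping in the last step — producing the invertible $S$ rather than just \emph{some} $S$ with $PS = Q$ — but this is routine: it amounts to the fact that two surjective linear maps $\R^k \twoheadrightarrow \R^r$ differ by an automorphism of the source, which follows from matching up complements of their (equidimensional) kernels. Alternatively, one can invoke Smith/row-column normal forms, but the complement argument is self-contained. Either way, the proposition follows.
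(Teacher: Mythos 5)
Your argument is correct, but note that the paper does not actually prove this proposition: it is recalled as a well-known fact (to be used freely in diagrammatic reasoning via the completeness of $\GAA$), so there is no "paper proof" to compare against. Your $(2)\Rightarrow(1)$ direction is immediate and fine, and the $(1)\Rightarrow(2)$ direction via factoring $A=CP$, $B=CQ$ through a basis of the common image and then matching up two surjections $\R^k\twoheadrightarrow\R^r$ by an automorphism of the source is a standard and complete route. One small slip worth fixing: the map $S$ you construct (sending $W$ to $W'$ via $(Q|_{W'})^{-1}\circ(P|_W)$ and $\ker P$ isomorphically onto $\ker Q$) satisfies $QS=P$, not $PS=Q$ — for $x=w+n$ one gets $Q(Sx)=P(w)=P(x)$, whereas $P(Sx)$ is not controlled. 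Since $S$ is invertible this is harmless (take $S^{-1}$, or swap the roles of $P$ and $Q$ in the construction), but the phrase "satisfies $PS=Q$ by construction" is literally off by an inverse. The half-finished detour through $S:=P'Q+K$ should simply be deleted, as you in effect do. With those cosmetic repairs the proof is complete and self-contained.
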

			
			\begin{proposition}[Orthogonal column equivalence]\label{prop:orth_column_equivalence}
				For two matrices $A,B \in \R^{n \times k}$, the following are equivalent
				\begin{enumerate}[topsep=0pt,itemsep=-1ex,partopsep=1ex,parsep=1ex]
					\item $AA^T = BB^T$
					\item there exists an orthogonal matrix $R$ with $AR=B$.
				\end{enumerate}
			\end{proposition}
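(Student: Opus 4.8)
The plan is to prove the two implications separately, with the $(2)\Rightarrow(1)$ direction being routine and the $(1)\Rightarrow(2)$ direction requiring the genuine work. For $(2)\Rightarrow(1)$: assume $AR = B$ for some orthogonal $R$, so $RR^T = I$. Then $BB^T = (AR)(AR)^T = A R R^T A^T = A I A^T = AA^T$. This is a one-line computation and needs nothing beyond the definition of orthogonality.

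For $(1)\Rightarrow(2)$: assume $AA^T = BB^T =: \Sigma$, a positive semidefinite $n\times n$ matrix. The key observation is that $AA^T = BB^T$ forces $\im(A) = \im(B)$ (both equal $\im(\Sigma)$, since $\im(MM^T) = \im(M)$ for any real matrix $M$), and moreover that $A$ and $B$ agree "up to an isometry on the relevant subspace." Concretely, the cleanest route is: first reduce to the case where $A$ and $B$ have full column rank by deleting redundant columns — but since we need a square orthogonal $R$ with $AR = B$ and $A, B$ both lie in $\R^{n\times k}$, we instead want to build $R$ directly on all of $\R^k$. The standard argument: consider the map sending $A^T x \mapsto B^T x$ for $x \in \R^n$; since $\|A^T x\|^2 = x^T A A^T x = x^T B B^T x = \|B^T x\|^2$, this is a well-defined linear isometry from $\im(A^T) = (\ker A)^\perp$ onto $\im(B^T) = (\ker B)^\perp$. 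These two subspaces of $\R^k$ have equal dimension (namely $\mathrm{rank}\,\Sigma$), so their orthogonal complements $\ker A$ and $\ker B$ also have equal dimension, and we may extend the isometry arbitrarily to an isometry (hence orthogonal map) $R^T : \R^k \to \R^k$ mapping $\ker A$ onto $\ker B$. Then for all $x$, $R^T A^T x = B^T x$, i.e. $A R = B$, and $R$ is orthogonal.

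The main obstacle — really the only subtle point — is checking that the candidate map $A^T x \mapsto B^T x$ is \emph{well-defined}, i.e. that $A^T x = A^T x'$ implies $B^T x = B^T x'$. This follows from the norm identity above applied to $x - x'$: $\|B^T(x-x')\|^2 = \|A^T(x-x')\|^2 = 0$. One then needs to confirm the extension to a full orthogonal matrix is possible, which is immediate once $\dim \ker A = \dim \ker B$ (both equal $k - \mathrm{rank}\,\Sigma$). An alternative, slicker packaging avoids the complements entirely: take SVDs $A = U_1 \Sigma_A V_1^T$ and $B = U_2 \Sigma_B V_2^T$; the hypothesis $AA^T = BB^T$ forces the same singular values and the same left singular structure, from which $R = V_1 W V_2^T$ for a suitable block-orthogonal $W$ does the job. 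I would present whichever is shorter in context; given that Proposition~\ref{prop:column_equivalence} is already available, the isometry-extension argument is probably the most self-contained, as it parallels that proof with "invertible" strengthened to "orthogonal."
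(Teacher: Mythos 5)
Your proof is correct. Note that the paper does not actually prove this proposition: it is stated in Appendix~B as one of several ``well-known characterisations'' recalled from linear algebra, so there is no in-paper argument to compare against. Your isometry-extension argument is the standard one and is complete: the $(2)\Rightarrow(1)$ direction is the one-line computation $BB^T = ARR^TA^T = AA^T$, and for $(1)\Rightarrow(2)$ you correctly identify the only delicate point, namely well-definedness of the partial map $A^Tx \mapsto B^Tx$, which follows from $\|B^T(x-x')\|^2 = (x-x')^TBB^T(x-x') = \|A^T(x-x')\|^2$. Since this map preserves norms it preserves inner products by polarisation, and since $\dim\ker A = k - \mathrm{rank}(AA^T) = k - \mathrm{rank}(BB^T) = \dim\ker B$, it extends to an orthogonal map of all of $\R^k$ by pairing orthonormal bases of the kernels; the resulting $R$ satisfies $AR = B$. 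The SVD alternative you sketch would also work but, as you suspect, requires care with repeated singular values (the left singular vectors are only determined up to an orthogonal transformation within each eigenspace), so the isometry-extension route is the cleaner one to write down.
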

			
			Two subspaces $S,D \subseteq \R^n$ are \emph{complementary} if $S + D = \R^n$ and $S \cap D = 0$. Every vector $x \in R^n$ decomposes uniquely as $x = x_S + x_D$ with $x_S \in S, x_D \in D$. We obtain the corresponding projections $P_S, P_D$. A canonical choice of complementary subspace is the orthogonal complement $S=D^\bot$.
			
			\begin{proposition}[Invariance]\label{prop:gl_invariance}
				$\tikzfig{bunit}$ is invariant under $\gl(n)$, i.e. for all invertible matrices $S \in \gl(n)$ we have in $\GAA$
				\begin{equation*} \tikzfig{gli} \tag{GLI} \end{equation*}
			\end{proposition}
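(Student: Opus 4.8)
The quickest route exploits the completeness half of Proposition~\ref{prop:presentationsGLA}: since $\GAA$ presents $\affrel$, an equation between $\GAA$-diagrams holds as soon as it holds after applying $\sem{\cdot}$ in $\affrel$. So the plan is to reduce the statement to a one-line semantic check. Concretely, the left-hand side of (GLI) is $\Bunit^{\oplus n}$ composed (relationally) with the matrix $S$; under $\sem{\cdot}$ this is $\{(\,(),\,S\vecv) \mid \vecv \in \R^n\} = \im(S)$, and $\im(S)=\R^n$ precisely because $S \in \gl(n)$. The right-hand side is $\sem{\Bunit^{\oplus n}} = \R^n$ as a relation $0 \to n$. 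The two subspaces agree, hence by completeness of $\GAA$ the diagrams are equal in $\GAA$. (Equivalently, one may cite Proposition~\ref{prop:column_equivalence} or the well-definedness of the subspace encoding, Proposition~\ref{prop:subspace_encoding_uniqueness}: $\Bunit^{\oplus n}$ is the canonical diagram for the subspace $\im(I_n)$, and $\im(I_n)=\im(S)$ since $S$ is invertible.)

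If instead one wants an explicit diagrammatic derivation inside $\GAA$, the plan is to decompose $S$ into elementary matrices via Gaussian elimination, $S = E_1 \cdots E_k$, so that (using that matrix product is diagram composition in $\GAA$) it suffices to prove $\Bunit^{\oplus n} \,;\, E = \Bunit^{\oplus n}$ for each elementary $E$. There are three cases. For a coordinate permutation, the equation is immediate since the $n$-fold tensor $\Bunit^{\oplus n}$ is invariant under permutation of its legs (each leg is the same generator) and permutations are handled by the symmetry axioms. For a rescaling of one coordinate by $c \neq 0$, the claim reduces to the single-wire identity $\Bunit \,;\, \scalar{c} = \Bunit$; this follows from the axiom $\textsf{del}$ ($\scalar{k};\Bcounit = \Bcounit$) together with the scalar-inverse axioms $\textsf{$r$-inv}$/$\textsf{$r$-coinv}$ — e.g. bend the $\scalar{c}$ around the black compact-closed structure to turn it into $\coscalar{c} = \scalar{c^{-1}}$ and then apply $\textsf{del}$. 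For a shear (adding $c$ times one coordinate to another), the claim reduces to a two-wire identity which is discharged using the black comonoid $\Bcomult$, the white monoid $\Wmult$, and the bimonoid laws from the affine fragment.

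The \emph{main obstacle}, such as it is, appears only in the hands-on route: the rescaling case, i.e. deriving $\Bunit \,;\, \scalar{c} = \Bunit$ for $c\neq 0$ and, underlying it, sliding a scalar through the black cap/cup. Both are routine consequences of the $\GAA$ axioms ($\textsf{del}$, $\textsf{$r$-inv}$, and the black Frobenius/compact-closed structure), but they are the only points requiring any actual diagram chasing. I would present the completeness-based argument as the official proof and, if desired, record the elementary-matrix decomposition as a remark.
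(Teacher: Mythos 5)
Your proposal is correct, and your primary argument takes a genuinely different (and shorter) route than the paper's. The paper proves (GLI) by explicit diagrammatic derivation: decompose $S$ via Gaussian elimination into scalings, swaps, and shears, then check invariance of $\Bunit$ under each — exactly your fallback route, except that for the scaling case the paper invokes the axiom \textsf{SI} of $\GQAfwdex$ directly rather than deriving $\Bunit\,;\scalar{c}=\Bunit$ from \textsf{del}, \textsf{$r$-inv} and the black compact-closed structure. Your official proof instead appeals to completeness of $\GAA$ for $\affrel$ (Proposition~\ref{prop:presentationsGLA}): both sides denote the full subspace $\R^n$, so they are provably equal. This is sound, non-circular (completeness of $\GAA$ is imported from prior work), and is a style of reasoning the paper explicitly endorses right after Proposition~\ref{prop:presentationsGLA}. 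Two caveats. First, your parenthetical alternative of citing Proposition~\ref{prop:subspace_encoding_uniqueness} \emph{would} be circular: the paper proves that proposition \emph{from} (GLI). Second, the paper's hands-on derivation is not merely a pedagogical remark: because it uses only the symmetric monoidal laws, \textsf{SI}, and a short bimonoid computation for the shear, it remains valid in the weaker fragment $\GQAfwdex$, where the relational axioms of $\GAA$ — and hence completeness for $\affrel$ — are unavailable, and where the lemma is actually needed (e.g.\ in the normal-form argument of Proposition~\ref{prop:nf_gaussex}). Your completeness shortcut establishes the equation only in full $\GAA$/$\GQA$, which matches the literal statement but buys less than the paper's proof.
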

			\begin{proof}
				The Gaussian elimination algorithm, which we may mimic by equational reasoning in $\GAA$, demonstrates that every invertible matrix can be built up from three kinds of elementary transformations
				\begin{enumerate}
					\item scaling $x \mapsto \alpha x$, with $\alpha \neq 0$
					\item swapping $(x,y) \mapsto (y,x)$
					\item the shear mapping $(x,y) \mapsto (x,kx + y)$
				\end{enumerate}
				The generator $\Bunit$ is invariant under swapping by the laws of symmetric monoidal categories (Appendix~\ref{app:lawssmc}), and under scaling by (SI). It remains to show that it is invariant under the shear mapping:
				\begin{equation*} \tikzfig{gli_shear} \end{equation*}
			\end{proof}
			
			\begin{proposition}[Uniqueness of subspace encoding]\label{prop:subspace_encoding_uniqueness}
				Let $A \in \R^{n \times m_1}, B \in \R^{n \times m_2}$ be two matrices satisfying $\im(A) = \im(B)$. Then we can derive in $\GAA$
				\[ \tikzfig{subspace_state_A} = \tikzfig{subspace_state_B} \]
			\end{proposition}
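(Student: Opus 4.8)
The plan is to reduce the statement to Proposition~\ref{prop:column_equivalence} together with the invariance Proposition~\ref{prop:gl_invariance}. The diagram $\tikzfig{subspace_state_A}$ encodes the subspace $\im(A)$ as the image of the affine map $A$ precomposed with $\Bunit$ (a universally-quantified ``anything'' state fed into the $m_1$ inputs of the matrix diagram for $A$), and similarly for $B$. So I want to show that, given $\im(A)=\im(B)$, the two composite diagrams $\Bunit \,;\, A$ and $\Bunit\,;\, B$ are equal in $\GAA$.

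First I would invoke Proposition~\ref{prop:column_equivalence}: since $\im(A)=\im(B)$ with $A\in\R^{n\times m_1}$ and $B\in\R^{n\times m_2}$, there is an invertible matrix $S$ (of the appropriate size) with $AS=B$ — here one has to be a little careful when $m_1\neq m_2$, but one can always pad with zero columns / delete zero columns using the bimonoid and counit axioms so that both matrices have the same number of columns, equal to $\dim\im(A)$ padded up as needed; alternatively one appeals to the version of column equivalence that allows non-square $S$ obtained by composing an invertible change of basis with a projection/injection onto the column space. In the string-diagrammatic reading this amounts to rewriting $B$ as $S\,;\,A$ (as matrix diagrams, provable in $\GAA$ because $\GAA$ presents $\affrel$ by Proposition~\ref{prop:presentationsGLA}, so any true matrix identity is derivable).

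Then the key step: in the diagram for $B = \Bunit \,;\, S \,;\, A$, the subdiagram $\Bunit \,;\, S$ is exactly the left-hand side of the invariance equation (GLI) of Proposition~\ref{prop:gl_invariance}, which says $\Bunit\,;\,S = \Bunit$ for any invertible $S$. Applying (GLI) rewrites $\Bunit\,;\,S\,;\,A$ to $\Bunit\,;\,A$, which is the diagram $\tikzfig{subspace_state_A}$, completing the derivation. All of these steps are carried out inside $\GAA$, and since $\GQA$ is a conservative extension of $\GAA$ they hold there too.

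The main obstacle is the bookkeeping around the mismatch of column counts $m_1$ versus $m_2$: Proposition~\ref{prop:column_equivalence} as stated assumes $A,B\in\R^{n\times k}$ with the \emph{same} $k$, whereas here the two encodings may use matrices of different widths (since there is no canonical choice of generating set for a subspace). The clean fix is to first normalise both $A$ and $B$ to have exactly $r=\dim\im(A)$ columns of full rank, discarding redundant columns via $\Bcounit$ on the unit side (diagrammatically, a zero column meeting $\Bunit$ can be deleted, using $\circ\bullet$-\textsf{bo} or \textsf{1-del}-type axioms), then apply column equivalence in the square $r\times r$ case, then apply (GLI). Once the widths are aligned, everything else is a routine instance of already-established lemmas.
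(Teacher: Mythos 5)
Your proof is correct and follows essentially the same route as the paper: align the column counts of $A$ and $B$ (the paper does this by padding the narrower matrix with zero columns), then apply column equivalence (Proposition~\ref{prop:column_equivalence}) to get an invertible $S$ with $AS=B$, and conclude by the $\gl(n)$-invariance of $\Bunit$ (Proposition~\ref{prop:gl_invariance}). The only cosmetic difference is that you also consider truncating both matrices to $r=\dim\im(A)$ full-rank columns, whereas the paper simply pads with zeros; both resolve the same bookkeeping issue.
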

			\begin{proof}
				Without loss of generality we can assume that $m_1 = m_2$: this is because if, say, $m_1 > m_2$, we can extend $B$ with further zero columns to $\widetilde B = (B\,0)$ and this still satisfies the assumptions \tikzfig{subspace_eq_tilde} of the proposition. By column equivalence (\Cref{prop:column_equivalence}), $AR = B$ for some invertible matrix $R$. Then the desired equality follows from the fact that $\Bunit$ is invariant under invertible matrices (\Cref{prop:gl_invariance}).
			\end{proof}
			
			\begin{proposition}\label{prop:projection_formula}
				If $S,D$ are complementary subspaces, then we can prove in $\GAA$ that
				\[ \tikzfig{projection_formula} \]
				where $P_S,P_D$ are the projection operators onto the subspaces.
			\end{proposition}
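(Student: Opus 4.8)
The plan is to reduce the displayed identity to a routine fact about complementary projections and then discharge it either by completeness of $\GAA$ or, more constructively, by a change-of-basis argument based on the invariance lemmas just established.

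First I would observe that the two sides of the displayed equation are string diagrams of $\GAA$ whose denotations in $\affrel$ are assembled from the (linear) operators $P_S$, $P_D$, the subspaces $S$, $D$, and the structural generators (copy, add, units). Since $\GAA$ presents $\affrel$ (\Cref{prop:presentationsGLA}), two such diagrams are equal in $\GAA$ as soon as they denote the same affine relation. Hence it suffices to check the semantic equality, which follows from the defining properties of complementary projections: $P_S^2 = P_S$, $P_D^2 = P_D$, $P_S P_D = P_D P_S = 0$, $P_S + P_D = \id_n$, $\im(P_S) = S$, $\im(P_D) = D$, $\ker(P_S) = D$ and $\ker(P_D) = S$ — all immediate from the unique decomposition $x = x_S + x_D$ with $x_S \in S$, $x_D \in D$.

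For a derivation internal to $\GAA$ (which is arguably in the spirit of the proposition), I would instead pick matrices $A, B$ whose columns are bases of $S$ and $D$; since $S$ and $D$ are complementary, the block matrix $T = (A\ B)$ lies in $\gl(n)$, and $P_S = T \left(\begin{smallmatrix} I & 0 \\ 0 & 0 \end{smallmatrix}\right) T^{-1}$, $P_D = T \left(\begin{smallmatrix} 0 & 0 \\ 0 & I \end{smallmatrix}\right) T^{-1}$. Using that $\Bunit$ is invariant under $\gl(n)$ (\Cref{prop:gl_invariance}) and that subspace encodings are unique up to change of basis (\Cref{prop:subspace_encoding_uniqueness}), one rewrites every occurrence of $S$, $D$, $P_S$, $P_D$ by conjugating with $T$, reducing to the special case $S = \R^k \times 0$ and $D = 0 \times \R^{n-k}$. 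There $P_S$ and $P_D$ are the coordinate projections and the equation collapses to the bimonoid and (co)unit laws of the affine fragment of Fig.~\ref{fig:gaa-axioms}.

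The only mildly delicate point is the bookkeeping in the change-of-basis step — ensuring that $T$ and $T^{-1}$ cancel correctly on both sides of the equation and that the block decomposition is threaded through the copy/add structure consistently — but this is purely mechanical once \Cref{prop:gl_invariance} is available, and no genuinely new idea is needed beyond what has already been set up.
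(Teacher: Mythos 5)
Your argument is correct, but it is not the route the paper takes, and both of your routes are heavier than necessary. The paper's proof is a two-step direct derivation: it writes the identity wire as the matrix $P_S+P_D$, expands this sum diagrammatically as $\Bcomult;(\underline{P_S}\oplus\underline{P_D});\Wmult$, and then rearranges using the fact that matrix diagrams commute with $\Wmult$ (linearity), all of which is derivable in $\GAA$ by the standing convention that linear-algebraic identities between matrix/subspace diagrams are provable there. No choice of basis, no appeal to $\gl(n)$-invariance, and no explicit invocation of completeness is needed. Your first route --- checking the equality of denotations in $\affrel$ and citing \Cref{prop:presentationsGLA} --- is legitimate and non-circular (completeness of $\GAA$ is imported from prior work, not established via this proposition), and it is essentially the cheapest possible argument; the paper simply prefers to exhibit the derivation since it is short. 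Your second route via $T=(A\ B)\in\gl(n)$, \Cref{prop:gl_invariance} and \Cref{prop:subspace_encoding_uniqueness} also works, and has the virtue of being fully constructive, but it trades the paper's one-line use of $P_S+P_D=I$ for a change-of-basis bookkeeping exercise (threading $T$ and $T^{-1}$ through the copy/add structure) that the identity-decomposition argument avoids entirely. In short: correct, but you reached for completeness or conjugation where the single algebraic fact $P_S+P_D=I$, combined with the diagrammatic encoding of matrix sums, already does the job.
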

			\begin{proof}
				Using the fact that $P_S + P_D = I$, we can derive in $\GAA$.
				\[ \tikzfig{projection_formula_proof} \]
				In the step $(\ast)$ we use the fact that matrices are linear (commute with $\Wmult$), which is provable inside $\GAA$.
			\end{proof}

			\subsection{Soundness Proofs}\label[app]{app:soudness}
			
			\begin{proposition}\label{prop:soundnessgauss}
				The interpretation of $\GQAfwd$ in $\gauss$ is sound.
			\end{proposition}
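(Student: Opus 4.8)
The plan is to verify soundness axiom by axiom: since $\GQAfwd$ is freely generated, the generator assignment $\semG{\cdot}$ extends to a prop morphism as soon as every defining equation of $\GQAfwd$ — each equation in the affine fragment and the quadratic fragment of \Cref{fig:gaa-axioms}, \emph{except} \textsf{Z} — is carried to a genuine equality of morphisms in $\gauss$. I would dispatch the affine fragment wholesale, and then treat the two remaining quadratic axioms, \textsf{D} and \textsf{RI}, by a short direct computation.

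\textbf{Affine fragment.} First I would observe that $\semG{\cdot}$ sends \emph{every} generator of $\GAAfwd$ to a morphism of zero covariance, of the shape $(A,b,0)$ — only $\NN$ has a nontrivial covariance. The zero-covariance morphisms are closed under composition and direct sum in $\gauss$ (since $C \cdot 0 \cdot C^{T} + 0 = 0$, and the block sum of zero matrices is zero) and contain the identities, hence form a sub-prop; on it the operations of \Cref{def:gauss} reduce to composition and direct sum of affine maps, so this sub-prop is isomorphic to $\aff$. Composing the presentation isomorphism $\GAAfwd \xrightarrow{\cong} \aff$ of \Cref{prop:presentationsGLA} with the inclusion $\aff \hookrightarrow \gauss$, $(A,b) \mapsto (A,b,0)$, yields a prop morphism $\Phi \colon \GAAfwd \to \gauss$ that agrees with $\semG{\cdot}$ on each generator (e.g.\ $\tikzfig{foot}$, the affine constant $1$, maps on both sides to the Gaussian map with mean $1$ and zero covariance). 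Being a prop morphism out of a free prop, $\Phi$ identifies provably-equal diagrams, so $\semG{c} = \Phi(c) = \Phi(d) = \semG{d}$ for every axiom $c = d$ of the affine fragment, and none of those equations needs to be inspected individually.

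\textbf{Quadratic fragment.} It remains to verify \textsf{D} and \textsf{RI}; axiom \textsf{Z} involves $\Wcounit$, which is not a generator of $\GQAfwd$, and is deliberately omitted. Axiom \textsf{D} equates the empty diagram with the one discarding the output of $\NN$: in $\gauss$, composing $\semG{\NN} \colon 0 \to 1$ (mean $0$, covariance $1$) with $\semG{\tikzfig{bcounit}} \colon 1 \to 0$ sends the covariance through the empty $0 \times 1$ matrix to the empty covariance, giving the identity on $0$, which is $\semG{\tikzfig{empty-diag}}$. Axiom \textsf{RI} equates two parallel copies of $\NN$ — so $\semG{\NN \oplus \NN} \colon 0 \to 2$, mean $0$ and covariance $I_2$ — with the same diagram post-composed by (the diagram of) a $2 \times 2$ rotation $R_\varphi = \begin{pmatrix}\cos\varphi & -\sin\varphi \\ \sin\varphi & \cos\varphi\end{pmatrix}$; by \Cref{def:gauss} the latter has mean $0$ and covariance $R_\varphi I_2 R_\varphi^{T} = I_2$ since $R_\varphi$ is orthogonal, so the two sides coincide. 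This is just the determinant-one instance of the $\orth(2)$-invariance of $\NN$ that reappears in full generality as \Cref{prop:orthogonalinv}.

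\textbf{Main obstacle.} There is no genuine difficulty here — the argument is routine bookkeeping. The two points that need a little care are (i) establishing that the zero-covariance morphisms of $\gauss$ form a sub-prop isomorphic to $\aff$, so that the whole affine fragment can be imported from \Cref{prop:presentationsGLA} rather than re-verified equation by equation, and (ii) handling the zero-dimensional objects — empty matrices, the empty mean $\nullm$, the empty covariance $\zeromat$ — correctly in the verifications of \textsf{D} and \textsf{RI}.
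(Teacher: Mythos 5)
Your proposal is correct and follows essentially the same route as the paper: the affine fragment is dispatched via the embedding $\aff \hookrightarrow \gauss$ (which the paper states more tersely as "evident because $\aff$ embeds in $\gauss$"), and axioms \textsf{D} and \textsf{RI} are checked by the same direct covariance computations, with \textsf{Z} correctly excluded. Your extra care in exhibiting the zero-covariance sub-prop is a welcome elaboration but not a different argument.
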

			\begin{proof}
				The equations in the fragment corresponding to $\GAAfwd$ are evident because $\aff$ embeds in $\gauss$. It remains to show the validity of (RI) and (D). Let $R \in O(2)$, then we have
				\begin{align*}
					\semG{\tikzfig{gauss_soundness_1}} &= \N(0,RR^T) = \N(0,I) = \semG{\tikzfig{gauss_soundness_2}} \\
					\semG{\NN \! \Bcounit} &= \N(0,1) ; ! = \semG{\tikzfig{empty-diag}}
				\end{align*}
			\end{proof}
			
			\begin{proposition}
				\label{prop:soundness-quadrel}
				The interpretation of $\GQA$ in $\quadrel$ is sound.
			\end{proposition}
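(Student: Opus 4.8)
The plan is to use that $\sem{\cdot}\colon \GQA \to \quadrel$ is, by construction, the prop morphism that freely extends the stated assignment on generators. Establishing soundness therefore reduces to checking that each equation of Figure~\ref{fig:gaa-axioms} is sent by $\sem{\cdot}$ to a genuine identity of quadratic relations. I would carry this out block by block, treating the affine and relational fragments together and the quadratic fragment separately.

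For the affine and relational fragments, the key observation is that these axioms only mention generators already present in $\GAA$, and that on each such generator $\sem{\cdot}$ coincides with the indicator function $1_{(-)}$ of its interpretation in $\affrel$ — this is immediate by comparing the two lists of definitions. Since $1_{(-)}\colon \affrel \to \quadrel$ is a prop morphism (Section~\ref{sec:backgroundQuadraticRelations}) and $\GAA$ presents $\affrel$ (Proposition~\ref{prop:presentationsGLA}), so that the $\affrel$-interpretation of $\GAA$ is sound for those fragments, post-composing that sound interpretation with the prop morphism $1_{(-)}$ again yields a sound interpretation. Hence every axiom of the affine and relational fragments holds in $\quadrel$: read through $\sem{\cdot}$, each such axiom is merely an identity $1_M = 1_{M'}$ of indicator functions that holds because the underlying affine-relation equality $M = M'$ already holds in $\affrel$.

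The remaining work concerns the three axioms of the quadratic fragment, each dispatched by a one-line computation with the composition rule~\eqref{eq:weighted_rel_comp}. For $\textsf{D}$: the composite scalar $\sem{\NN};\sem{\Bcounit}$ evaluates to $\inf_{x\in\R}\{\tfrac12 x^2 + 0\} = 0 = \sem{\idzero}$. For $\textsf{Z}$: the composite scalar $\sem{\NN};\sem{\Wcounit}$ evaluates to $\inf_{x\in\R}\{\tfrac12 x^2 + \iv{x=0}\} = 0 = \sem{\idzero}$. For $\textsf{RI}$: the two sides differ by post-composing $\sem{\NN}\oplus\sem{\NN}$ — the state $\vecv \mapsto \tfrac12\|\vecv\|^2$ — with the planar rotation $R_\varphi \in \sorth(2)$, interpreted in $\quadrel$ as $\iv{\vecw = R_\varphi\vecv}$; infimising over the shared variable gives $\vecw \mapsto \tfrac12\|R_\varphi^{-1}\vecw\|^2 = \tfrac12\langle \vecw, R_\varphi R_\varphi^{T}\vecw\rangle = \tfrac12\|\vecw\|^2$, which is $\sem{\NN}\oplus\sem{\NN}$ again. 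The same computation covers the mirror orientation of the axiom, and more generally any orthogonal $R \in \orth(2)$.

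The main — and rather minor — obstacle is the bookkeeping around $\textsf{RI}$: one must confirm that the diagram appearing there is precisely the $\GAA$-encoding of the matrix $R_\varphi$, so that under $\sem{\cdot}$ composition with it performs the substitution $\vecv \mapsto R_\varphi^{-1}\vecw$ used above. Once that is pinned down, the argument contains no genuinely hard step: everything else is the handful of elementary infimum evaluations indicated, together with the factorisation of the $\GAA$-part of the interpretation through $\affrel \xrightarrow{1_{(-)}} \quadrel$.
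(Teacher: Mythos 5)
Your proposal is correct and follows essentially the same route as the paper: the affine and relational axioms are discharged via the embedding $\affrel \to \quadrel$ given by indicator functions together with the known soundness of $\GAA$ for $\affrel$, and the three quadratic axioms $\textsf{D}$, $\textsf{Z}$, $\textsf{RI}$ are verified by exactly the same direct infimum computations the paper carries out. No gaps.
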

			\begin{proof}
				$\affrel$ embeds in $\quadrel$ via indicator functions of affine relations. Thus all the axioms of $\GAA$ are verified. It remains to verify the three axioms involving the quadratic generator, namely (RI), (D) and (Z).
				\begin{align*}
					\sem{\tikzfig{gauss_soundness_1}}(\vec y) &= \inff {\vec x} {\frac 1 2 ||\vec x||^2 + \iv{\vec y=R\vec x}} \\
					&= \frac 1 2 ||R^{-1}\vec y||^2 = \frac 1 2 ||\vec y||^2 \\
					&= \sem{\tikzfig{gauss_soundness_2}}(\vec y) \\
					\sem{\NN \! \Bcounit} &= \inff {x} {\frac 1 2 x^2 + 0} = 0 = \sem{\tikzfig{empty-diag}} \\
					\sem{\NN\!\Wcounit} &=	\inff {x} {\frac 1 2 x^2 + \iv{x=0}} = \frac 1 2 \cdot 0^2 = \sem{\tikzfig{empty-diag}}
				\end{align*}
			\end{proof}
			
			\subsection{Completeness of Causal GQA}\label[app]{app:completeness-causal-gqa}
			
			\begin{proposition}\label{prop:flip_invariance}
				$\NN$ is flip invariant, i.e.
				\[ \tikzfig{nn_flip_stm} \]
			\end{proposition}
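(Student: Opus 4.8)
The plan is to obtain flip invariance as a specialisation of the rotation‑invariance axiom scheme \textsf{RI}, exploiting the fact that rotation by the angle $\varphi=\pi$ is exactly the linear map $\begin{pmatrix} -1 & 0 \\ 0 & -1\end{pmatrix}$, i.e.\ the ``flip both wires'' diagram $\scalar{-1}\oplus\scalar{-1}$, and then discarding one of the two wires with $\Bcounit$.

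First I would instantiate \textsf{RI} at $\varphi=\pi$. Since $\cos\pi=-1$ and $\sin\pi=0$, the two off‑diagonal scalars in the $2\times 2$ rotation block become $0$; cleaning up the resulting $0$‑labelled wires inside \GAA{} (which is legitimate by Proposition~\ref{prop:presentationsGLA}, letting us manipulate matrices diagrammatically at will) turns \textsf{RI} into the identity
\[ (\NN\oplus\NN)\,;\,(\scalar{-1}\oplus\scalar{-1}) \;=\; \NN\oplus\NN . \]

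Next I would post‑compose both sides with $\id\oplus\Bcounit$, i.e.\ keep the first wire and discard the second. On the left‑hand side the second summand is $\NN\,;\,\scalar{-1}\,;\,\Bcounit$; by axiom \textsf{del} the scalar is absorbed into $\Bcounit$, and then $\NN\,;\,\Bcounit$ is the empty diagram by axiom \textsf{D}. On the right‑hand side the second summand is $\NN\,;\,\Bcounit$, again the empty diagram by \textsf{D}. The equation therefore collapses to $\NN\,;\,\scalar{-1}=\NN$, which is exactly the claimed flip invariance.

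I do not anticipate a genuine obstacle here: the argument is short, and the only care needed is in matching the graphical presentation of \textsf{RI} (with its four scalar‑labelled wires arranged in the rotation pattern) against the concrete values $\cos\pi,\sin\pi$ and then performing the routine \GAA{} simplification of the zero wires. An alternative route would be to observe that flip invariance is the one‑dimensional instance of the general orthogonal invariance of $\NN$ (\Cref{prop:orthogonalinv}); but since that result is naturally proved \emph{using} the flip and the rotations as building blocks, it is cleaner to establish this base case directly from \textsf{RI}.
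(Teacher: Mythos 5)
Your proposal is correct and follows essentially the same route as the paper: instantiate \textsf{RI} at $\varphi=\pi$ to get $(\NN\oplus\NN);(\scalar{-1}\oplus\scalar{-1})=\NN\oplus\NN$, then discard the second wire using \textsf{del} and the discardability axiom \textsf{D} to collapse the equation to $\NN;\scalar{-1}=\NN$. No issues.
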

			\begin{proof}
				By discardability and rotation invariance for $\varphi = \pi$,
				\[ \tikzfig{nn_flip_proof} \]
			\end{proof}

We show here that the fragment $\GQAfwd$, as introduced in~\Cref{sec:causal-GQA}, presents $\gauss$. The key is a normal form argument, which builds on the following propositions. 
%We will use the basics of GAA recalled in Section~\ref{sec:GAA} and the notions of linear algebra recalled in Appendix~\ref{app:linearalgebra}. 
The first proposition generalises axiom \textsf{RI} to arbitrary orthogonal matrices and the second will allow us to derive the uniqueness of our normal form. 

\begin{proposition}
	\label{prop:orthogonalinv}
	For any $n\times n$ orthogonal matrix $R$ we can derive in $\GQAfwd$ that $\tikzfig{oi} = \tikzfig{nn-thick}$. 
\end{proposition}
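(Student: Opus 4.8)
The plan is to mimic the structure of the proof of Proposition~\ref{prop:gl_invariance}: there, invariance of $\tikzfig{bunit}$ under $\gl(n)$ was reduced to three elementary transformations; here we reduce invariance of $\NN^{\oplus n}$ under $\orth(n)$ to two elementary \emph{orthogonal} transformations. Write $\NN^{\oplus n}$ for the $n$-fold monoidal power of the generator $\NN$ (the diagram on the right-hand side of the statement). Recall that a \emph{Givens rotation} $G_{ij}(\varphi)$ is the orthogonal matrix acting as the planar rotation $R_\varphi = \left(\begin{smallmatrix}\cos\varphi & -\sin\varphi\\ \sin\varphi & \cos\varphi\end{smallmatrix}\right)$ on the pair of coordinates $i,j$ and as the identity on the remaining $n-2$ coordinates, and that a coordinate reflection $D_i$ is $\mathrm{diag}(1,\dots,-1,\dots,1)$ with $-1$ in position $i$. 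It is a classical fact of linear algebra --- the content of the Givens/QR reduction --- that every $R \in \orth(n)$ factors as a finite product of matrices of these two forms: left-multiplying $R$ by Givens rotations makes it upper triangular, hence diagonal with $\pm 1$ entries, i.e. a product of the $D_i$. Since $\GAAfwd$ presents $\aff$ (Proposition~\ref{prop:presentationsGLA}), this matrix factorisation is derivable diagrammatically, so the diagram for $R$ is equal in $\GQAfwd$ to the sequential composite of the diagrams for its elementary factors.

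It therefore suffices to show that $\NN^{\oplus n}$ is invariant under each elementary factor, and then to peel the factors off one at a time. For a Givens rotation, after using $\aff$-reasoning to permute wires so that $i,j$ are adjacent, the interchange law presents $\NN^{\oplus n}; G_{ij}(\varphi)$ as $(\NN\oplus\NN); R_\varphi$ in parallel with $\NN^{\oplus(n-2)}$; by axiom \textsf{RI} this equals $(\NN\oplus\NN)$ in parallel with $\NN^{\oplus(n-2)}$, i.e. $\NN^{\oplus n}$. For a coordinate reflection $D_i$, the same bookkeeping reduces the claim to $\NN ; \scalar{-1} = \NN$ applied on wire $i$ (with identities elsewhere), which is precisely the flip-invariance of $\NN$ established in Proposition~\ref{prop:flip_invariance}. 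Composing these invariances along the factorisation of $R$ gives $\NN^{\oplus n}; R = \NN^{\oplus n}$, as required.

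The only mildly delicate points are (a) the linear-algebra lemma that $\orth(n)$ is generated by Givens rotations together with coordinate reflections, and (b) the monoidal rearrangement needed to isolate a $2\times 2$ coordinate block inside an $n$-wire diagram. Neither is a genuine obstacle: (a) is classical, and (b) is handled uniformly by the fact that $\GAAfwd$ already proves every identity of $\aff$, so all the required wire-shuffling is available for free inside $\GQAfwd$. In particular, no axiom beyond \textsf{RI} and the already-derived flip-invariance of $\NN$ is needed.
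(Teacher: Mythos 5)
Your proof is correct and follows essentially the same route as the paper's: both factor $R$ via QR/Givens reduction into a product of Givens rotations and a $\pm 1$ diagonal matrix, handle the rotations with axiom \textsf{RI} and the reflections with the flip-invariance of \Cref{prop:flip_invariance}, and appeal to the presentation of $\aff$ by $\GAAfwd$ (\Cref{prop:presentationsGLA}) for the diagrammatic bookkeeping. No gaps.
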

			\begin{proof}
				A \emph{Givens rotation} is a rotation matrix which acts as a rotation along two coordinate axes, and the identity otherwise. From (RI), it follows immediately that $\NN$ is invariant under Givens rotations. Now we use the fact (proved below) that every orthogonal matrix $R$ can be written as a product $R = Q_1\cdots Q_n \cdot D$ where the $Q_i$ are Givens rotations and $D$ is a diagonal matrix with entries $\pm 1$. From this, the proposition follows by repeated application of invariance under Givens rotations and flip invariance (\Cref{prop:flip_invariance}).
				
				It remains to prove the claimed decomposition of $R$, which we establish as follows: The usual algorithm for QR-decomposition gives $R=QU$, where $Q$ is a product of Givens rotations and $U$ is upper triangular. Then $U=Q^TR$ is orthogonal (normal) in addition to being upper triangular, hence must be diagonal, with entries $\pm 1$. All of these linear algebraic transformations can be performed diagrammatically in $\GQAfwd$, by Proposition~\ref{prop:presentationsGLA}.
			\end{proof}
			
\begin{proposition}\label{prop:encoding_uniqueness}
	For $A \in \R^{n \times m_1}, B \in \R^{n \times m_2}$ such that $AA^T = BB^T$,  $\tikzfig{gauss_state_A} = \tikzfig{gauss_state_B}$.
\end{proposition}
			\begin{proof}
				Without loss of generality we can assume that $m_1 = m_2 = m$: This is because if, say, $m_1 > m_2$, we can extend $B$ with further zero columns to $\widetilde B = (B\,0)$ and this still satisfies the assumptions $AA^T = \widetilde B\widetilde B^T$ and 
				$$
				\tikzfig{gauss_state_eq_tilde}
				$$
				By \Cref{prop:orth_column_equivalence}, $A = BR$ for some $R \in \orth(m)$, hence $\tikzfig{gauss_state_A} = \tikzfig{gauss_state_B}$ follows from Proposition~\ref{prop:orthogonalinv}.
			\end{proof}

\noindent We can now derive an encoding of Gaussian distributions in $\GQAfwd$.

			\begin{definition}
	Let $\mathcal N(\mu,\Sigma)$ be a Gaussian distribution on $\R^n$. For $\Sigma = LL^T$, where $L$ is lower triangular, we define $\tikzfig{gauss_encoding}$.
	%By \Cref{prop:encoding_uniqueness}, this encoding is provably independent of the choice of $A$.
\end{definition}

			\begin{proposition}\label{prop:nf_gauss}
				Let $M : m \to n$ be any string diagram in $\GQAfwd$. Then $M$ can be brought into the form
				\[ \tikzfig{gauss_definability}\]
				with $A \in \R^{n \times m}, \mu \in \R^n$ and $\Sigma \in \R^{n \times n}$ positive semidefinite.
			\end{proposition}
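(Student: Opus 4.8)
The plan is to prove this by structural induction on the string diagram $M$, using the generators of $\GQAfwd$ as base cases and showing that the claimed normal form is closed under sequential composition, parallel composition, symmetry and identity. The normal form $\tikzfig{gauss_definability}$ encodes a triple $(A,\mu,\Sigma)$: informally a ``matrix part'' $A$ feeding the inputs forward, an affine ``offset'' $\mu$, and a noise part $\Sigma = LL^T$ realised through copies of $\NN$ pushed through a lower-triangular $L$. So the task is really to show that the set of diagrams expressible in this shape forms a sub-prop of $\GQAfwd$ containing all generators.

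First I would check the base cases. The generators $\tikzfig{bcomult}$, $\tikzfig{bcounit}$, $\tikzfig{wmult}$, $\tikzfig{wunit}$, $\scalar{k}$, $\tikzfig{foot}$ all live in $\GAAfwd$ and hence admit a normal form with $\Sigma = 0$ (their matrix/affine data is immediate), and $\NN$ itself is the normal form for $(\,\nullm, 0, (1))$. Identities and symmetries are the evident permutation matrices with $\mu = 0$, $\Sigma = 0$. For the inductive steps: parallel composition is easy, since $(A_1,\mu_1,\Sigma_1) \oplus (A_2,\mu_2,\Sigma_2)$ corresponds to the block-diagonal data, and a block-diagonal matrix with lower-triangular blocks stays lower-triangular after a permutation of coordinates (which is available in $\GAAfwd$ by Proposition~\ref{prop:presentationsGLA}). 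The substantive step is sequential composition: given $M$ in normal form for $(A,\mu,\Sigma)$ with $\Sigma = LL^T$ and $N$ in normal form for $(C,\nu,\Xi)$ with $\Xi = KK^T$, I would compose the two diagrams and then use the $\GAAfwd$-axioms (copying/addition bimonoid laws, the reals fragment, linearity of matrices) to slide $C$ past the copy and addition structure, collecting the composite into a single matrix part $CA$, offset $C\mu + \nu$, and noise fed through the block matrix $(CL \mid K)$. The remaining obstacle is that $(CL \mid K)$ is in general $n \times (k_1 + k_2)$ and not lower triangular; but by Proposition~\ref{prop:orthogonalinv}, $\NN^{\oplus(k_1+k_2)}$ is invariant under orthogonal matrices, so I can post-compose $(CL\mid K)$ with an orthogonal $R$ chosen (by QR decomposition, performed diagrammatically via Proposition~\ref{prop:presentationsGLA}) so that $(CL \mid K)R$ has the form $(L' \mid 0)$ with $L'$ lower triangular and square, discarding the trailing zero columns via $\tikzfig{bcounit}$-counitality. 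Then $L'(L')^T = (CL\mid K)(CL\mid K)^T = C\Sigma C^T + \Xi$, exactly the composite covariance in $\gauss$, and the diagram is back in normal form.

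I expect the main obstacle to be this last reduction: showing, purely by the equational theory of $\GQAfwd$, that an arbitrary rectangular noise-feeding matrix can be replaced by a square lower-triangular one. This is precisely where the quadratic axiom \textsf{RI} (generalised to orthogonal invariance in Proposition~\ref{prop:orthogonalinv}) and the discardability axiom \textsf{D} (to kill the surplus noise wires once they are zeroed out) do the work that is unavailable in pure $\GAAfwd$; every other manipulation is bookkeeping inside $\GAAfwd$ licensed by Proposition~\ref{prop:presentationsGLA}. A cleaner way to organise the writeup is to first prove a lemma that \emph{any} parallel wires carrying $\NN$, post-composed with an arbitrary matrix $B$, equals the same wires post-composed with a square lower-triangular $L$ satisfying $LL^T = BB^T$ (immediate from QR plus Proposition~\ref{prop:orthogonalinv} plus \textsf{D}); with that lemma in hand the inductive step for sequential composition becomes routine diagram chasing, and the proposition follows.
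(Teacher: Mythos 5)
Your proof is correct, but it takes a genuinely different route from the paper's. The paper exploits the fact that $\NN$ is a \emph{state} ($0 \to 1$): by the laws of symmetric monoidal categories alone, every occurrence of $\NN$ can be slid to the extreme left of the diagram, so that what remains is a string diagram of $\GAAfwd$; completeness of $\GAAfwd$ for $\aff$ (\Cref{prop:presentationsGLA}) then puts that remainder into the affine normal form $f(x_1,x_2) = Lx_1 + \mu + Ax_2$, and the result follows in one step, with no induction and no appeal to orthogonal invariance (which the paper reserves for \emph{uniqueness} of the normal form, \Cref{prop:encoding_uniqueness}). Your structural induction instead verifies closure of the normal form under the prop operations, and its sequential-composition step syntactically re-derives the $\gauss$ composition formula $C\Sigma C^T + \Xi$ via QR decomposition, \Cref{prop:orthogonalinv} and axiom \textsf{D} applied to the rectangular matrix $(CL \mid K)$. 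This is more work than necessary here, but it makes the match with \Cref{def:gauss} explicit and would survive in a setting where the noise generator had inputs and so could not be slid out globally. Two small points: a block-diagonal of lower-triangular rectangular matrices need not be lower triangular, and a coordinate permutation does not repair this in general --- but your own closing lemma (any noise matrix $B$ may be replaced by a square lower-triangular $L$ with $LL^T = BB^T$) covers the parallel case as well, and strictly speaking the proposition only asks for $\Sigma$ positive semidefinite, so this is cosmetic rather than a gap.
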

			\begin{proof} By the laws of symmetric monoidal categories, (Appendix \ref{app:lawssmc}) it is possible to move all the occurrences of the generators $\NN \colon 0 \to 1$ to the extreme left of any given string diagram where they occur, while preserving equality in $\GQAfwd$. Applying this to $M$, we obtain a string diagram of shape $\tikzfig{tildeM}$ equivalent to $M$, where $\widetilde{M}$ lies in the affine fragment $\aff$ of $\GQAfwd$, meaning that $\semG{\widetilde{M}}$ is given by $f(x_1,x_2) = Lx_1 + \mu + Ax_2$ for some matrices $A,L \in \mathbb{R}^{n\times m}$ and $\mu \in \mathbb{R}^n$. Now, $\semG{\widetilde{M}}$ is also a morphism of $\aff$: by definition of $\sem{G}{\widetilde{M}}$ as above and completeness of $\GAAfwd$ for the affine fragment~\cite{pbsz}, we have that
			\[ \tikzfig{nf_gauss_2bis} \]
			is provable in $\GQAfwd$, and thus in $\GQAfwd$. We can then conclude by observing that, by definition of composition in $\gauss$ and functoriality of $\semG{\cdot}$:
			\[ \tikzfig{nf_gauss_2} \]
			\end{proof}
			
			\subsection{Quadratic Relations}\label[app]{app:quadraticproofs}
			We give the proofs missing in~\Cref{sec:backgroundQuadraticRelations}.
			We will use the following characterisation.
\begin{proposition}\label{prop:quadrel_characterization}
	The following are equivalent for a function $f : \R^n \to [-\infty,\infty]$
	\begin{enumerate}[topsep=0pt,itemsep=-1ex,partopsep=1ex,parsep=1ex]
		\item $f$ is a nonnegative partial quadratic function
		\item\label{it:sylvester} $f$ can be written in the form $f(x) = h(A(x-a)) + c$
		where $A$ is an invertible matrix, $c \geq 0$ and $h$ an elementary convex partial quadratic function
		\item $f$ can be written as $f(x) = \inff {} { ||y||^2 : (x,y) \in M }$
		where $M \subseteq \R^n \times \R^m$ is an affine subspace.
	\end{enumerate}
\end{proposition}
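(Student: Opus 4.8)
The plan is to prove the cycle of implications $(1)\Rightarrow(2)\Rightarrow(3)\Rightarrow(1)$. The last implication is essentially free: given $f(x)=\inff{}{\,\|y\|^2 : (x,y)\in M\,}$ with $M\subseteq\R^n\times\R^m$ an affine subspace, the function $g(y)=\|y\|^2$ is a nonnegative (total) quadratic function, so $f$ is a nonnegative partial quadratic function directly by \Cref{lemma:quadrel_infimization}. So the work is concentrated in $(1)\Rightarrow(2)$ and $(2)\Rightarrow(3)$.

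For $(2)\Rightarrow(3)$, suppose $f(x)=h(A(x-a))+c$ with $A$ invertible, $c\geq 0$ and $h(z)=\sum_i \lambda_i z_i^2$, $\lambda_i\in[0,\infty]$. Write $z=A(x-a)$, which is an affine function of $x$. I would take $M\subseteq\R^n\times\R^m$ to be the affine subspace of all tuples $(x,\,y_0,\,(y_i)_{i\in P})$ --- where $P=\{\,i : 0<\lambda_i<\infty\,\}$ and $m=1+|P|$ --- satisfying $y_0=\sqrt c$, $y_i=\sqrt{\lambda_i}\,z_i$ for $i\in P$, and $z_j=0$ for every $j$ with $\lambda_j=\infty$. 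For fixed $x$: if some constraint $z_j=0$ fails, the fibre over $x$ is empty and $\inff{}{\,\|y\|^2 : (x,y)\in M\,}=+\infty=f(x)$; otherwise $y$ is uniquely determined and $\|y\|^2 = c+\sum_{i\in P}\lambda_i z_i^2 = h(z)+c = f(x)$, the coordinates with $\lambda_i=0$ contributing nothing. This is exactly form $(3)$.

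The substantial step is $(1)\Rightarrow(2)$. Write $f(x)=\langle x,\Sigma x\rangle+\langle b,x\rangle+c_0+\iv{x\in M_0}$ with $\Sigma$ symmetric and $M_0=x_0+V$ affine, $V$ a vector subspace. Restricting along $x=x_0+v$, $v\in V$, turns $f$ into a quadratic function $g$ on $V$, and nonnegativity of $f$ forces two facts: the leading form $v\mapsto\langle v,\Sigma v\rangle$ must be positive semidefinite on $V$ (otherwise $g(x_0+tv)\to-\infty$ along a negative direction), and the linear part of $g$ must be orthogonal to $\ker(\Sigma|_V)$ (otherwise $g$ is unbounded below along that direction). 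Hence $g$ can be completed into a square: there is $v^\ast\in V$ with $g(v^\ast+v')=\langle v',\Sigma v'\rangle + c$ for all $v'\in V$, where $c\geq 0$ is the minimum of $f$ over $M_0$. Setting $a:=x_0+v^\ast$ yields $f(x)=\langle x-a,\Sigma(x-a)\rangle+c$ when $x-a\in V$, and $f(x)=\infty$ otherwise. Finally, I would diagonalise $v\mapsto\langle v,\Sigma v\rangle$ on $V$ through an orthonormal eigenbasis $e_1,\dots,e_k$ with eigenvalues $\mu_i\geq 0$, extend it by a basis $w_1,\dots,w_{n-k}$ of a complement $W$ of $V$, and let $A$ be the invertible matrix sending $e_i$ to the $i$-th standard basis vector and $w_j$ to the $(k+j)$-th. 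Then $f(x)=h(A(x-a))+c$ where $h(z)=\sum_{i=1}^{n}\lambda_i z_i^2$ with $\lambda_i=\mu_i$ for $i\le k$ and $\lambda_i=\infty$ for $i>k$, an elementary convex partial quadratic function, so $(2)$ holds.

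I expect $(1)\Rightarrow(2)$ to be the main obstacle: it carries all the genuine content, namely the structure theory of nonnegative quadratic forms on a subspace (positive semidefinite leading part, completable linear part, nonnegative minimum) together with the orthogonal diagonalisation. Each ingredient is elementary, but assembling the global change of coordinates $A$ and the shift $a$ correctly --- in particular reconciling the shift $v^\ast$ internal to $V$ with the translation $x_0$ defining $M_0$, and keeping track of which coordinates become the infinite-weight block --- requires care. By contrast, $(2)\Rightarrow(3)$ is a bookkeeping exercise and $(3)\Rightarrow(1)$ is a one-line appeal to \Cref{lemma:quadrel_infimization}.
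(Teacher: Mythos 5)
Your arguments for $(1)\Rightarrow(2)$ and $(2)\Rightarrow(3)$ are correct and essentially match the paper's: for $(1)\Rightarrow(2)$ the paper likewise uses an affine change of coordinates to flatten the domain onto a coordinate subspace and then diagonalises the restricted total nonnegative quadratic function (your version spells out the two consequences of nonnegativity --- positive semidefiniteness of the leading form on $V$ and vanishing of the linear part on its kernel --- which the paper leaves implicit); for $(2)\Rightarrow(3)$ the paper builds the same graph-like affine relation, recording $\sqrt c$ in an extra output coordinate and turning the infinite-weight coordinates into the constraint $\vec{y_3}=0$. The only cosmetic difference is that the paper first normalises $\lambda_i\in\{0,1,\infty\}$, whereas you keep general $\lambda_i$ and absorb $\sqrt{\lambda_i}$ into the relation; both work.

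The genuine problem is $(3)\Rightarrow(1)$. You dispatch it by citing \Cref{lemma:quadrel_infimization}, but in the paper that lemma is \emph{derived from} \Cref{prop:quadrel_characterization}: its proof uses the representation of items (2)--(3) to reduce to an elementary function, and then invokes the proposition again to conclude. So, as written, your cycle of implications is circular within the paper's logical development --- the closure-under-infimisation statement is exactly what the characterisation is meant to deliver, not something you may assume when proving it. The repair is cheap and is what the paper does: prove $(3)\Rightarrow(1)$ directly. Write the fibre of $M$ over $x$ as $M_x = g(x)+D$ for a fixed vector subspace $D\subseteq\R^m$ and an affine map $g$ valued in $D^\bot$, defined for $x\in\pi_X M$; the minimum-norm point of the affine set $g(x)+D$ is then $g(x)$ itself, so $\inff{}{||y||^2 : (x,y)\in M} = ||g(x)||^2+\iv{x\in\pi_X M}$, which is visibly a nonnegative partial quadratic function. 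With that substitution in place of the appeal to \Cref{lemma:quadrel_infimization}, your proof is complete.
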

Let us first describe the meaning of these statements. \Cref{it:sylvester} is an analogue of Sylvester's law of inertia for partial quadratic functions; in fact, it suffices to allow the values $\lambda_i \in \{0,1,\infty\}$ on the diagonal. Item 3 shows that nonnegative partial convex functions are precisely the class of functions that arise in constrained least squares optimisation~\cite{Boydbook}, which justifies our choice of using them as the morphisms of the semantic domain $\quadrel$. 
			
			\begin{proof}[Proof of \Cref{prop:quadrel_characterization}]
				$(1) \Rightarrow (2)$ using an affine coordinate change, we can assume that the domain of $f$ is the subspace $M = \R^m \times \{0\}$. The restricted function $\tilde f(x_1, \ldots, x_m) = f(x_1,\ldots,x_m,0,\ldots,0)$ is a total nonnegative quadratic function so we can apply the characterisation \Cref{prop:quadrel_characterization} to $\tilde f$ to obtain the desired form
				\[ f(x) = h(A((x_1,\ldots,x_m)-a)) + \infty \cdot x_{m+1} + \ldots + \infty \cdot x_n + c\]
				
				$(2) \Rightarrow (3)$ Under the affine coordinate change $\vec{y} = A(\vec x - \vec a)$, it suffices to consider $f(\vec y) = h(\vec y) + c$ where $h$ is an elementary quadratic function of the form $h(\vec{y_1},\vec{y_2},\vec{y_3}) = ||\vec{y_1}||^2 + \iv{\vec{y_3} = 0}$. In this case, we can define the affine relation
				\[ M = \{ ((\vec{y_1},\vec{y_2},\vec{y_3}),(\vec{y_1},\sqrt c)) \s \vec{y_3} = 0 \} \]
				and have $f(x) = \inff {} {||z||^2 : (x,z) \in M}$.
				
				$(3) \Rightarrow (1)$ If $M \subseteq \R^n \times \R^m$ is an affine relation, then there exists a vector subspace $D \subseteq \R^n$ and an affine map $g : \R^n \to D^\bot$ such that
				\[ M_x = \begin{cases}
					g(x) + D, &x \in \pi_X M \\
					\emptyset, &\text{otherwise}
				\end{cases}\]
				where $M_X = \{ y : (x,y) \in M\}$. Thus we have
				\[ \inff {} {||y||^2 : y \in M_x} = ||g(x)||^2 + \iv{x \in \pi_XM} \]
				which is evidently a nonnegative partial quadratic function.
			\end{proof}
		
		We are now ready for the proof of \Cref{lemma:quadrel_infimization}.
		
			\begin{proof}[Proof of \Cref{lemma:quadrel_infimization}]
				Let $g : \R^m \to \exR$ be a nonnegative partial quadratic function and $M \subseteq \R^m \times \R^n$ an affine relation. Using the representation of \Cref{prop:quadrel_characterization} and an affine coordinate change, it suffices to consider the case where $g$ is an elementary function, i.e. a constrained minimization problem of the form
				\[ f(x) = \inff {} { ||\vec{y_1}||^2 + \iv{\vec{y_3}=0} : (x,(\vec{y_1},\vec{y_2},\vec{y_3})) \in M } \]
				We define the affine relation $M' = \{ (x,\vec{y_1}) : (x,(\vec{y_1},\vec{y_2},0)) \in M \}$ and conclude that $f(x) = \inff {} { ||y||^2 : (x,y) \in M' }$ which is a partial quadratic function by \Cref{prop:quadrel_characterization}.
			\end{proof}

			\subsection{Completeness for Extended Gaussian Processes}\label[app]{app:extendendgauss}
			
			An \emph{extended Gaussian distribution} is a Gaussian distribution on a quotient space $\R^n/D$; the vector subspace $D$ is called a \emph{nondeterministic fibre}. This concept intends to model the idea of a distribution carrying both a probabilistic and a nondeterministic component. We can write it in additive notation $\N(\mu,\Sigma) + D$, where we think of the subspace $D$ as an idealised uniform distribution. Extended Gaussians support the same basic transformations as ordinary Gaussians, such as pushforwards and conditioning~\cite{stein2023gaussex}. A related notion appears in Willems' approach to control theory, under the name of open stochastic system~\cite{willems:oss}. A major reason of interest for extended Gaussians is that they give semantics to Gaussian probabilistic programming extended with a construct $\uniform()$ to sample from a uniform prior~\cite{stein2023gaussex,stein2023compositional}. 
			
			Similarly to Gaussian maps (\Cref{sec:gauss}), one may define a notion of extended Gaussian map, informally written as $f(x) = Ax + \N(\mu,\Sigma) + D$, where $Ax + \N(\mu,\Sigma)$ is a Gaussian map and $D$ is the nondeterministic component. A prop $\gaussex$ of these maps is defined~\cite{stein2023gaussex}. 
			
			%Semantically, the relationship between $\gauss$, $\gaussex$, and $\quadrel$ is clarified in~\cite{stein2023compositional}, in the form of embeddings $\gauss \to \gaussex \to \quadrel$. 
			
			We first recall the complete definition of the prop $\gaussex$ of extended Gaussian maps.
			
			\begin{definition}[\cite{stein2023gaussex}]
				The prop $\gaussex$ has morphisms $m \to n$ pairs $(m \xrightarrow{A} k \xleftarrow{P} n, \varphi)$, where $m \xrightarrow{A} k \xleftarrow{P} n$ is a cospan in $\vect$, $P$ is surjective, and $\varphi$ is a morphism of type $0 \to k$ in $\gauss$, i.e. a Gaussian distribution $\N(\mu,\Sigma)$ on $\mathbb{R}^k$. In additive notation, we may represent such a morphism as $f(x) = Ax + \N(\mu',\Sigma') + D$, where the kernel subspace $D := \ker(P)$ is the nondeterministic fibre, and $\mu' := Q\mu$, $\Sigma' := Q \Sigma Q^T$ for some right-inverse $Q$ of $P$. Note $f(x)$ selects one element $z \in D$ of the fibre, which we can interpret as the non-deterministic choice of $f$.
				
				The monoidal product of $(m \xrightarrow{A} k \xleftarrow{P} n, \varphi)$ and $(m' \xrightarrow{A'} k' \xleftarrow{P'} n', \varphi')$ is defined as $(m+m' \xrightarrow{A\oplus A'} k+k' \xleftarrow{P\oplus P'} n+n', \varphi \oplus \varphi')$, using the monoidal product in $\vect$ on the cospan legs and the one in $\gauss$ on the distributions. The sequential composition of $(m_1 \xrightarrow{A_1} k_1 \xleftarrow{P_1} n, \varphi_1)$ and $(n \xrightarrow{A_2} k_2 \xleftarrow{P_2} n_2, \varphi_2)$ is defined as $(m_1 \xrightarrow{A_1 ; L} k \xleftarrow{P_2;R } n_2, \varphi)$, where $k_1 \xrightarrow{L} k \xleftarrow{R} k_2$ is the pushout of $k_1 \xleftarrow{P_1} n \xrightarrow{A_1} k_2$ in $\vect$,
				%\begin{equation*}
				%	\xymatrix@R=15pt@C=15pt{
					%n \ar[d]_{P_1} \ar[r]^{A_2} & k_2 \ar[d]^R \\
					% k_1 \ar[r]^L & k
					%	}
				%\end{equation*}is a pushout
				in $\vect$ and $\varphi$ is the convolution in $\gauss$ of distributions $\varphi_1 ; L \colon 0 \to k$ and $\varphi_2 ; R \colon 0 \to k$.\footnote{As usual, for composition by pushout to be well-defined, strictly speaking morphisms of this category are equivalence classes of isomorphic cospans.}
			\end{definition}
			One may phrase $\gaussex$ more abstractly as a category of `decorated' cospans~\cite{fong2015decorated}, see \cite{stein2023gaussex} for details. Also, $\gaussex$ is a Markov category~\cite{fritz-markov}. Note there is an embedding $\gauss \to \gaussex$ where $(A,b,\Sigma)$ is sent to the cospan $(m \xrightarrow{A} n \xleftarrow{Id} n, \N(b,\Sigma))$ whose right leg is the identity, i.e. the nondeterministic fibre vanishes. Also there is an embedding $\gaussex \to \quadrel$ mapping the cospan $(m \xrightarrow{A} k \xleftarrow{P} n, \N(b,\Sigma))$ to the negative log-likelihood function
			\begin{equation*}
				\begin{split}
					f(x, y) = \langle Py - Ax - b, \Sigma^+ (Py - Ax - b) \rangle + \\ \iv{Py - Ax - b \in \im(\Sigma)}.
				\end{split}
			\end{equation*}
			with $\Sigma^+$ the Moore-Penrose pseudoinverse of $\Sigma$. This embedding is discussed in~\cite{stein2023compositional}. Proving functoriality is highly non-trivial, but it becomes a simple corrollary when paired with our presentation results, see \Cref{sec:functors}.

						\begin{proposition}\label{prop:nf_gaussex}
				Let $M : m \to n$ be any string diagram in $\GQAfwdex$. Then $M$ can be brought into the form
				\[ \tikzfig{nf_gaussex}\]
				where $D \subseteq \R^n$ is a vector subspace, and $\mu \in D^\bot, \im(\Sigma) \subseteq D^\bot, \im(A) \subseteq D^\bot$.
			\end{proposition}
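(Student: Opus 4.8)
The plan is to follow the strategy used for $\GQAfwd$ in the proof of \Cref{prop:nf_gauss}, but now also accounting for the nondeterministic fibre contributed by $\Bunit$, and then to clean up the resulting affine data using the invariance laws specific to $\GQAfwdex$.

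First I would slide every occurrence of the two generators $\NN \colon 0 \to 1$ and $\Bunit \colon 0 \to 1$ to the extreme left of $M$ — this is possible by the laws of symmetric strict monoidal categories, since both are states — and regroup, so that $M$ equals, in $\GQAfwdex$, a diagram of the shape $\big(\,p\text{ copies of }\Bunit\text{ in parallel with }q\text{ copies of }\NN\text{ and }\id_m\,\big)\,;\,N$, where $N \colon p+q+m \to n$ is built only from the generators of the affine fragment, i.e.\ of $\GAAfwd$. Since the equational theory of $\GAAfwd$ is contained in that of $\GQAfwdex$ and $\GAAfwd$ presents $\aff$ (\Cref{prop:presentationsGLA}), I may replace $N$ by the canonical matrix-form diagram of the affine map it denotes, say $f(z_1,z_2,x) = Bz_1 + Lz_2 + Ax + \mu$ with $B \in \R^{n\times p}$, $L \in \R^{n\times q}$, $A \in \R^{n\times m}$, $\mu \in \R^n$. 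I then set $D := \im(B)$ and let $P$ be the orthogonal projection of $\R^n$ onto $D^\bot$, with complementary projection $P_D = I - P$.

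Next I would decompose $A = PA + P_DA$, $L = PL + P_DL$, $\mu = P\mu + P_D\mu$; since $\im(P_D) = D = \im(B)$, each $D$-component factors through $B$, say $P_DA = BE_A$, $P_DL = BE_L$, $P_D\mu = Be_\mu$. Consequently $f = \rho\,;\,f'$, where $f'(z_1,z_2,x) = Bz_1 + (PL)z_2 + (PA)x + P\mu$ is the ``$D^\bot$-reduced'' map and $\rho$ is the invertible affine reparametrisation $(z_1,z_2,x) \mapsto (z_1 + E_Lz_2 + E_Ax + e_\mu,\ z_2,\ x)$, which fixes the $z_2$- and $x$-coordinates. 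The heart of the argument is the \emph{absorption lemma}: precomposing $\rho$ with the leftmost block $\big((p\text{ copies of }\Bunit) \oplus (q\text{ copies of }\NN) \oplus \id_m\big)$ leaves that block unchanged in $\GQAfwdex$. By bilinearity this reduces to two elementary derivations: (a) $\Bunit$ followed by translation by a constant equals $\Bunit$, which is axiom $\textsf{TI}$; and (b) for a single passenger wire, $(\Bunit \oplus \id_1)\,;\,\big((z,s)\mapsto(z+ks,s)\big) = \Bunit \oplus \id_1$, which follows from discardability of $\Bunit$, axiom $\textsf{SI}$, and the Hopf/bimonoid laws of $\GAAfwd$, by an argument of the same flavour as the shear case in the proof of \Cref{prop:gl_invariance}. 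Iterating (a) and (b) absorbs $E_Lz_2$, $E_Ax$ and $e_\mu$ into the $\Bunit$'s, leaving $M$ equal to $\big((p\text{ copies of }\Bunit)\oplus(q\text{ copies of }\NN)\oplus\id_m\big)\,;\,N'$ with $N'$ the matrix-form diagram of $f'$.

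Reading off from $f'$, the corresponding extended Gaussian has nondeterministic fibre $D = \im(B)$, linear part $PA$ with $\im(PA) \subseteq D^\bot$, mean $P\mu \in D^\bot$, and covariance $\Sigma := (PL)(PL)^T$ with $\im(\Sigma) = \im(PL) \subseteq D^\bot$ — exactly the stated form; and if one additionally wants the noise matrix in lower-triangular (Cholesky) shape and the fibre $D$ displayed through a chosen spanning matrix, these are purely linear-algebraic rearrangements of $N'$, derivable inside $\GAAfwd$ by \Cref{prop:presentationsGLA}. I expect the main obstacle to be the absorption lemma, and specifically part (b): one must check that sliding a noise- or input-dependent contribution into $\Bunit$ introduces no residual correlation between the nondeterministic component and the remaining probabilistic/input data — this is precisely what makes the reparametrisation $\rho$ invisible, and it is the step where the axioms of $\GQAfwdex$ proper, rather than those of $\GAA$ alone, are genuinely used.
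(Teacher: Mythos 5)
Your proof follows essentially the same route as the paper's: slide $\NN$ and $\Bunit$ to the left, normalise the remaining affine part using completeness of $\GAAfwd$, set $D$ to be the image of the matrix fed by the $\Bunit$'s, and absorb the $D$-components of the other affine data into the nondeterministic part — your ``absorption lemma'' is exactly the paper's \Cref{prop:projection_formula}, just packaged as an invertible reparametrisation of the latent variables rather than as a projection of the passenger terms onto $D^\bot$. The only point to tighten is part (b) of your absorption step: with a genuine passenger wire this does not follow from discardability, \textsf{SI} and the bimonoid laws alone in the style of the shear case of \Cref{prop:gl_invariance} (where both wires are fed by $\Bunit$); it is precisely the wire-dependent translation invariance encoded by \textsf{TI} (equivalently, by \Cref{prop:projection_formula}) that makes it derivable.
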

			\begin{proof}
				The proof is analogous to the one of \Cref{prop:nf_gauss}. Given a string diagram $M$ in $\GQAfwdex$, we may move all the occurrences of $\NN$ and $\Bunit$ to the left, thus obtaining a string diagram $\widetilde{M}$ in $\GAAfwd$ such that $M =\tikzfig{bunitxnn-tildeM}$, where $\semG{\widetilde{M}}$ is given by the linear map $f(\vec{x_1},\vec{x_2},\vec{x_3})) = S\vec{x_1} + M\vec{x_2} + \nu + B\vec{x_3}$ for some matrices $B,M \in \mathbb{R}^{n\times m}$ and vector $\nu \in \mathbb{R}^n$. Appealing to the completeness of $\GAAfwd$ for the affine fragment~\cite{pbsz}, we can obtain the following diagram:
				\[M = \tikzfig{bunitxnn-tildeM} =  \tikzfig{nf_gaussex_2} \]
				Then, let $D := \im(S)$. By the associativity of $\Wmult$ followed by \Cref{prop:projection_formula}, we get 
				\[
				\tikzfig{nf_gaussex_2} = \tikzfig{nf_gaussex_3} = \tikzfig{nf_gaussex_4} 
				\]
				Then, since matrices distribute over the sum ($\Wmult$) we have 
				\[
				 \tikzfig{nf_gaussex_4}  = \tikzfig{nf_gaussex_5} = \tikzfig{nf_gaussex_6} 
				\]
				where $P_S$ is the orthogonal projection onto $D^\bot$, and $A:= P_S B$, $L:= P_SM$, $\mu = P_S\nu$. Finally, with $\Sigma := LL^T$, we get
				\[ \tikzfig{nf_gaussex_6} =  \tikzfig{nf_gaussex} \]
			\end{proof}
			
			Using similar methods as in \Cref{sec:causal-GQA},we can now show that the prop $\GQAfwdex$ presents $\gaussex$.
						\begin{theorem}\label{thm:completenessgaussex}
				The prop $\gaussex$ is presented by $\GQAfwdex$.
			\end{theorem}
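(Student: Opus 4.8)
The plan is to mirror the three-step proof of \Cref{thm:completenessgauss} (soundness, definability, completeness), using \Cref{prop:nf_gaussex} as the key normal-form device and interpreting the extra generator $\Bunit$ as the purely nondeterministic state $f() = \N(0,0) + \R$ --- that is, the cospan $0 \to 1 \xleftarrow{!} 1$ decorated with the trivial distribution --- while every other generator is interpreted as in the composite $\GQAfwd \xrightarrow{\semG{\cdot}} \gauss \hookrightarrow \gaussex$. By freeness this extends to a prop morphism $\semEG{\cdot}\colon \GQAfwdex \to \gaussex$ provided it is sound. Soundness of the $\GAAfwd$-fragment and of (RI), (D) is inherited from the embeddings $\aff \hookrightarrow \gauss \hookrightarrow \gaussex$ and from soundness of $\GQAfwd$ for $\gauss$ (\Cref{prop:soundnessgauss}); the three new axioms of \Cref{def:gqafwdex} are sound because discarding $\Bunit$ yields the empty morphism in $\gaussex$, and the fibre $\R$ is invariant under translation by any vector and under scaling by any nonzero scalar. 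Since $\gaussex$ has no Frobenius structure, unlike for $\quadrel$ I would not pass to names but work directly with morphisms $m \to n$.

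For definability, given $\phi = (m \xrightarrow{A} k \xleftarrow{P} n, \N(\mu,\Sigma))$ in $\gaussex(m,n)$ I would set $D = \ker P$ and, using the canonical complement $D^\bot$ together with a right-inverse of $P$, present $\phi$ by the data consisting of the subspace $D$, an affine map $\R^m \to D^\bot$ (a matrix $A_0$ with $\im A_0 \subseteq D^\bot$ and a vector $\mu_0 \in D^\bot$), and a positive semidefinite $\Sigma_0$ with $\im\Sigma_0 \subseteq D^\bot$. Writing $\Sigma_0 = LL^T$ with $L$ lower triangular, the diagram of the shape occurring in \Cref{prop:nf_gaussex} built from $(A_0,\mu_0,D,L)$ is interpreted by $\semEG{\cdot}$ precisely as $\phi$, which establishes definability (and fullness of $\semEG{\cdot}$).

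For completeness I would take $M_1, M_2 \colon m \to n$ in $\GQAfwdex$ with $\semEG{M_1} = \semEG{M_2}$, use \Cref{prop:nf_gaussex} to bring each to normal form with data $(A_i,\mu_i,D_i,L_i)$, and observe that the $\gaussex$-semantics of such a normal form depends only on $(D_i, A_i, \mu_i, L_iL_i^T)$, whose non-$D$ parts all lie in $D_i^\bot$; by the morphism-level version of \Cref{prop:gaussex_semantics_nf} the hypothesis then forces $D_1 = D_2$, $A_1 = A_2$, $\mu_1 = \mu_2$ and $L_1L_1^T = L_2L_2^T$. I would then lift this to an equality of diagrams: since the subspace $D$ enters the normal form only through a $\Bunit$-decorated subdiagram and $\Bunit$ is invariant under $\gl(n)$ (\Cref{prop:gl_invariance}, \Cref{prop:subspace_encoding_uniqueness}), I may assume both normal forms use the same matrix of image $D$; from $L_1L_1^T = L_2L_2^T$ together with \Cref{prop:encoding_uniqueness} (which rests on invariance of $\NN$ under $\orth(n)$, \Cref{prop:orthogonalinv}) I may assume the same $L$; the remaining discrepancy lies entirely in a $\GAAfwd$-subdiagram denoting one and the same affine map, so it vanishes by completeness of $\GAAfwd$ for $\aff$ (\Cref{prop:presentationsGLA}). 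Hence $M_1 = M_2$ in $\GQAfwdex$, and together with soundness and definability this gives the presentation.

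The step I expect to be the main obstacle is the uniqueness of the normal-form data --- proving that a $\gaussex$-morphism pins down $(D, A, \mu, \Sigma)$. This requires threading carefully through the pushout/cospan definition of composition in $\gaussex$ and the choice of right-inverse of $P$, and, crucially, verifying that the nondeterministic component ($\Bunit$ through a matrix of image $D$) is cleanly decoupled from the Gaussian part exactly because $\im A$, $\mu$ and $\im\Sigma$ all sit inside $D^\bot$; it is this decoupling that lets me apply the subspace-encoding uniqueness and the covariance-encoding uniqueness independently, rather than having to prove a single, more entangled uniqueness statement.
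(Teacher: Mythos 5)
Your proposal matches the paper's proof essentially step for step: the same interpretation of $\Bunit$ as the purely nondeterministic state with fibre $\R$ (though the cospan should read $0 \to 0 \xleftarrow{!} 1$ rather than $0 \to 1 \xleftarrow{!} 1$), the same normal form from \Cref{prop:nf_gaussex} for definability, and the same uniqueness-of-normal-form-data argument for completeness. The additional detail you supply on lifting equality of the semantic data $(D,A,\mu,LL^T)$ to a derivable equality of diagrams, via \Cref{prop:subspace_encoding_uniqueness} and \Cref{prop:encoding_uniqueness}, is precisely what the paper's terse ``uniquely determined from the interpretation'' leaves implicit.
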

			An intuitive reading of Theorem~\ref{thm:completenessgaussex} is that the extra generator $\tikzfig{bunit}$ uniformly `outputs' values in $\mathbb{R}$, and thus suffices to model the form of nondeterminism expressed by the component $D$ in the definition of extended Gaussians. 
			
			First, we need to define an interpretation $\semEG{\cdot}$ of the generators of $\GQAfwdex$ in $\gaussex$:
			\begin{eqnarray*}
				\semEG{\tikzfig{bunit}}
				& := &  \left( 0 \xrightarrow{Id} 0 \xleftarrow{!} 1 \ , \ \N \left( \nullm,
				\zeromat \right) \right) \\
				\semEG{\tikzfig{generic-gen} \colon m \to n} & := &  \left( m \xrightarrow{A} n \xleftarrow{Id} n \ , \ \N(b,\Sigma) \right)  \qquad  \\[3pt]
			\end{eqnarray*}
			for any $\GQAfwd$-generator \tikzfig{generic-gen} with $\semG{\tikzfig{generic-gen}} \ = \ x \mapsto Ax + \N(b,\Sigma)$, where $!: 1 \to 0$ is the only linear map $x \mapsto \nullm$. This interpretation extends to a prop morphism $\GQAfwdex \to \gaussex$, which we also write $\semEG{\cdot}$.
			
			We now turn attention to the proof of \Cref{thm:completenessgaussex}.

			\begin{proof}[Proof of \Cref{thm:completenessgaussex}] The proof consists in verifying definability, soundness, and completeness under the prop morphism $\semEG{\cdot}$.
				First we show definability. Let $f = (m \xrightarrow{A} k \xleftarrow{P} n, \N(\mu,\Sigma))$ be a morphism of $\gaussex$, informally representing the map $f(x) = Ax + \N(\mu,\Sigma) + D$ where $D = \ker(P)$. Then $f$ is definable by the term
				\begin{equation} \tikzfig{nf_gaussex} \label{eq:gaussex_normal_form} \end{equation}
				For completeness, we note that the form \eqref{eq:gaussex_normal_form} is not quite a normal for $\gaussex$ semantics, but it becomes one if we insist on the further conditions that
				\[ \mu \in D^\bot, \im(A) \subseteq D^\bot, \im(\Sigma) \subseteq D^\bot \]
				where $D^\bot \subseteq \R^n$ is the orthogonal complement of $D$.
				As detailed in \Cref{prop:nf_gaussex}, we can transform every string diagram of $\GQAfwdex$ into that specific shape. The values of $D,A,\mu,\Sigma$ are then uniquely determined from the interpretation $\semEG{\cdot}$ in $\gaussex$.
			\end{proof}
			
			%It is worth spending a few more words on how $\GQAfwdex$ relates to another important fragment of $\GQA$, namely $\GAA$. All the equations of $\GQAfwdex$ are derivable in $\GAA$, see~\cite{DBLP:phd/hal/Zanasi15}. Also, $\tikzfig{wcomult}$ is definable in $\GQAfwdex$, while the same is not true for $\tikzfig{wcounit}$ and $\tikzfig{bmult}$. Semantically, we may regard string diagrams of $\GQAfwdex$ as linear relations, but they will all be \emph{total}: if $R \subseteq \mathbb{R}^{n+m}$ is one such relation, for all $x \in \mathbb{R}^n$ there exists $y \in \mathbb{R}^m$ such that $(x,y) \in R$. Another interesting observation is that $\GQAfwdex$ characterises the behaviour of $\tikzfig{bunit}$ analogously to how the quadratic fragment captures the behaviour of $\NN$: $\NN$ is invariant under rotation matrices, whereas $\tikzfig{bunit}$ is invariant under all invertible matrices.

			\subsection{Completeness of GQA}
			\label[app]{app:completeness-gqa}

			\subsubsection{Scalars}\label[app]{app:scalars}
			
			In this section we study properties of scalars ($0 \to 0$ morphisms of $\GQA$). We use the following shorthand notation:
			\[\tikzfig{blackcup-nonode} := \tikzfig{blackcup} \qquad \tikzfig{whitecup-nonode} := \tikzfig{whitecup} \]
			
			\begin{proposition}
				For $a,b,c \in [0,\infty)$, we have
				\[ \tikzfig{scalars_equal} \]
				whenever $a^2 + b^2 = c^2$.
			\end{proposition}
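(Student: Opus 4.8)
The plan is to reduce the identity to the rotation-invariance axiom \textsf{RI}, transported to the mirror generator $\coNN$, plus routine manipulations in the affine fragment $\GAA$ (justified by \Cref{prop:presentationsGLA}). Write $s_k$ for the scalar $\tikzfig{smallscalar}$ carrying label $k$. Unfolding its definition (and, if needed, a short $\GQA$ rewriting), $s_k$ equals the constant-$k$ state $0 \to 1$, i.e. $\tikzfig{foot}$ followed by $\scalar{k}$, post-composed with $\coNN \colon 1 \to 0$; by the interpretation of the generators this has semantics $\tfrac12 k^2$, matching the intended reading. Using the laws of symmetric monoidal categories together with the $\GAA$-identity that the constant state $(a,b) \colon 0 \to 2$ is the monoidal product of the constant states $a$ and $b$, the left-hand side $s_a \oplus s_b$ rewrites as the constant state $(a,b)$ post-composed with $\coNN \oplus \coNN \colon 2 \to 0$.

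Next I would invoke that $\coNN \oplus \coNN$ is invariant under pre-composition with any rotation matrix $R$, i.e. $R ; (\coNN \oplus \coNN) = \coNN \oplus \coNN$ is derivable in $\GQA$. This is the mirror image of \textsf{RI} (more generally, of \Cref{prop:orthogonalinv}): since $\coNN$ is by definition the transpose $\tikzfig{co-NN-def}$ of $\NN$ along the compact-closed structure of $\GQA$, transposing both sides of a derivable $\NN$-equation is itself a $\GQA$-derivable step, and the transpose of the matrix diagram for $R$ is the one for $R^T$, which ranges over all rotations as well. Now, because $c \ge 0$ and $a^2 + b^2 = c^2$, the vectors $(a,b)$ and $(c,0)$ have equal Euclidean norm, so there is a rotation $R$ with $R\binom{a}{b} = \binom{c}{0}$ --- explicitly $R = \tfrac1c\left(\begin{smallmatrix} a & b \\ -b & a \end{smallmatrix}\right)$ when $c > 0$; if $c = 0$ then $a = b = 0$ and both sides of the claimed equation are $\tikzfig{empty-diag}$, so we may assume $c > 0$. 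Inserting $R$ and using the $\GAA$-fact that the constant state $(a,b)$ followed by the matrix $R$ is the constant state $R\binom{a}{b} = \binom{c}{0}$, the left-hand side becomes the constant state $(c,0) \colon 0 \to 2$ post-composed with $\coNN \oplus \coNN$.

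Finally I would split this composite back along $\oplus$: its first component is exactly $s_c$, and its second component is the constant state $0$ post-composed with $\coNN$, which equals $\tikzfig{empty-diag}$ by the mirror of axiom \textsf{Z} (namely $\tikzfig{wunit}$ followed by $\coNN$ equals $\tikzfig{empty-diag}$, the transpose of $\NN$ followed by $\tikzfig{wcounit}$). Hence the left-hand side equals $s_c \oplus \tikzfig{empty-diag} = s_c$, which is the right-hand side. The only genuinely delicate point is making the passage to the mirrored statements precise --- rotation invariance and axiom \textsf{Z} for $\coNN$ in place of $\NN$ --- which is cleanly packaged as a short lemma stating that transposition along the compact-closed structure of $\GQA$ preserves derivability and sends $\NN$ to $\coNN$; everything else is elementary bookkeeping in $\GAA$.
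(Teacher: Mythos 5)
Your proof is correct and takes essentially the same route as the paper's: the paper likewise picks a rotation $R \in \R^{2\times 2}$ sending $(a,b)$ to $(c,0)$ and concludes by \textsf{RI} and \textsf{Z}, leaving the affine-fragment bookkeeping and the passage to the mirrored generator $\coNN$ implicit. The additional care you take in justifying that transposition along the compact-closed structure preserves derivability, and in treating $c=0$ separately, is sound but goes beyond the level of detail of the paper's argument.
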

			\begin{proof}
				The first equation follows from flip invariance of $\NN$ (Proposition~\ref{prop:flip_invariance}). For the second equation, we can find a rotation matrix $R \in \R^{2 \times 2}$ which sends the vector $(a,b)$ to $(c,0)$. Then by (RI) and (Z)
				\[ \tikzfig{scalars_equal_proof} \]
			\end{proof}
			
			\begin{proposition}
				The equation
				\[ \tikzfig{scalars_nf} \]
				is derivable in $\GQA$ if and only if $a^2 = b^2$.
			\end{proposition}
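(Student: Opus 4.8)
The plan is to prove the two directions of the biconditional separately: the ``only if'' direction will follow at once from soundness, while the ``if'' direction will use the reflection-invariance of $\NN$.

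For the \emph{necessity} of $a^2=b^2$, I would argue as follows. Both sides of $\tikzfig{scalars_nf}$ are scalars $0\to 0$, so if the equation is derivable in $\GQA$ then, by soundness of $\sem{\cdot}\colon\GQA\to\quadrel$ (Proposition~\ref{prop:soundness-quadrel}), the two sides denote the same element of $[0,\infty]$. Composing the interpretations of $\tikzfig{foot}$, $\scalar{a}$ and $\coNN$ read off from the table of generator semantics, the scalar on the left evaluates to $\inf_{x}\{\iv{x=a}+\tfrac12 x^2\}=\tfrac12 a^2$ and, symmetrically, the one on the right to $\tfrac12 b^2$; hence $\tfrac12 a^2=\tfrac12 b^2$, i.e. $a^2=b^2$.

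For \emph{sufficiency}, assume $a^2=b^2$, so that $b=a$ (trivial) or $b=-a$. In the remaining case I would use that $\NN$ is invariant under the only nontrivial $1\times 1$ orthogonal matrix $-1$, i.e. $\scalar{-1}\,;\NN=\NN$ (flip-invariance, Proposition~\ref{prop:flip_invariance}, or the $n=1$ instance of Proposition~\ref{prop:orthogonalinv}). Transporting this identity along the compact-closed cup used to define $\coNN$ --- legitimate since $\scalar{-1}$ is its own transpose, a fact of linear algebra available in $\GAA$ by Proposition~\ref{prop:presentationsGLA} --- yields $\scalar{-1}\,;\coNN=\coNN$. Precomposing with $\tikzfig{foot}$ and using the multiplication axiom $\scalar{a}\,;\scalar{-1}=\scalar{-a}$ of $\GAA$, the left-hand scalar (parameter $a$) is rewritten to $\tikzfig{foot}\,;\scalar{-a}\,;\coNN$, i.e. the right-hand scalar (parameter $b=-a$), which completes the derivation.

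I expect the only delicate step to be this transfer of flip-invariance from the \emph{state} $\NN$ to the \emph{effect} $\coNN$; it is a routine Frobenius / compact-closed manipulation which, thanks to Proposition~\ref{prop:presentationsGLA}, can be carried out as plain linear algebra, so no genuine obstacle remains. Everything else reduces either to soundness or to an axiom of $\GAA$.
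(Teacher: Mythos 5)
Your proposal is correct and follows essentially the same route as the paper's own (very terse) proof: the ``only if'' direction by soundness, reading off that the two scalars denote $\tfrac12 a^2$ and $\tfrac12 b^2$ in $\quadrel$, and the ``if'' direction from $b=\pm a$ together with flip-invariance of $\NN$ (\Cref{prop:flip_invariance}). The only difference is that you spell out the compact-closed/Frobenius step transferring flip-invariance from the state $\NN$ to the effect $\coNN$, which the paper leaves implicit; that step is indeed routine and valid in $\GAA$.
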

			\begin{proof}
				If $a^2 = b^2$, then $a = \pm b$, and the equation can be derived from flip invariance of $\NN$ (Proposition~\ref{prop:flip_invariance}). In $\catname{QuadRel}$, their denotations are given by $\frac 1 2 a^2, \frac 1 2 b^2$ respectively.
			\end{proof}
			The initialisation principle states that if $X \sim \N(0,1)$ and we condition $X$ to be equal to a constant $c$, the posterior distribution is precisely $c$ up to a scalar.
			\begin{proposition}[Initialisation Principle] \label{prop:init-principle}
				The following schema is derivable for all $c \in [0,\infty)$.
				\begin{equation} \tikzfig{ini} \label{eq:ini} \tag{INI} \end{equation}
				If (NORM) holds, this simplifies further to
				\begin{equation*} \tikzfig{ini_norm} \end{equation*}
			\end{proposition}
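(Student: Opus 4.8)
The plan is to unfold the left-hand side of (INI), split off the resulting $0\to0$ scalar, and identify it with the canonical scalar $\tikzfig{smallscalar}$ of value $\tfrac12 c^2$ using the scalar identities established immediately above; the part that remains is then forced to be $\underline c$ by purely affine (i.e.\ $\GAA$) reasoning.

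First I would rewrite the diagram. Conditioning the wire produced by $\NN$ to equal the constant $c$ while keeping it as an output is, semantically, the affine relation $\kappa_c\colon 1\to1$ with $\sem{\kappa_c}(x,y)=\iv{x=c=y}$. Since this is an affine relation, completeness of $\GAA$ (Proposition~\ref{prop:presentationsGLA}) lets us factor it as $\kappa_c=\varepsilon_c\mathbin{;}\underline c$, where $\varepsilon_c\colon 1\to0$ is the effect $x\mapsto\iv{x=c}$ and $\underline c=\scalar c\circ\tikzfig{foot}$ is the constant-$c$ state. Hence the left-hand side of (INI) is $\NN\mathbin{;}\kappa_c=(\NN\mathbin{;}\varepsilon_c)\oplus\underline c$, using that $\NN\mathbin{;}\varepsilon_c$ is a scalar $0\to0$ and that scalars slide freely past $\oplus$ by the interchange and bimonoid laws. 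This reduces the whole statement to computing the scalar $s_c:=\NN\mathbin{;}\varepsilon_c$.

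The second, and main, step is to prove $s_c=\tikzfig{smallscalar}$ with value $\tfrac12 c^2$. Here I would unfold $\varepsilon_c$ through $\tikzfig{foot}$, a scalar, and $\Wcounit$ (it is `subtract $c$, then condition on $0$'), so that the constant $1$ supplied by $\tikzfig{foot}$ appears as an honest wire; a Givens rotation may then legally act on the plane spanned by the $\NN$-wire and the $\tikzfig{foot}$-wire. The resulting identity is exactly an instance of the scalar lemmas stated just above: bringing $s_c$ into scalar normal form and applying the `$a^2+b^2=c^2$' identity together with flip invariance (Proposition~\ref{prop:flip_invariance}) and \textsf{Z} collapses it to the canonical scalar whose value is $\tfrac12 c^2$. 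Concretely, this is the diagrammatic version of completing the square: the rotation that aligns the constrained line with a coordinate axis turns the conditioning on a point into a conditioning on a subspace that \textsf{D}/\textsf{Z} discard, leaving the bare weight $\tfrac12 c^2$.

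Putting the two steps together gives that the left-hand side of (INI) equals $\underline c\oplus\tikzfig{smallscalar}$ with value $\tfrac12 c^2$, which is its right-hand side. The final simplification is immediate: if the normalisation axiom (NORM) is in force, then $\tikzfig{smallscalar}$ is provably the empty diagram, so the scalar summand vanishes and the left-hand side reduces to $\underline c$. I expect the real obstacle to be step two --- specifically, making `complete the square' into a bona fide application of \textsf{RI}: one must treat the affine offset as a wire, check that the linear constraint co-transforms correctly under the rotation, and verify that the leftover direction is genuinely eliminated by \textsf{Z} rather than merely by soundness. Everything else is routine $\GAA$ bookkeeping licensed by Proposition~\ref{prop:presentationsGLA}.
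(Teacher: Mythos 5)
Your overall decomposition is sound and, modulo presentation, matches the paper's: split the left-hand side into the constant state $\underline c$ tensored with a residual scalar. Where the paper does this by an explicit two-step derivation --- the black Frobenius law to slide the comparison through $\Bcomult$, then (1-dup) and (dup) to duplicate the constant --- you instead observe that the conditioning relation $\{(c,c)\}$ is an affine relation and invoke completeness of $\GAA$ to factor it as $\varepsilon_c;\underline c$. That is a legitimate shortcut which the paper explicitly licenses, so your first step is fine.

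The problem is your second step. Axiom \textsf{RI} rotates a pair of $\NN$ generators and nothing else: it asserts that $(\NN\oplus\NN)$ post-composed with a rotation equals $(\NN\oplus\NN)$. A ``Givens rotation acting on the plane spanned by the $\NN$-wire and the $\tikzfig{foot}$-wire'' is not an instance of it, and the corresponding invariance is semantically false (rotating the pair consisting of a standard normal and the constant $1$ does not return a standard normal and the constant $1$), so no amount of bookkeeping will make that application legal. Fortunately the step is unnecessary: the scalar $s_c=\NN;\varepsilon_c$ you are left with after step one is already the canonical scalar $\tikzfig{smallscalar}$ --- it is related to $\underline c;\coNN$ purely by the black cap/snake identities (the same Frobenius structure used to define $\coNN$ as the mirror of $\NN$), with no quadratic axiom involved. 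Rotation invariance enters only later, in the separate lemmas that compare or combine two such scalars (the $a^2+b^2=c^2$ identity). If you replace your ``complete the square via \textsf{RI}'' step by this observation, your proof closes; the (NORM) clause then follows as you say.
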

			\begin{proof}
				This is given by the derivation
				\[ \tikzfig{ini_proof} \]
				where in the first step we use the black Frobenius algebra structure, and in the second step we use axioms (1-dup) and (dup) to cancel $\tikzfig{bcomult}$.
			\end{proof}
			
			\subsubsection{Elimination Procedure for Conditioning}\label[app]{sec:appendix_elimination}
			
			Let $M : 0 \to n$ be a state in $\GQA$. We give an inductive procedure to remove occurrences of the conditioning effect $\tikzfig{wcounit}$, except possibly in the scalar $\infty$.
			
			\condelimination*
			\begin{proof}
				We proceed by induction over the number $K$ of generators $\tikzfig{wcounit}$ appearing in $M$. First, by the laws of symmetric monoidal categories, (Appendix \ref{app:lawssmc}) it is possible to move all the occurrences of the generators of type $0 \to 1$ ($\NN$, $\Bunit$, and $\tikzfig{foot}$) to the extreme left of any given string diagram of $\GQA$ where they occur, while preserving equality of string diagram in the equational theory. Similarly, we can move all occurrences of $\tikzfig{wcounit}$ to the right. The result of applying this to $M$ is displayed in the first equality below. The remaining string diagram $\underline{A}$ is necessarily in the affine fragment of $\GQA$, its semantics being some matrix $A$. By applying equational reasoning in $\GQA$ (in fact, in its fragment $\GAAfwd$), which is complete for $\aff$~\cite{pbsz}, we can then separate $A$ into blocks, obtaining a diagram for its first row $A_1$, and one for the remaining rows $A'$ as on the right-hand side of the second equality below. Finally, we can break down $A_1$ further into row vectors $\alpha$, $\beta$, and $\gamma$, for each of the columns corresponding to the three incoming wires, as on the right-hand side of the third equality below:
				% The remaining middle part of the morphism lies in the affine fragment and can be arranged into the form
				\[ \tikzfig{nf_gaussrel}\]
				where $N'$ is the diagram in the dotted box above, which contains $K-1$ occurrences of $\tikzfig{wcounit}$.
				
				We now distinguish two cases. If in $\alpha$ there is a coefficient $\alpha_i \neq 0$, we eliminate $\tikzfig{wcounit}$ using equational reasoning in $\GAAfwd$, as follows:
				\[ \tikzfig{rw_alpha}\]
				Note that this step at the same time eliminates one occurrence of the generator $\Bunit$.
				In the remaining case, all $\alpha_i$s are equal to $0$, then by equational reasoning in $\GQA$ we can discard $\alpha$ and the wire going into it. We can simplify the string diagram further, first by repeatedly applying axiom (1-dup) to $\tikzfig{foot}$, then by including the remaining occurrences of $\Bunit$ in $N'$, which yields $N$. This brings to a string diagram in the shape
				\[ \tikzfig{nf_gaussrel_2}\]
				with $c \in \R$. We can now find an orthogonal matrix $R \in \orth(n)$ with $\beta R = (b,\ldots,0)$ where $b = ||\beta||$:
				\begin{equation}\label{eq:elimanation-procedure}
					\tikzfig{nf_gaussrel_2} \quad\myeq{RI} \quad \tikzfig{nf_gaussrel_RI}
				\end{equation}
				where we have used the fact that matrices distribute over $\Bcomult$ (\emph{i.e.}, can be copied). Now, since $\beta R= (b,0,\dots,0)$ there are two further cases to consider.
				\begin{itemize}
					\item If $b > 0$, we can simplify the rightmost diagram of~\eqref{eq:elimanation-procedure} further, with the matrix $\beta R = (b,0,\dots,0)$ reducing to the string diagram in the dotted box in the first step below (\emph{cf.}~\eqref{eq:matrixform}). Then, using %rotation invariance (RI), and
					(INI) as defined in Proposition~\ref{prop:init-principle}, we get:
					\[ \tikzfig{nf_gaussrel_3}\]
					where the last dotted frame identifies the diagram $M'$ that we wanted to prove the statement of the theorem.
					\item If $b=0$, the rightmost string diagram of \eqref{eq:elimanation-procedure} instead simplifies as
					\[ \tikzfig{nf_gaussrel_4}\]
					where the scalar is either $0$ or $\infty$ depending on whether $c=0$.
				\end{itemize}
				Then the scalars are normalised separately according to the procedure of Appendix~\ref{app:scalars}.
			\end{proof}
			
			We now give further lemmas needed for the proof of completeness of $\GQA$.
			
			First, recall that a \emph{generalised inverse} of a matrix $A \in \R^{n \times m}$ is a matrix $G \in \R^{m \times n}$ such that $AGA=A$; that is, for all $y \in \im(A)$ we have that $Gy$ is a preimage of $y$. Every matrix has a generalised inverse, but it is generally not unique as there are two degrees of freedom: What shall $G$ do outside of the image $\im(A)$, and which preimage of $y$ should it pick? The \emph{Moore-Penrose pseudoinverse} $A^+$ is a particularly well-known generalized inverse which answers both questions by projecting orthogonally.
			
			The Moore-Penrose pseudoinverse gives an explicit solution to least-squares problem:
			\[ \inf \{ ||x||^2 : Ax = y \} \]
			is attained at $x=A^+y$ when $y \in \im(A)$.
			
			\begin{proposition}\label{prop:sem-scalar}
				Let $f : m \to n$ be a quadratic relation and $c \in \R^n$, then	\[ \sem{\tikzfig{interp_shift_ct}}(x,y) = f(x,y-c) \]
			\end{proposition}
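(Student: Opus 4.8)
The statement is essentially a routine unfolding of the definition of composition in $\quadrel$, so the plan is: (i) read off the diagram $\tikzfig{interp_shift_ct}$ as the sequential composite of $f$ with a ``constant shift'' diagram; (ii) compute the semantics of that shift; (iii) evaluate the resulting constrained infimum.

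\emph{Step (i): identify the diagram.} Write $t_c \colon n \to n$ for the string diagram that adds the constant vector $c = (c_1,\dots,c_n)$ componentwise, built from the generators $\tikzfig{wmult}$, $\tikzfig{foot}$ and the scalars $\scalar{c_i}$ (using the matrix/constant conventions of \Cref{sec:GAA}). By inspection, $\tikzfig{interp_shift_ct}$ is the composite $f \mathbin{;} t_c$ in $\GQA$.

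\emph{Step (ii): semantics of the shift.} Since $\sem{\cdot}\colon\GQA\to\quadrel$ is a prop morphism, and using the interpretations of $\tikzfig{wmult}$, $\tikzfig{foot}$ and $\scalar{\cdot}$ given in \Cref{sec:quadrel}, one obtains $\sem{t_c}(y',y) = \iv{y = y' + c}$ --- i.e. the indicator of the affine relation $\{(y',y)\mid y = y'+c\}$; equivalently this follows from functoriality of the embedding $\affrel\to\quadrel$ applied to the affine map $y'\mapsto y'+c$. A short induction on $n$ takes care of the ``thick wire'' bookkeeping, or one simply appeals to the fact that diagrams in the affine fragment are interpreted exactly as their affine relation.

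\emph{Step (iii): evaluate.} Because $\sem{\cdot}$ preserves sequential composition and composition in $\quadrel$ is by minimisation (\Cref{def:quadrel}, \emph{cf.}~\eqref{eq:weighted_rel_comp}),
\[
\sem{\tikzfig{interp_shift_ct}}(x,y) \;=\; \sem{f\mathbin{;}t_c}(x,y) \;=\; \inff{y'}{\sem{f}(x,y') + \sem{t_c}(y',y)} \;=\; \inff{y'}{f(x,y') + \iv{y = y'+c}}.
\]
The Iverson bracket contributes $0$ precisely when $y' = y - c$ and $\infty$ otherwise, so the infimum is attained at $y' = y - c$ and equals $f(x, y-c)$, as claimed.

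\emph{Main obstacle.} There is essentially none: the only delicate points are correctly reading off $\sem{t_c}$ as the indicator of the shift relation (a component‑wise check on the generators of the affine fragment) and noting that the infimum is genuinely attained, which is immediate since the surviving constraint pins $y'$ to the single value $y-c$. If one wants to avoid even the small induction in Step~(ii), it suffices to observe that $t_c$ lies in the affine fragment and invoke conservativity/faithfulness of the embedding $\affrel\to\quadrel$.
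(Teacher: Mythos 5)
Your proposal is correct and matches the paper's proof in essence: the paper likewise unfolds the diagram as $f$ combined with the constant state $\underline{c}$ and the addition $\Wmult$, computing $\inff{y_1,y_2}{f(x,y_1) + \iv{y_2 = c} + \iv{y=y_1+y_2}} = f(x,y-c)$. Your packaging of the shift as a single affine map $t_c$ with $\sem{t_c}(y',y) = \iv{y = y'+c}$ is just a slight reorganisation of the same one-line computation.
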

			\begin{proof}
				We compute
				\[ \inff {y_1,y_2} { f(x,y_1) + \iv{y_2 = c} + \iv{y=y_1+y_2}} = f(x,y-c) \]
			\end{proof}
			
			\begin{proposition}\label{prop:projection_formula_semantics}
				Let $S,D$ be complementary subspaces, and let $f : 0 \to n$ be such that $f(x)= \infty$ when $x \notin S$. Then
				\[ \sem{\tikzfig{projection_formula_2}}(x) = f(x_S) \]
			\end{proposition}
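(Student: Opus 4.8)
The plan is to compute the left-hand side directly from the definition of composition in $\quadrel$ (Definition~\ref{def:quadrel}) and then invoke the uniqueness of the decomposition $x = x_S + x_D$ over complementary subspaces to see that the resulting constrained infimum is attained at a single point.

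First I would pin down the denotation of the gadget that is post-composed with $f$ in $\tikzfig{projection_formula_2}$. Using Proposition~\ref{prop:projection_formula} together with soundness (Proposition~\ref{prop:soundness-quadrel}) and the fact that the subspace diagram for $D$ is interpreted in $\quadrel$ as the indicator $\iv{\,\cdot\, \in D}$ (Section~\ref{sec:GAA}), this gadget denotes the quadratic relation $(y,x) \mapsto \iv{x - y \in D}$, i.e.\ the nondeterministic addition of an arbitrary element of $D$. Hence, unfolding composition,
\[ \sem{\tikzfig{projection_formula_2}}(x) \;=\; \inff{y\in\R^n}{f(y) + \iv{x - y \in D}} \;=\; \inff{}{f(y) : x - y \in D}. \]

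Second, I would use the hypothesis that $f(y) = \infty$ whenever $y \notin S$ to restrict the infimum to the set $\{\, y \in S : x - y \in D \,\}$, and then show this set is exactly $\{x_S\}$. Indeed, since $S$ and $D$ are complementary there is a unique decomposition $x = x_S + x_D$ with $x_S \in S$ and $x_D \in D$ (Appendix~\ref{app:linearalgebra}). For $y \in S$ we have $x - y = (x_S - y) + x_D$ with $x_D \in D$, so $x - y \in D$ iff $x_S - y \in D$; but $x_S - y \in S$ as well, and $S \cap D = 0$, which forces $y = x_S$. Conversely $y = x_S$ satisfies both constraints. Therefore the infimum is taken over a one-point set and equals $f(x_S)$, as required.

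The computation is elementary; the only step that does any real work is the first one, where Proposition~\ref{prop:projection_formula} is used to read off the denotation of the diagrammatic gadget as the relation $\iv{x - y \in D}$. I do not expect any genuine obstacle beyond this bookkeeping.
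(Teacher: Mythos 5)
Your proof is correct and is essentially the paper's own argument spelled out: both reduce the left-hand side to the constrained infimum $\inff{}{f(y) : x - y \in D}$ and use uniqueness of the decomposition $x = x_S + x_D$ to see the feasible set is the singleton $\{x_S\}$. One small remark: you do not actually need Proposition~\ref{prop:projection_formula} (a diagrammatic identity about $P_S, P_D$) to read off the gadget's denotation — it follows directly from the interpretations of $\Wmult$ and the subspace state $\subspace{D}$ together with the composition formula.
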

			\begin{proof}
				When taking the infimum $\inff {x_1,x_2} {f(x_1)+\iv{x_2 \in D} + \iv{x_1+x_2 = x}}$, the conditions $x_1 \in S, x_2 \in D$ force $x_1 = x_S, x_2 = x_D$ by uniqueness of the decomposition.
			\end{proof}

			\begin{proposition}
				In $\quadrel$, we have the following interpretations for each constant $c \in \R^n$, matrix $A \in \R^{n \times m}$ and affine subspace $W \subseteq \R^n$
				\begin{align*}
					\sem{\const{c}}(x) = \iv{x=c} \qquad
					\sem{\mat{A}}(x,y) = \iv{y=Ax} \qquad
					\sem{\subspace{W}}(x) = \iv{x \in W}.
				\end{align*}
			\end{proposition}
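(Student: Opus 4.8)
The plan is to observe that all three notations $\const{c}$, $\mat{A}$, and $\subspace{W}$ denote string diagrams lying entirely within the $\GAA$-fragment of $\GQA$ --- none of them uses the generator $\NN$ --- and that on this fragment the interpretation $\sem{\cdot}\colon \GQA \to \quadrel$ factors through the prop morphism $1_{(-)}\colon \affrel \to \quadrel$ sending an affine relation $M$ to its indicator function $1_M$. Indeed, inspecting each $\GAA$-generator $g$, its $\quadrel$-semantics is exactly $1_R$, where $R$ is the affine relation denoting $g$ under the isomorphism $\GAA \cong \affrel$ of Proposition~\ref{prop:presentationsGLA}: for instance $\sem{\tikzfig{bcomult}}(x,(x_1,x_2)) = \iv{x=x_1=x_2} = 1_{\{(x,(x,x))\}}$, $\sem{\tikzfig{wmult}}((x_1,x_2),y) = \iv{y=x_1+x_2}$, $\sem{\tikzfig{foot}}((),x)=\iv{x=1}$, and similarly for the others (this is the conservativity of $\sem{\cdot}$ over the $\affrel$-interpretation recorded after the definition of $\sem{\cdot}$ and used in Proposition~\ref{prop:faithfulconservative}). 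Since both $\sem{\cdot}$ restricted to the fragment and $1_{(-)}$ precomposed with $\GAA \cong \affrel$ are prop morphisms agreeing on all generators, they agree on every diagram of the fragment.

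With this in hand, the statement reduces to the corresponding, already-known facts in $\affrel$. By Proposition~\ref{prop:presentationsGLA} together with the standard string-diagrammatic encodings of matrices and subspaces recalled in Section~\ref{sec:GAA} (and in \cite{pbsz,DBLP:phd/hal/Zanasi15}), the diagram $\mat{A}$ denotes the affine relation $\{(x,y) : y = Ax\}$, the diagram $\const{c}$ (which is $\tikzfig{foot}$ followed by the $n\times 1$ matrix $c$) denotes $\{c\}\subseteq\R^n$, and the diagram $\subspace{W}$ denotes $W$ itself. Pushing these three affine relations forward along $1_{(-)}$ immediately yields $\sem{\mat{A}}(x,y) = \iv{y=Ax}$, $\sem{\const{c}}(x) = \iv{x=c}$, and $\sem{\subspace{W}}(x) = \iv{x\in W}$, as claimed.

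A fully self-contained alternative avoids the factorisation argument and computes directly by structural induction: for $\mat{A}$ one inducts on the construction of the matrix diagram from scalars, $\tikzfig{bcomult}$ and $\tikzfig{wmult}$, using the semantic clauses for these generators and the composition formula \eqref{eq:weighted_rel_comp} --- every infimum that arises is over a single-valued linear constraint and is therefore trivial; the $\const{c}$ case then follows by composing $\tikzfig{foot}$ with the $\mat{c}$ case, and the $\subspace{W}$ case by writing $W$ as the image of an affine map and composing $\mat{A}$ with the appropriate ``unconstrained input'' state. I do not expect a genuine obstacle here, since the proposition is essentially an unfolding of notation; the only mild care needed is to make the standard encodings of matrices and affine subspaces fully precise (they were only sketched in Section~\ref{sec:GAA}) and, in the affine case of $\subspace{W}$, to handle the translation component correctly --- but this is routine and is inherited from \cite{pbsz}.
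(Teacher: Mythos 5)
Your argument is correct and is exactly the reasoning the paper leaves implicit: the paper states this proposition without proof, relying on the fact (recorded in Section~\ref{sec:backgroundQuadraticRelations}) that $1_{(-)}\colon \affrel \to \quadrel$ is a prop morphism and that $\sem{\cdot}$ agrees with it on every $\GAA$-generator, so the claim reduces to the standard $\affrel$-denotations of the matrix, constant, and subspace encodings from \cite{pbsz}. Nothing further is needed.
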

			
			\begin{proposition}\label{prop:interp-gauss}
				Let $A \in \R^{n \times m}$, then
				\[ \sem{\tikzfig{interp_gauss}}(y) = \frac 1 2 ||A^+y||^2 + \iv{y \in \im(A)} \]
				where $A^+$ is the Moore-Penrose pseudoinverse of $A$. This expression is furthermore equal to
				\[ \frac 1 2 \langle y,\Omega y \rangle + \iv{y \in \im(A)} \]
				where $\Omega$ is any generalized inverse of $AA^T$.
			\end{proposition}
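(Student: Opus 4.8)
The plan is to compute the $\quadrel$-semantics of the diagram by unfolding definitions, recognise the result as a constrained least-squares infimum, and then rewrite it in the two stated closed forms using elementary properties of the Moore--Penrose pseudoinverse.

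First I would unfold the semantics. By construction, $\tikzfig{interp_gauss}$ is the pushforward of the $m$-fold standard quadratic state along the matrix $\mat{A}$: it is $m$ parallel copies of $\NN$ postcomposed with $\mat{A}$. Since $\sem{\cdot}$ is a prop morphism, with $\sem{\NN}$ equal to $\frac12 x^2$ and $\sem{\mat{A}}(x,y) = \iv{y = Ax}$, the monoidal product and composition laws of $\quadrel$ give
\[ \sem{\tikzfig{interp_gauss}}(y) \;=\; \inff{x}{\tfrac12 ||x||^2 + \iv{y = Ax}} \;=\; \inf\{\tfrac12 ||x||^2 : Ax = y\}. \]
If $y \notin \im(A)$ the feasible set is empty and the infimum is $+\infty$, which is exactly the contribution of the term $\iv{y \in \im(A)}$. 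If $y \in \im(A)$, then by the least-squares property of $A^+$ recalled just above the statement, the constrained minimum of $||x||^2$ over $\{x : Ax = y\}$ is attained at $x = A^+ y$, so the value is $\frac12 ||A^+ y||^2$. Combining the two cases yields the first displayed formula.

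For the second formula I would show that, on $\im(A)$, the quadratic form $\langle y, \Omega y\rangle$ does not depend on the choice of generalised inverse $\Omega$ of $AA^T$ and coincides with $||A^+ y||^2$. Using $\im(A) = \im(AA^T)$, write $y = AA^T z$. On one hand, $A^+ A A^T = A^T$, because $A^+ A$ is the orthogonal projection onto $\im(A^T)$ and $A^T z$ already lies in that subspace; hence $||A^+ y||^2 = ||A^+ A A^T z||^2 = ||A^T z||^2 = z^T A A^T z$. On the other hand, since $AA^T$ is symmetric and $A A^T \Omega A A^T = A A^T$,
\[ \langle y, \Omega y\rangle \;=\; z^T (AA^T)^T \Omega (AA^T) z \;=\; z^T A A^T \Omega A A^T z \;=\; z^T A A^T z, \]
so both expressions equal $z^T A A^T z$; in particular this value is independent of the choice of $z$ with $y = A A^T z$ (well defined because $A A^T$ is symmetric) and of $\Omega$. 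Off $\im(A)$ both sides of the claimed equality are $+\infty$ via the Iverson bracket, so the identity holds on all of $\R^n$.

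I do not expect a genuine obstacle here: the argument is bookkeeping around standard pseudoinverse identities, built on the same least-squares computation already used in the discussion preceding the statement. The only points requiring care are handling the ``$y \in \im(A)$'' and ``$y \notin \im(A)$'' regimes uniformly, and the two well-definedness claims above (that $z^T A A^T z$ depends only on $y$, and that $\langle y, \Omega y\rangle$ is the same for every generalised inverse), both of which follow immediately from symmetry of $A A^T$ together with the defining equation of a generalised inverse.
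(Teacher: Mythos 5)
Your proof is correct and follows essentially the same route as the paper: identify the denotation with the constrained least-squares problem $\inf\{\tfrac12\|x\|^2 : Ax=y\}$, solve it via the Moore--Penrose pseudoinverse, and then identify $\|A^+y\|^2$ with $\langle y,\Omega y\rangle$ on $\im(A)$. The only difference is that you make explicit (via the substitution $y=AA^Tz$) the independence of $\langle y,\Omega y\rangle$ from the choice of generalised inverse, a step the paper's one-line computation $\langle A^+y,A^+y\rangle=\langle y,(A^+)^TA^+y\rangle=\langle y,\Omega y\rangle$ leaves implicit.
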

			\begin{proof}
				The denotation corresponds to the quadratic problem
				\[ \inff x { \frac 1 2 ||x||^2 + \iv{y=Ax} } = \inff {} { \frac 1 2 ||x||^2 : y = Ax } \]
				which is solved using the Moore-Penrose pseudoinverse. We have
				\[ \langle A^+y,A^+y \rangle = \langle y, (A^+)^TA^+y \rangle = \langle y,\Omega y \rangle \]
			\end{proof}
			
						As a Corollary of~\Cref{prop:interp-gauss} and \Cref{prop:sem-scalar}, we can compute the following semantics.
			
			\begin{proposition}\label{prop:gauss_semantics}
				In $\quadrel$, we have
				\[ \sem{\underline{\mathcal N(\mu,\Sigma)}}(x) = \frac 1 2 \langle (x-\mu),\Omega(x-\mu) \rangle + \iv{(x-\mu) \in \im(\Sigma)} \]
				where $\Omega$ is any generalized inverse of $\Sigma$.
			\end{proposition}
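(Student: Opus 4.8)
The plan is to obtain this as a direct corollary of Propositions~\ref{prop:interp-gauss} and~\ref{prop:sem-scalar} by unfolding the definition of the encoding. Recall that the state $\underline{\mathcal N(\mu,\Sigma)} : 0 \to n$ is, by definition of the encoding $\underline{\mathcal N(\mu,\Sigma)}$, the composite of $n$ parallel copies of $\NN$ with the matrix $\mat{L}$ --- where $\Sigma = LL^T$ with $L$ lower triangular --- post-composed with the addition of the constant $\const{\mu}$ (implemented, as usual, with $\Wmult$). So I would first isolate the ``noise part'' $g : 0 \to n$, the composite of the $\NN$'s with $\mat{L}$, and record that $\underline{\mathcal N(\mu,\Sigma)}$ is exactly $g$ followed by adding $\mu$ on the output.

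Then I would apply Proposition~\ref{prop:sem-scalar} with $f = \sem{g}$ and $c = \mu$ to get $\sem{\underline{\mathcal N(\mu,\Sigma)}}(x) = \sem{g}(x-\mu)$, and Proposition~\ref{prop:interp-gauss} instantiated at $A = L$ to get $\sem{g}(y) = \tfrac 1 2 \langle y,\Omega y\rangle + \iv{y \in \im(L)}$ for any generalised inverse $\Omega$ of $LL^T$. Since $LL^T = \Sigma$, here $\Omega$ ranges over exactly the generalised inverses of $\Sigma$, so substituting $y = x - \mu$ yields $\sem{\underline{\mathcal N(\mu,\Sigma)}}(x) = \tfrac 1 2 \langle x-\mu,\Omega(x-\mu)\rangle + \iv{(x-\mu)\in\im(L)}$. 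Note that the right-hand side depends only on $\Sigma$ and $\Omega$, not on the chosen factorisation $\Sigma = LL^T$, consistent with well-definedness of the encoding (Proposition~\ref{prop:encoding_uniqueness}).

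The last step is to rewrite $\im(L)$ as $\im(\Sigma)$, using the elementary identity $\im(LL^T) = \im(L)$ valid for any real matrix $L$: the inclusion $\im(LL^T) \subseteq \im(L)$ is immediate, and conversely $\ker(LL^T) = \ker(L^T)$ because $L^T x = 0 \iff \langle LL^T x, x\rangle = \|L^T x\|^2 = 0$, so the two images coincide as orthogonal complements of these kernels. Since $\Sigma = LL^T$ this gives $\im(\Sigma) = \im(L)$ and the claimed formula follows. I do not expect any genuine obstacle here: the whole argument is bookkeeping on top of the two cited propositions, the only points needing a little care being that the quantifier ``any generalised inverse of $\Sigma$'' is precisely what Proposition~\ref{prop:interp-gauss} delivers once $AA^T = \Sigma$, and the image identity just mentioned. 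As a sanity check, for invertible $\Sigma$ one may take $\Omega = \Sigma^{-1}$, the Iverson bracket is vacuously $0$, and the right-hand side becomes $\tfrac 1 2 (x-\mu)^T\Sigma^{-1}(x-\mu)$ --- the negative log-density of $\mathcal N(\mu,\Sigma)$ up to an additive constant, as expected.
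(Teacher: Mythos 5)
Your proposal is correct and matches the paper's route exactly: the paper derives this proposition as an immediate corollary of Propositions~\ref{prop:interp-gauss} and~\ref{prop:sem-scalar} applied to the encoding $\underline{\mathcal N(\mu,\Sigma)}$ with $A=L$ and $c=\mu$. You in fact supply more detail than the paper, which leaves the identification $\im(L)=\im(LL^T)=\im(\Sigma)$ implicit; your argument for it is the standard one and is correct.
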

			
			\begin{proposition}\label{prop:gaussex_semantics_nf}
				Consider two states in $\GQA$ with subspaces $D_1,D_2 \subseteq \R^n$, $\mu_1 \in D_1^\bot, \mu_2 \in D_2^\bot$, $\im(\Sigma_1) \subseteq D_1^\bot, \im(\Sigma_2) \subseteq D_2^\bot$ such that
				\begin{equation} \sem{\tikzfig{gaussex_states_1}} = \sem{\tikzfig{gaussex_states_2}} \label{eq:nf_function} \end{equation}
				Then $D_1 = D_2$, $\mu_1 = \mu_2$ and $\Sigma_1 = \Sigma_2$.
			\end{proposition}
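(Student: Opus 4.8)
The plan is to compute the semantics of the two states explicitly and then recover the three parameters $D_i,\mu_i,\Sigma_i$ from the resulting function in three successive stages. Write $N_1,N_2$ for the two states (the normal forms of \Cref{prop:nf_gaussex}, as used in the completeness proof of \Cref{thm:presentation_quadrel}). Combining \Cref{prop:gauss_semantics} for the Gaussian block, \Cref{prop:sem-scalar} for the shift by $\mu_i$, and \Cref{prop:projection_formula_semantics} (using $\R^n = D_i \oplus {D_i}^\bot$ together with the hypotheses $\mu_i \in {D_i}^\bot$ and $\im(\Sigma_i) \subseteq {D_i}^\bot$, which force the $D_i$- and ${D_i}^\bot$-components of the argument to be chosen independently), one obtains
\[
	\sem{N_i}(x) \;=\; \tfrac{1}{2}\, \langle P_i x - \mu_i,\ \Omega_i (P_i x - \mu_i) \rangle \;+\; \iv{P_i x - \mu_i \in \im(\Sigma_i)},
\]
where $P_i$ is the orthogonal projection onto ${D_i}^\bot$ and $\Omega_i$ is an arbitrary generalised inverse of $\Sigma_i$. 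The subtlety to bear in mind is that $\Omega_i$ is not unique, so every argument is phrased through the quadratic form $Q_i(z) := \langle z, \Omega_i z\rangle$ on the subspace $\im(\Sigma_i)$, where it is canonical and positive definite: for $z = \Sigma_i w$ one has $Q_i(z) = \langle w, \Sigma_i w\rangle \geq 0$, vanishing only when $z = 0$.

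\textbf{Recovering $D_i$.} I would show $D_i = \{ d \in \R^n : \sem{N_i}(x + d) = \sem{N_i}(x)\text{ for all }x\}$, which exhibits $D_i$ in terms of $\sem{N_i}$ alone and forces $D_1 = D_2 =: D$. The inclusion $\subseteq$ is immediate since $P_i$ annihilates $D_i$. Conversely, a value-preserving translation $d$ must preserve the effective domain $\mathrm{dom}\,\sem{N_i} = \mu_i + \im(\Sigma_i) + D_i$, hence lies in its direction $\im(\Sigma_i)\oplus D_i$; writing $d = e + f$ with $e \in \im(\Sigma_i)$, $f \in D_i$, the constancy of $z \mapsto Q_i(z + e)$ over $z \in \im(\Sigma_i)$ forces $e = 0$ by positive-definiteness of $Q_i$, so $d \in D_i$.

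\textbf{Recovering $\mu_i$ and $\Sigma_i$.} With $D := D_1 = D_2$ and $P := P_{D^\bot}$ shared, positive-definiteness of $Q_i$ shows that $\sem{N_i}$ attains its minimum value $0$ exactly on the affine subspace $\{x : Px = \mu_i\} = \mu_i + D$; this set is determined by $\sem{N_i}$ and $\mu_i$ is its unique point in $D^\bot$, so $\mu_1 = \mu_2 =: \mu$. Shifting by $-\mu$ via \Cref{prop:sem-scalar} we may take $\mu = 0$, and restricting the equal functions to $D^\bot$ yields equality of the centred Gaussian log-densities $x \mapsto \tfrac{1}{2}\langle x, \Omega_i x\rangle + \iv{x \in \im(\Sigma_i)}$; equality of effective domains gives $\im(\Sigma_1) = \im(\Sigma_2) =: V$, and since $\ker(\Sigma_i) = V^\bot$ each $\Sigma_i$ restricts to an invertible endomorphism of $V$ whose inverse quadratic form is exactly $Q_i|_V$, so $Q_1|_V = Q_2|_V$ forces $\Sigma_1|_V = \Sigma_2|_V$ and hence $\Sigma_1 = \Sigma_2$ (both vanish on $V^\bot$). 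One may alternatively invoke that a Gaussian distribution is recovered from its density, hence from its covariance, as recalled in \Cref{sec:gaussprob}.

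The main obstacle is the first stage: establishing that the only value-preserving translations of $\sem{N_i}$ are those in $D_i$, ruling out spurious directions of constancy coming from the quadratic part. This is exactly where the non-degeneracy of the Gaussian quadratic form on $\im(\Sigma_i)$ — the positive-definiteness of $Q_i$ — is indispensable; once $D_i$ is identified, recovering $\mu_i$ and $\Sigma_i$ is routine bookkeeping.
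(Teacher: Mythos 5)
Your proof is correct and follows the same strategy as the paper's own (two-sentence) argument: identify $D_i$ as the largest subspace of translation-invariance of the denoted function, then restrict to $D^\bot$ and recover $\mu_i$ and $\Sigma_i$ from the resulting Gaussian log-density. You supply the details the paper elides — in particular the verification, via positive-definiteness of the quadratic form on $\im(\Sigma_i)$, that no spurious constancy directions arise from a degenerate quadratic part — and these all check out.
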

			
			\begin{proof}
				Let $f : \R^n \to [0,\infty]$ be the quadratic relation denoted by \eqref{eq:nf_function}. Then we can read off the space $D_1 = D_2$ as the largest subspace along which $f$ is constant (\Cref{prop:projection_formula_semantics}). Restricting $f$ along the complement $D^\bot$, we obtain the denotation of the Gaussian distribution in \Cref{prop:gauss_semantics}, from which $\mu$ and $\Sigma$ can be determined.
			\end{proof}
			
			\subsection{Missing Proofs of \Cref{sec:functors}}\label[app]{app:functoriality}

			\begin{proof}[Proof of \Cref{prop:logdensity}]
				By definition, the map $f$ is equal in $\GQA$ to a string diagram of shape \eqref{eq:gauss_normal_form}. \Cref{prop:gaussex_semantics_nf} ensures any string diagrammatic representation of shape \eqref{eq:gauss_normal_form} yields the same semantics. Then its interpretation in $\quadrel$ is computed in detail in \Cref{prop:gauss_semantics}. 
			\end{proof}
			
			\begin{proof}[Proof of \Cref{prop:faithfulconservative}]
				Faithfulness is clear as every Gaussian map can be reconstructed from its log-density. For conservativity, if two $\GQAfwd$-terms $s,t$ are provably equal in $\GQA$, then $\sem{s} = \sem{t}$ in $\quadrel$. By faithfulness, $\semG{s} = \semG{t}$ in $\gauss$, hence $s=t$ is provable in $\GQAfwd$ by completeness.
			\end{proof}

			\subsection{More Details on Ordinary Least Squares in $\GQA$}\label[app]{app:ols}

In this appendix we expand Section~\ref{sec:ols}, showing how to apply our theory of $\GQA$ to the method of ordinary least squares in linear regression. The aim of linear regression is simple: to find a linear model that best fits a set of observations. In its usual vectorial formulation, all available observations of the regressors form the columns of a single matrix $A$ and all observations of the dependent variable form a single vector $y$; then, a linear model with parameters $x$ is expressed concisely as the system $Ax = y$. Typically, for consistency, we also assume that the regressors are linearly independent, \emph{i.e.}, that $A$ is injective. If $A$ is not invertible--as it usually isn't--this system does not admit an exact solution. We can nevertheless look for parameters $x$ such that $A{x}$ best approximates the observed values $y$. Here, `best' is interpreted in such a way that the sum of squared errors, $||y-Ax||^2$, is minimised. This function can be translated directly into the following diagram: 
$$
\sem{\tikzfig{lq-problem}} \;= \; {x \choose y}\mapsto \frac{1}{2}||y-Ax||^2
$$
The formula for the optimal $\hat x$ is well-known to be the familiar\ ordinary least squares (OLS) estimator $\hat x = (A^T A)^{-1}A^T y = A^+y$. Our aim here is to re-derive this expression starting from the diagram above, using the axioms of $\GQA$.
We then need a lemma which, in semantic terms, states that $||Ax+y||^2 = ||Ax||^2+||y||^2$ when $A^Ty=0$, \emph{i.e.}, when $y$ is in the kernel of the transpose of $A$, \emph{i.e.}, is orthogonal to the image of $A$. It is easy to see why this holds: $||Ax+y||^2 = ||Ax||^2+ 2\langle Ax,y\rangle + ||y||^2$ and since $A^Ty=0$, we have $\langle Ax,y\rangle = \langle x,A^Ty\rangle = 0$. We now derive it diagrammatically. Note that this is the only part of our derivation of the OLS estimator which uses the \emph{quadratic} fragment of GQA; the rest will be plain GAA. 
\begin{lemma}
\label{lem:ortho}
The following equality is derivable in $\GQA$: 
$$\tikzfig{ols-lemma}\;=\; \tikzfig{ols-lemma-fin}$$
\end{lemma}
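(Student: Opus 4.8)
The plan is to realise the Pythagorean identity $\|Ax+y\|^2 = \|Ax\|^2 + \|y\|^2$ (valid when $A^T y = 0$) by a change of orthonormal basis adapted to the splitting $\R^n = \im(A)\oplus\im(A)^\perp$, the only genuinely quadratic ingredient being rotation invariance of $\NN$. Concretely, set $S := \im(A)$, let $k = \dim S$, and choose an orthogonal matrix $R\in\orth(n)$ whose first $k$ rows span $S$; then $RA = \binom{B}{0}$ for some $B\in\R^{k\times m}$ and, since orthogonal maps preserve orthogonal complements, $R(S^\perp) = \{0\}\times\R^{n-k}$. All the purely linear-algebraic facts used below --- that $R^{-1}=R^T$ and $\mat{R};\mat{R^T}=\id$, that matrices distribute over the white addition, and that a subspace box for $\{0\}\times\R^{n-k}$ equals $\mat{0}$ in parallel with the identity on $n-k$ wires --- are provable in $\GAA$ by Proposition~\ref{prop:presentationsGLA}, and the paper has already licensed using them freely inside $\GQA$.

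First I would rewrite the left-hand diagram so that the argument fed to the $n$-wire copy of $\NN$ is explicitly $Ax+y$, with the side constraint $y\in S^\perp$ made visible as a subspace box on the $y$-wire. Inserting $\mat{R};\mat{R^T}=\id$ immediately before the $\NN$'s and absorbing $\mat{R}$ into them via Proposition~\ref{prop:orthogonalinv} (which applies to $\NN$ on a thick wire of any width, and equally in the bent-around orientation, since pushing $\mat{R}$ through a cap turns it into the still-orthogonal $\mat{R^T}$), the diagram now feeds $RAx + Ry$ into the $\NN$'s. Pushing $\mat{R}$ through the white addition node replaces the two summands by $RAx = \binom{Bx}{0}$ and $Ry$, while the constraint $y\in S^\perp$ becomes the constraint that $Ry$ has its first $k$ coordinates equal to $0$. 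At this point everything is block-diagonal along $\R^k\oplus\R^{n-k}$: using the $\Wunit$/bimonoid laws of Figure~\ref{fig:gaa-axioms} the single addition node splits into two, the first $k$ wires of $\NN$ receive $Bx$ (and a $0$ from the $y$-branch), the last $n-k$ wires receive the surviving coordinates of $Ry$ (and a $0$ from the $x$-branch). Since $n$ parallel copies of $\NN$ are literally $k$ copies in parallel with $n-k$ copies (thick-wire convention), the diagram is now a monoidal product of an $x$-part and a $y$-part. Running the same moves backwards on each block --- undoing $R$ on the $\R^{n-k}$ block via Proposition~\ref{prop:orthogonalinv} again, and recognising the $k$-wire $\NN$ precomposed with $\mat{B}$ and $\mat{R^T}$ as the $n$-wire $\NN$ precomposed with $\mat{A}$ --- yields exactly the right-hand diagram, i.e. $\|Ax\|^2$ and $\|y\|^2$ scored independently, still under the constraint $y\in S^\perp$.

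I expect the main obstacle to be the bookkeeping of the orthogonality side-condition through the change of coordinates: one must check that, after rotation, the constraint really does decouple into ``the $x$-block sees $0$ on its $\R^{n-k}$ component and the $y$-block sees $0$ on its $\R^{k}$ component'', which is precisely the step where the $\Wunit$ generators and the $\circ\bullet$-bimonoid axioms are needed to break the single white addition node into two independent ones, and where one uses that the subspace box for $\{0\}\times\R^{n-k}$ decomposes as claimed. A secondary, purely presentational point is ensuring Proposition~\ref{prop:orthogonalinv} is invoked in the orientation matching the diagram shape in the statement; if the copies of $\NN$ appear as effects this is dealt with once by bending, as above, or by recording the corresponding statement for the mirror-image generator.
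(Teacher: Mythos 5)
Your proof is correct and follows essentially the same route as the paper's: an orthogonal change of coordinates that carries $\im(A)$ onto the first $k$ coordinate axes, absorbed into $\NN$ via Proposition~\ref{prop:orthogonalinv}, followed by block-decoupling of the sum using the bimonoid/unit laws and axiom \textsf{Z}, and then reassembly of the two blocks. The only cosmetic difference is that the paper extracts its orthogonal matrix from the singular value decomposition $A = UDV^T$ (using that $V$ and the diagonal block $C$ have trivial kernel to simplify the constraint $A^T y = 0$), whereas you pick an orthonormal basis adapted to $\im(A)\oplus\im(A)^\perp$ directly and argue via injectivity of $B^T$.
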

\begin{proof}
We will need a few useful facts from (graphical) linear algebra. First, recall that a matrix $A$ can be written $UDV^T$, where $U,V$ are orthogonal matrices and $D$ is diagonal---this is known as the \emph{singular value decomposition} (SVD) of $A$.  Diagrammatically, this translates to the two equalities on the first line below (and the two on the second line for the transpose of $A$):
\begin{align*}\label{eq:A-svd}
\tikzfig{A} = \tikzfig{A-svd} = \tikzfig{A-svd-1}
\\
\tikzfig{A-t} = \tikzfig{transpose-svd} = \tikzfig{transpose-svd-1}
\end{align*}
where $C$ is the non-zero sub-matrix of $D$ and the remaining zeros are represented by $\Wunit$.
Going back to the initial least-square problem, we have
\begin{align*}
\tikzfig{ols-lemma}\;\myeq{SVD}\; \tikzfig{ols-lemma-1} \;\myeq{($*$)}\; \tikzfig{ols-lemma-2}
\end{align*}
where the second equality ($*$) holds, since $V$ is orthogonal and $C$ is diagonal, and so their kernel is trivial.
Then, since $U^T$ is orthogonal, we can use orthogonal invariance to obtain
\begin{align*}
&\tikzfig{ols-lemma-2}\;\;\;\myeq{Prop. \ref{prop:orthogonalinv}}\; \tikzfig{ols-lemma-3}
\\
\;&\myeq{add}\; \tikzfig{ols-lemma-4}
\; \myeq{copy}\; \tikzfig{ols-lemma-5}
\end{align*}
where the last two equalities (labelled \textsf{add} and \textsf{copy}) use the fact that matrices distribute over $\Wmult$ and $\Bcomult$ (standard facts in Graphical Affine Algebra). Then,
\begin{align*}
&\tikzfig{ols-lemma-5} \;\myeq{coun}\;\tikzfig{ols-lemma-6} \;\myeq{$\circ\bullet$-biun}\; \tikzfig{ols-lemma-7}\\
&\;\myeq{un}\;\tikzfig{ols-lemma-8}
 \;=\; \tikzfig{ols-lemma-9}\\
 &\;\myeq{un}\;\tikzfig{ols-lemma-10} \;\myeq{Z}\;\tikzfig{ols-lemma-11}\;\myeq{RI}\;\tikzfig{ols-lemma-13} 
\end{align*}
Where the last step uses again the fact that $U$ is orthogonal to apply rotational invariance. We now repeat the same steps in the other direction:
\begin{align*}
&\tikzfig{ols-lemma-13}  \;\;\myeq{$\circ\bullet$-biun}\; \tikzfig{ols-lemma-14}
\; \myeq{coun}\; \tikzfig{ols-lemma-15}
\;\myeq{copy}\; \tikzfig{ols-lemma-16}
\\
&\;=\; \tikzfig{ols-lemma-17}\;\myeq{($*$)}\; \tikzfig{ols-lemma-18} 
\;\myeq{SVD}\; \tikzfig{ols-lemma-fin}
\end{align*}
where once again, the equality ($*$) uses the fact that $V$ is orthogonal and $C$ diagonal and thus their kernel is trivial.
\end{proof}

We also record the following lemma, which is just linear algebra, so we appeal to completeness of $\GQA$ for the proof. Intuitively, it translates the fact that $I - AA^+$ is the projection onto the orthogonal space to the range of $A$, \emph{i.e.} onto the the kernel of $A^T$. (Note also that $AA^+$ is the projection onto the range of $A$.)
\begin{lemma}
\label{lem:project}
$\tikzfig{1-minus-A-A-plus}\;\;=\;\; \tikzfig{range-A-projector}$
\end{lemma}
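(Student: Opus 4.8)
The plan is to appeal to completeness, exactly as the surrounding text anticipates. First I would observe that both string diagrams in the statement are built entirely from the affine generators of~\eqref{eq:genGAAfwd}--\eqref{eq:genGAA} (no occurrence of $\NN$), so each denotes an affine relation; concretely, $\sem{\tikzfig{1-minus-A-A-plus}}$ is the graph $\{(v,(I-AA^+)v) : v \in \R^n\}$ of the total linear map $I - AA^+$, and the right-hand diagram is the orthogonal projector onto $(\im A)^\perp = \ker(A^T)$, read off from the encoding conventions of \Cref{sec:GAA} together with \Cref{prop:projection_formula}. Since $\GAA$ presents $\affrel$ (\Cref{prop:presentationsGLA}) and every axiom of $\GAA$ is also an axiom of $\GQA$, it suffices to check that these two affine relations coincide: the resulting equality, being provable in $\GAA$, is then a fortiori provable in $\GQA$.

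Second, I would discharge that check as a standard fact about the Moore--Penrose pseudoinverse: $AA^+$ is the orthogonal projector onto $\im(A)$, hence $I - AA^+$ is idempotent, symmetric, and has image $(\im A)^\perp = \ker(A^T)$; so its graph is exactly the orthogonal projector onto $\ker(A^T)$, which is what the right-hand diagram denotes. This is precisely the kind of linear-algebraic identity that \Cref{prop:presentationsGLA} licenses us to use freely inside diagrammatic reasoning.

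If one instead wants a self-contained diagrammatic derivation, the route is via the singular value decomposition already used in the proof of \Cref{lem:ortho}: write $A = UDV^T$ with $U,V$ orthogonal and $D$ diagonal, so that $A^+ = V D^+ U^T$ and $AA^+ = U(DD^+)U^T$, where $DD^+$ is the diagonal $0/1$ projector onto the nonzero singular directions. Then $I - AA^+ = U(I - DD^+)U^T$, and, using orthogonal invariance (\Cref{prop:orthogonalinv}) and the fact that matrices copy and add in $\GAA$, one cancels the outer $U$ and $U^T$ on both sides and is left with the same diagonal idempotent on each side.

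The main obstacle: there is no substantial one, as the surrounding text flags — the content is entirely linear-algebraic. The only points requiring care are (i) confirming that both sides really lie in the affine fragment, so that completeness of $\GAA$ (rather than the heavier \Cref{thm:presentation_quadrel}) already suffices, and (ii) correctly reading the semantics of the right-hand diagram and matching it to the graph of $I - AA^+$ through the pseudoinverse identities above; in the fully diagrammatic alternative, the bookkeeping of cancelling the orthogonal factors $U,U^T$ around the diagonal projector is the fiddliest step.
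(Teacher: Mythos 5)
Your proof is correct and takes essentially the same route as the paper: the paper likewise dispenses with this lemma by noting it is pure linear algebra --- $I-AA^+$ is the orthogonal projector onto $\ker(A^T)=(\im A)^\perp$ --- and appealing to completeness. Your observation that completeness of $\GAA$ for $\affrel$ already suffices (both diagrams lying in the affine fragment) is a mild sharpening of the paper's appeal to completeness of $\GQA$, but the argument is the same.
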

We can now derive the OLS estimator. First, because $\Wunit$ is the unit of $\Wmult$ and $\Bcounit$ is the co-unit of $\Bcomult$, we have
\begin{align*}
\scalebox{0.8}{\tikzfig{lq-problem}} \; \myeq{coun-un} \;  \scalebox{0.8}{\tikzfig{ols-1}} \; \myeq{del} \; \scalebox{0.8}{\tikzfig{ols-2}}
\end{align*}
where the second equality (\textsf{del}) comes from the fact that any matrix composed with $\Bcounit$ is equal to $\Bcounit$.
Then, since $1+(-1)=0$, we get
\begin{align*}
&\scalebox{0.8}{\tikzfig{ols-2}} \;\myeq{0;plus;1}\; \scalebox{0.8}{\tikzfig{ols-3}}
\;\myeq{copy}\;\scalebox{0.8}{\tikzfig{ols-4}}
\\
&\;\myeq{coas}\;\scalebox{0.8}{\tikzfig{ols-5}} \;\myeq{as}\;\scalebox{0.8}{\tikzfig{ols-6}}
\;\myeq{Lemma~\ref{lem:project}}\;\;\;\scalebox{0.8}{\tikzfig{ols-7}}
\\
&\;\myeq{as}\;\scalebox{0.8}{\tikzfig{ols-8}}
\; = \;\scalebox{0.8}{\tikzfig{ols-9}}\; =\; \scalebox{0.8}{\tikzfig{ols-10}}
 \\
&\;\;\;
 \myeq{Lemma~\ref{lem:ortho}}\quad \scalebox{0.8}{\tikzfig{ols-11}}
 \;\myeq{Lemma~\ref{lem:project}}\;\;\;\scalebox{0.8}{\tikzfig{ols-fin}}
\end{align*}
%Thus, in the end, we have shown that
%\[\tikzfig{lq-problem}\;=\;\tikzfig{ols-fin}\]
The semantics of the last diagram is the quadratic function 
$${x \choose y}\mapsto \frac{1}{2}\left(||AA^+y-Ax||^2 + ||y-AA^+y||^2\right).$$
Its infimum is clearly reached at ${\hat x}=A^+y$ as we wanted: in this case $||AA^+y-A{\hat x}||^2=0$ and the remaining term $||y-AA^+y||^2 = ||y-A{\hat x}||^2$ indicates the distance between $A{\hat x}$ and $y$, \emph{i.e.}, how far we are from having successfully inverted~$A$.

						\subsection{Typing Judgements for $\GPL$}\label[app]{app:typingjudg}
			
			Types $\tau$ are generated from a basic type $\rv$ denoting \emph{real} or \emph{random variable}, pair types and unit type $I$. That is, $\tau ::= \rv \s \unit \s \tau \ast \tau$. Typing judgements are as follows.
			
			\[ \infer{\Gamma, x : \tau, \Gamma' \vdash x : \tau}{}  
			\qquad
			\infer{\Gamma \vdash () : \unit}{}
			\qquad
			\infer{\Gamma \vdash (s,t) : \sigma \ast \tau}{\Gamma \vdash s : \sigma \quad \Gamma \vdash t : \tau}  
			\]
			\[ \infer{\Gamma \vdash s + t : \rv}{\Gamma \vdash s : \rv \quad \Gamma \vdash t : \rv} 
			\qquad
			\infer{\Gamma \vdash \alpha \cdot t : \rv}{\Gamma \vdash t : \rv}
			\qquad
			\infer{\Gamma \vdash \underline{\beta} : \rv}{}
			\]
			\[ \infer{\Gamma \vdash \normal() : \rv}{} 
			\qquad
			\infer{\Gamma \vdash (s \eq t) : \unit}{\Gamma \vdash s : \rv \quad \Gamma \vdash t : \rv}
			\]
			\[
			\infer{\Gamma \vdash s;t : \tau}{\Gamma \vdash s : I \quad \Gamma \vdash t : \tau}
			\]
			\[ \infer{\Gamma \vdash \letin x s t : \tau}{\Gamma \vdash s : \sigma \quad \Gamma, x : \sigma \vdash t : \tau} \qquad
			\infer{\Gamma \vdash \pi_i\,e : \tau_i}{\Gamma \vdash e : \tau_1 \ast \tau_2}
			\]
			
\end{document}